\newcommand{\set}[1]{\left\{ #1 \right\}}
\newcommand{\card}[1]{\left| #1 \right|}
\newcommand{\gjs}{H_{\mbox{\scriptsize{GJS}}}}
\newcommand{\C}{\mathcal{C}}
\newcommand{\A}{\mathcal{A}}
\renewcommand{\S}{\mathcal{S}}
\let\doendproof\endproof
\renewcommand\endproof{~\hfill$\qed$\doendproof}
\newcommand{\defdecisionproblem}[3]{
 \vspace{1mm}
\noindent\fbox{
 \begin{minipage}{0.96\textwidth}
{\textsc{#1}} \\
 {\bf{Input:}} #2 \\
 {\bf{Question:}} #3
 \end{minipage}
 }
 \vspace{1mm}
}
\newcommand{\defparproblem}[4]{
 \vspace{1mm}
\noindent\fbox{
 \begin{minipage}{0.96\textwidth}
 \begin{tabular*}{\textwidth}{@{\extracolsep{\fill}}lr} \textsc{#1} & {\bf{Parameter:}} #3 \\ \end{tabular*}
 {\bf{Input:}} #2 \\
 {\bf{Question:}} #4
 \end{minipage}
 }
 \vspace{1mm}
}
\newtheorem{observation}[theorem]{Observation}
\begin{document}
\title{$1$-Extendability of independent sets
\thanks{Partially
supported by the LABEX MILYON (ANR-10-LABX-0070) of Université de Lyon, within
the program ``Investissements d’Avenir'' (ANR-11-IDEX-0007), and the research grant ANR DIGRAPHS ANR-19-CE48-0013-01, operated by the French
National Research Agency (ANR).}}
%
%\titlerunning{Abbreviated paper title}
% If the paper title is too long for the running head, you can set
% an abbreviated paper title here
%
\author{Pierre Bergé\orcidID{0000-0002-6170-9473} \and
Anthony Busson\orcidID{0000-0002-9445-637X} \and
Carl Feghali\orcidID{0000-0001-6727-7213} \and
Rémi Watrigant\orcidID{0000-0002-6243-5910}}
\authorrunning{P. Bergé et al.}
% First names are abbreviated in the running head.
% If there are more than two authors, 'et al.' is used.
%
\institute{Univ Lyon, CNRS, ENS de Lyon, Université Claude Bernard Lyon 1, LIP UMR5668, France\\
\email{\{pierre.berge, anthony.busson, carl.feghali, remi.watrigant\}@ens-lyon.fr}}
\maketitle              % typeset the header of the contribution
\begin{abstract}
In the 70s, Berge introduced 1-extendable graphs (also called B-graphs), which are graphs where every vertex belongs to a maximum independent set. 
Motivated by an application in the design of wireless networks, we study the computational complexity of \textsc{$1$-extendability}, the problem of deciding whether a graph is 1-extendable. 
We show that, in general, \textsc{$1$-extendability}  cannot be solved in $2^{o(n)}$ time assuming the Exponential Time Hypothesis, where $n$ is the number of vertices of the input graph, and that it remains NP-hard in subcubic planar graphs and in unit disk graphs (which is a natural model for wireless networks). 
Although \textsc{$1$-extendability} seems to be very close to the problem of finding an independent set of maximum size (\textit{a.k.a.} \textsc{Maximum Independent Set}), we show that, interestingly, there exist 1-extendable graphs for which \textsc{Maximum Independent Set} is NP-hard.
Finally, we investigate a parameterized version of \textsc{$1$-extendability}.

\keywords{1-extendable graphs  \and B-graphs \and independent set}
\end{abstract}

\section{Introduction and Motivation}

\subsection{Definitions and Related Work}
%
%CF redefinie dans la section notation. An independent set of a graph is a set of vertices which are pairwise non-adjacent. 
Understanding the structure of independent sets is among the most studied subjects in algorithmics and graph theory, and finding graph classes where a maximum independent set (MIS for short) can be found efficiently is an important theoretical and practical problem. 
In 1970, Plummer~\cite{Pl70} defined the class of well-covered graphs, which are graphs where every independent set which is maximal for inclusion is also an MIS. In other words, they are exactly the graphs for which the greedy algorithm always returns an optimal solution. 
Well-covered graphs were studied mostly from an algorithmic perspective: their recognition was proven coNP-hard~\cite{ChSl93,SaSt92} in general graphs, but polynomial-time solvable for claw-free graphs~\cite{TaTa96}, and perfect graphs of bounded clique number~\cite{DeZi94}.

A related notion, introduced by Berge~\cite{Be81}, is the definition of B-graphs, which are those graphs where every vertex belongs to an MIS. B-graphs were mostly introduced in order to study well-covered graphs~\cite{Rav76,Rav77}. 
Later, the notion of B-graphs was generalized to that of $k$-extendable graphs~\cite{DeZi94}: a graph is $k$-extendable, for a positive integer $k$, if every independent set of size (exactly) $k$ is contained in an MIS. Thus, B-graphs are exactly the 1-extendable graphs and a graph is well-covered if and only if (iff) it is $k$-extendable for every $k \in \{1, 2, \dots, \alpha(G)\}$, where $\alpha(G)$ denotes the size of an MIS of $G$. Dean and Zito~\cite{DeZi94} obtained a number of results on 1-extendable graphs; for instance, they proved that a bipartite graph is 1-extendable iff it admits a perfect matching and, hence, bipartite 1-extendable graphs can be recognized in polynomial time. Recently, certain structural properties of $k$-extendable graphs were stated~\cite{AnSuSw15,AnSuSw16}.

We should note that the notion of $k$-extendability was also studied in the context of maximum matchings~\cite{Pl80,Pl94}. Recently, it was shown that the recognition of (matching) $k$-extendable graphs is co-NP-complete~\cite{HaKo18}.

In the remainder, B-graphs will be called 1-extendable graphs, as it is the terminology used by the most recent papers on the topic. In this article, our objective is to determine the computational complexity of the recognition of 1-extendable graphs. This question is not only motivated by the state of the art described above but also by an application on Wi-Fi networks.

%%%%% REFS %%%%%%

%\begin{itemize}
	%\item Well-covered: defined by Plummer 1970 \cite{Pl70}.
	%\item well-covered coNP-hard Chvátal and Slater \cite{ChSl93} and \cite{SaSt92}
	%\item well-covered in P in claw-free \cite{TaTa96}
	%\item well-covered partitioned into r cliques and l independent sets \cite{AlDaFaKlSaSo18}
	%\item well-covered partitioned into r cliques and l independent sets \cite{AlDaFaKlSaSo18}, another paper on parameterized complexity for recognizing well-covered graphs~\cite{ArCoKl19}.
	%\item Dean and Zito \cite{DeZi94}. well-covered in P for some graph classes (eg perfect graph with bounded clique number). prove it by showing $k$-extendability for small $k$. The paper shows that testing $1$-extendability is P for bipartite graphs. Also characterizations of 1-extendable and 2-extendable graphs.
	%\item B-graphs (=1-extendable): Berge 1987 \cite{Be81}
	%\item Hackfeld and Koster~\cite{HaKo18}: $k$-extendability on matchings is co-NP-complete (for bounded $k$, it is trivially P)
	%\item Recent papers with small properties on $k$-extendable graphs~\cite{AnSuSw15,AnSuSw16}.
%\end{itemize}

%%%%%%%%%%%%

\subsection{CSMA/CA network and $1$-extendability}

Indeed, 1-extendable graphs play an important role in the performance of CSMA/ CA (Carrier Sense Multiple Access / Collision Avoidance) networks. 
CSMA/CA is the mechanism used by the nodes to access the radio channel in many wireless network technologies. 
It aims to prevent collisions, which happens when several nodes transmit at the same time thereby producing harmful interference that may cause transmissions losses. 
Basically, it is a listen-before-talk mechanism where a potential transmitter listens to the radio channel for a certain period of time, and transmits if the channel was sensed as idle during this period. 
%The notion of an idle or busy channel is complex and depends on the transmission parameters (channel width, transmission power, antenna gain, etc.).
%But when the radio environment is stable, it is mainly linked to the ability for a node to detect concurrent transmissions from the other nodes.
%For each node, it implicitly defines a set of neighbors for which the transmissions can be detected. 

Graphs stand as a natural model for CSMA/CA wireless networks. 
Two vertices, {\em i.e.} nodes of the CSMA/CA network, are adjacent if the two corresponding nodes are able to detect the transmissions from each others. 
Transmissions from two vertices can occur in parallel iff they are not adjacent. 
A set of instantaneous transmitters is thus an independent set of the graph.  

This graph, also named \textit{conflict graph} in the literature, is used to evaluate the network performance. 
The performance parameter that is often computed is the \textit{throughput} that offers to each vertex, {\em i.e.} 
the number of bits per second that a vertex is able to send.
%It is then natural to consider a saturated network, where each vertex has always something to send, in order to estimate the maximum throughput that the network can offer.
%It is also called capacity and can be expressed in many ways: mean throughput (the mean of the individual throughput), sum of the throughput, etc.
The throughput of a vertex is strongly correlated to the proportion of time this vertex is transmitting. We denote by $p_v$ this quantity for the vertex $v \in V(G)$.
If we neglect the network protocol headers and transmission errors, the throughput of vertex $v$ may be considered as proportional to $p_v$. 
The first formal work that characterized $p_v$ has been developed in~\cite{csma-capacity}. 
It was shown that, under saturation condition, $p_v$ is given by: 

\begin{equation}
    p_v = \frac{\sum_{S \in \S(G) : v \in S}{\theta^{|S|}}}{\sum_{S \in \S(G)}\theta^{|S|}},
\end{equation}
%where the parameter $\theta = \frac{E[T]}{E[B]}$ is the ratio between the transmission phase (with mean $E[T]$) and the listen phase (with mean $E[B])$. 
where $\theta$ is the ratio between transmission and listen phase durations and $\S(G)$ is the collection of independent sets of $G$.
When $\theta$ tends to infinity, $p_v$ tends to the number of MISs of $G$ containing $v$ ($\#_v\alpha(G)$) divided by the total number $\#\alpha(G)$ of MISs of $G$:

\begin{equation}\label{eq:thr_limit}
    \lim_{\theta \rightarrow +\infty} p_v = \frac{\#_v\alpha(G)}{\#\alpha(G)}
\end{equation}

\begin{figure}[t]
\begin{subfigure}{0.5\textwidth}
  \centering
  % include first image
  \includegraphics[width=.9\linewidth]{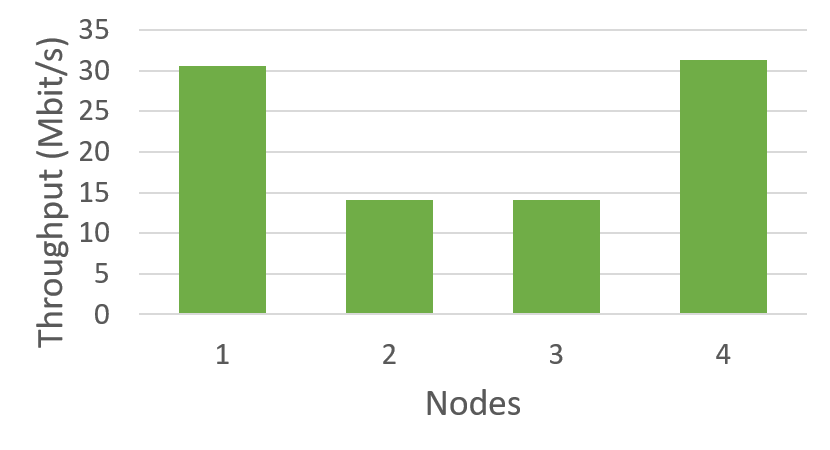}  
  \caption{4 vertices}
  \label{fig:throughput4nodes}
\end{subfigure}
\begin{subfigure}{0.5\textwidth}
  \centering
  % include second image
  \includegraphics[width=.8\linewidth]{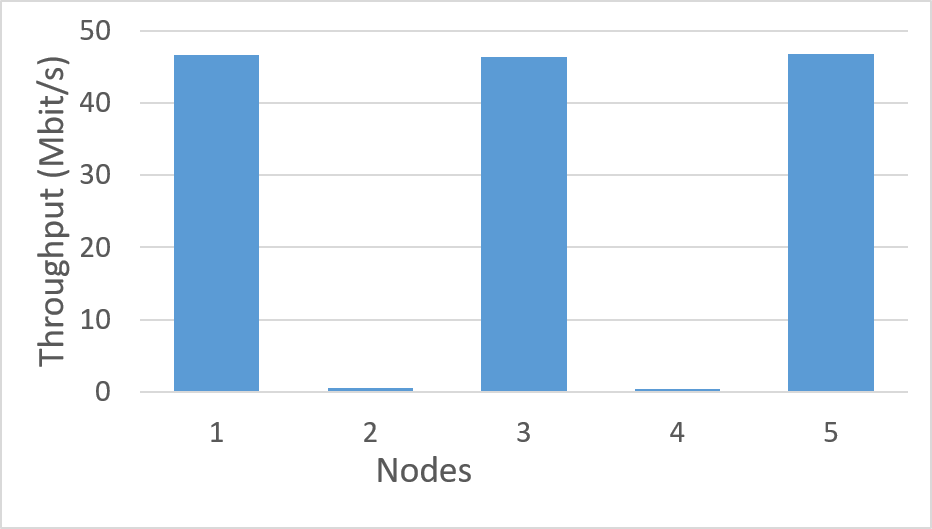}  
  \caption{5 vertices}
  \label{fig:throughput5nodes}
\end{subfigure}
\caption{ \small{Wi-Fi network simulated with ns-3 for paths on 4 and 5 vertices. 
%Each vertex is an AP sending to a station at 1 meter. The conflict graph is a path: vertex 1 detects transmission from vertex 2, vertex 2 from vertices 1 and 3, etc. 
The simulation parameters are: Wi-Fi 6 (IEEE 802.11ax) with a fixed MCS (He MCS=5), packet size=1024 bytes, aggregation is enable with a maximum of $16$ aggregated frame. The traffic is saturated.}}
\label{fig:throughput}
\end{figure}

In practice, the value of $\theta$ tends to be large to ensure an efficient channel use. 
%In Wi-Fi networks for instance, when the network is saturated, vertices aggregates several frames into a single jumboframe to improve the channel efficiency. Typical values of $\theta$ then ranges from $20$ to $50-100$ depending on the transmission parameters. 
In Wi-Fi networks for instance, typical values of $\theta$ then ranges from $20$ to $100$ depending on the transmission parameters.
With such values, a node/vertex that does not belong to any MIS will experience a very low throughput.
%This phenomena has been referred to as the Wi-Fi anomaly in the networking literature~\cite{}. 
In Fig.~\ref{fig:throughput}, we show the throughput obtained in a Wi-Fi network for paths on $4$ and $5$ vertices using the network simulator ns-3~\cite{ns3}. 
The 5-vertex path is not 1-extendable, and we can observe that the two vertices that do not belong to any MIS are in starvation: they admit a very low throughput. 
In the 4-vertex path, there is no starvation as all vertices belong to at least one MIS. 
%Nevertheless, vertices 1 and 4 belongs to two maximum independent set whereas vertices 2 and 3 belong to a single one. 
%The throughput for vertices 1 and 4 is then approximately two times greater in accordance with Equation~\ref{eq:thr_limit}. 

The fact, for each vertex, of belonging or not to an MIS is thus of prior importance to ensure a minimal fairness between the vertices and to avoid starvation. 
CSMA/CA networks where the parameter $\theta$ is large must thus be designed in such a way that the resulting conflict graph is 1-extendable. 

\subsection{Contribution}

Most of the outcomes of this paper concern the complexity of \textsc{$1$-Extendability}, the problem of deciding whether an input graph is 1-extendable. First, we focus on the relationship between \textsc{maximum independent set} and \textsc{$1$-Extendability}. We observe that any polynomial-time algorithm for \textsc{maximum independent set} on some hereditary family of graphs $\mathcal{C}$ provides us with a polynomial-time algorithm for \textsc{$1$-Extendability} on $\mathcal{C}$. Based on this result, we could imagine that, perhaps, \textsc{maximum independent set} and \textsc{$1$-Extendability} are equivalent problems in terms of complexity. However, we show that \textsc{Maximum Independent Set} is NP-hard on a certain subfamily of 1-extendable graphs (Theorem~\ref{th:gap}). This result highlights a gap for the complexity of these two problems. 

We provide a linear reduction which implies, under the Exponential Time Hypothesis (ETH) that \textsc{$1$-Extendability} cannot be solved in time $2^{o(n)}$ on an $n$-vertex input graph (Corollary~\ref{co:eth-hard}). 

Second, we establish the NP-hardness of \textsc{$1$-Extendability} on certain families of graphs. We prove that the problem is NP-hard in planar subcubic graphs (Theorem~\ref{th:hard_planar}) and in unit disk graphs (Theorem~\ref{th:unit_disk}), a natural model for CSMA/CA networks.
%Indeed, unit disk graphs are intersection graphs of equal-sized disks, which can potentially represent wireless access points (nodes) with the same radio range.

Eventually, we focus on a parameterized version of \textsc{$1$-Extendability}, where we ask whether every vertex belongs to an independent set of size at least some parameter $k$. We show that this problem, \textsc{param-$1$-Extendability}, is W[1]-hard (Theorem~\ref{thm:w1}). Nevertheless, it admits a polynomial kernel if restricted to planar graphs or $K_r$-free graphs for fixed $r > 0$ (Corollary~\ref{co:kernel}).

\subsection{Organization of the paper}

Section~\ref{sec:prelim} is dedicated to the notation, definitions and some basic results; in particular, we explore the relationship between the \textsc{Maximum Independent Set} and \textsc{$1$-Extendability} problems. 
%We show that their computational complexity can sometimes differ, by proving that \textsc{Maximum Independent Set} remains NP-hard in $1$-extendable graphs.
In Section~\ref{sec:transfo}, we study three graph transfor- -mations and their impact on the $1$-extendability property.
In Section~\ref{sec:hardness}, we show that \textsc{$1$-Extendability} cannot be solved in time $2^{o(n)}$ on $n$-vertex graphs unless the ETH is false. We also prove that \textsc{$1$-Extendability} remains NP-hard in planar graphs of maximum degree $3$ and in unit disk graphs.
Then, Section~\ref{sec:param} presents the parameterized version \textsc{param-$1$-Extendability} and the results associated with it.
We conclude and give several open questions in Section~\ref{sec:conclusion}.

%(thus, surprisingly, knowing in advance that every vertex is contained in a maximum independent set does not give an advantage to compute the size of a maximum independent set)
%% Pour moi, cette phrase est ailleurs que dans organization, trop précise (dans l'intro elle serait parfaite)

%\section{Preliminaries and First Results} \label{sec:prelim}
\section{Notation and Basic Results} \label{sec:prelim}

%CF contenu dans organization of the paper 
%In this section, we introduce the notation and definitions needed in the remainder. Then, we present our first observations on the \textsc{$1$-Extendability} problem and its connections with the well-known \textsc{Maximum Independent Set} problems.

\subsection{Notation and definitions} \label{subsec:notation}

For a positive integer $k$, we write $[k] = \{1, \dots, k\}$. In this paper, all graphs are simple, unweighted and undirected. We denote by $V(G)$ the vertex set of a graph $G$ and by $E(G)$ its edge set. Edges $(u,v) \in E(G)$ can sometimes be denoted by $uv$ to improve readability. When the identity of the graph considered is clear, we set $n = \card{V(G)}$ and $m = \card{E(G)}$. 
For a vertex $v \in V(G)$, we denote by $N_G(v)$ its set of neighbors (we will sometimes omit the subscript if $G$ is clear from the context). Let $d(v)$ be the degree of $v$, {\em i.e.} $d(v) = \card{N(v)}$.
For a subset $R \subseteq V(G)$, let $G[R]$ be the subgraph of $G$ induced by $R$.  A family of graphs $\C$ is called \textit{hereditary} if, for every graph $G \in \C$, every induced subgraph of $G$ also belongs to $\C$.
An $\ell$-vertex path is denoted by $P_{\ell}$.
A \textit{clique cover} of a graph $G$ is a partition of $V(G)$ into sets $C_1$, $\dots$, $C_q$ such that $G[C_i]$ is a clique for every $i \in [q]$.
 A set of pairwise non-adjacent vertices in a graph is called an \emph{independent set}. A \textit{maximum} independent set (MIS) is an independent set of maximum size. We denote by $\alpha(G)$ the size of an MIS of $G$. The decision problem of finding an independent set of size at least $k\ge 1$ in a graph $G$ is called \textsc{Maximum Independent Set}. A graph $G$ is 1-\textit{extendable}~\cite{Be81} if, for every  $u \in V(G)$, there is an MIS $S$ of $G$ such that $u \in S$. 

The subject of this paper is to investigate the computational complexity of the following decision problem. 

\medskip

\defdecisionproblem{$1$-Extendability}{Graph $G$}{Does every vertex of $G$ belong to an MIS of $G$?}

\medskip

An \textit{embedding} of a graph $G$ is a representation of $G$ in the plane, where vertices are points in the plane and edges are curves which connect their two endpoints. A \textit{plane embedding} of $G$ is an embedding of $G$ where no two edges cross. A graph $G$ is \textit{planar} if it admits a plane embedding.
A \textit{parameterized problem} is a decision problem where an integer (called the parameter) is associated to every instance.
A Fixed-Parameter Tractable (FPT) algorithm is an algorithm deciding whether an instance of a parameterized problem is positive in time $f(k)n^{O(1)}$, where $f$ is a computable function, $n$ is the size of the instance, and $k$ is the parameter of the instance.
The $W$-hierarchy allows to rule out the existence of FPT algorithms for some problems: proving that a parameterized problem is $W[1]$-hard implies that it is unlikely to admit an FPT algorithm.
A \textit{kernel} for a parameterized problem is a polynomial-time algorithm which transforms an instance $x$ with parameter $k$ into an instance $x'$ with parameter $k'$ such that (i) $x$ is positive iff $x'$ is positive (ii) $k' \leqslant k$, and (iii) $|x'| \leqslant f(k)$, where $f$ is a computable function called the \textit{size} of the kernel. If $f$ is a polynomial, we say that it is a \textit{polynomial kernel}.
For more details about parameterized algorithms, we refer the reader to the textbook of Cygan et al.~\cite{CyFoKoLoMaPiPiSa15}.

\subsection{Links between \textsc{$1$-Extendability} and \textsc{Maximum Independent Set}} \label{sec:links}

In this section, we investigate to what extent the problems \textsc{$1$-Extendability} and \textsc{Maximum Independent Set} are close to each other. We show a ``Turing equivalence'' of the two problems in general graphs. More precisely, we prove that solving \textsc{$1$-Extendability} on an input graph $G$ can be done by solving \textsc{Maximum Independent Set} on several induced subgraphs of $G$, while solving \textsc{Maximum Independent Set} on an input graph $G$ can be done by solving \textsc{$1$-Extendability} on several induced supergraphs of $G$.

\paragraph{Solving \textsc{$1$-extendability} using \textsc{Maximum Independent Set}.}

The idea relies on the following lemma, whose straightforward proof is left to the reader. 
\begin{lemma}\label{lem:1ext-to-alpha}
Let $G$ be a graph, and $k$ be a non-negative integer. Then a vertex $v$ of $G$ is contained in an independent set of $G$ of size $k$ iff $G[V(G) \setminus N(v)]$ contains an independent set of size $k$.
\end{lemma}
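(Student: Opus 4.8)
The statement is an ``iff'' about whether a fixed vertex $v$ lies in an independent set of size $k$, so the plan is to prove the two implications separately, though both directions are essentially immediate once one unwinds the definitions. Throughout, write $H = G[V(G) \setminus N(v)]$; note that $v \in V(H)$ since $v \notin N(v)$ (graphs are simple, hence loopless).

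For the forward direction, suppose $S$ is an independent set of $G$ with $v \in S$ and $\card{S} = k$. Since $S$ is independent and contains $v$, no vertex of $S$ other than $v$ can be a neighbour of $v$; that is, $S \cap N(v) = \emptyset$, so $S \subseteq V(G) \setminus N(v) = V(H)$. As $H$ is an induced subgraph of $G$, the set $S$ remains independent in $H$, and it still has size $k$, which gives the desired independent set of $H$.

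For the reverse direction, suppose $S'$ is an independent set of $H$ of size $k$. Since $H$ is an induced subgraph of $G$, $S'$ is also independent in $G$. If $v \in S'$ we are done immediately. Otherwise, observe that every vertex of $S' \subseteq V(H) = V(G) \setminus N(v)$ is non-adjacent to $v$ in $G$, so $S' \cup \{v\}$ is independent in $G$; but then $S' \cup \{v\}$ has size $k+1$, so we may discard any one vertex of $S'$ to obtain an independent set of $G$ of size $k$ containing $v$. (Alternatively, and more cleanly, one shows directly that a \emph{maximum-size} independent set of $H$ always contains $v$, since $v$ is isolated in $H$; but for the statement as phrased the above suffices.)

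There is no real obstacle here: the only point requiring a line of care is the handedness of ``$v$ in the independent set'' in the reverse direction, where a size-$k$ independent set of $H$ need not literally contain $v$ and one must either add $v$ and trim, or invoke that $v$ is an isolated vertex of $H$ and hence lies in some maximum independent set of $H$ of the same size. Combining the two implications yields the equivalence, completing the proof.
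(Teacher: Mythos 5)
Your proof is correct and is precisely the ``straightforward proof left to the reader'' that the paper omits: the forward direction rests on $S \cap N(v) = \emptyset$, and the reverse direction adds $v$ (which is isolated in $G[V(G) \setminus N(v)]$) and trims back to size $k$. The only caveat is the degenerate case $k=0$, where the statement itself fails as literally phrased (no vertex is contained in an independent set of size $0$, while the empty set always exists in the induced subgraph) and your trimming step has nothing to discard, but this is immaterial to how the lemma is used in the paper.
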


This lemma allows \textsc{$1$-extendability} to inherit many positive results from \textsc{Maximum Independent Set}. In particular, it implies that \textsc{$1$-extendability} is polynomial-time solvable in any hereditary class  where \textsc{Maximum Independent Set} is polynomial-time solvable. This is for instance the case for perfect graphs, $P_6$-free graphs~\cite{GrKlPiPi19}, chordal graphs and claw-free graphs. Moreover, it is Fixed-Parameter Tractable (FPT) when parameterized by the tree-width or even the clique-width of the input graph. As we will see later in Section~\ref{sec:param}, this lemma can also be used to transfer more positive FPT results for a parameterized version of \textsc{$1$-extendability}.

\paragraph{Solving \textsc{Maximum Independent Set} using \textsc{$1$-extendability}.}

The converse of Lemma \ref{lem:1ext-to-alpha} does not appear to be as straightforward, and we leave as open whether solving \textsc{$1$-extendability} in a hereditary graph class $\mathcal{C}$ in polynomial-time allows one to solve \textsc{Maximum Independent Set} in $\mathcal{C}$ in polynomial-time. We can show, however, that this is true if the class satisfies much more conditions than just being hereditary. 

  Let $G = (V, E)$ be a graph and $r \leq |V|$ be a non-negative integer. Let $G^+_r$ be the graph obtained from $G$ by adding
  
  \begin{itemize}
  \item an independent set $S$ of size $|V|-r$ to $G$, 
  \item for each vertex $v$ of $G$, a new vertex $\pi_v$ adjacent to $v$ only, and 
  \item all possible edges between $S$ and the set $T = \{\pi_v: v \in V\}$. 
\end{itemize} 

%CF le lemme n'est pas utilisé alors j'ai enlevé

\begin{proposition}
$G^+_{r}$ is $1$-extendable iff $\alpha(G) = r$.
\label{prop:turing_reduction}
\end{proposition}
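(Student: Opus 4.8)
The plan is to first pin down the value of $\alpha(G^+_r)$ by analysing the structure of the independent sets of $G^+_r$, and then to read off both directions of the equivalence from a short case distinction together with a few explicitly exhibited witnessing independent sets. I would treat the main case $r < |V|$ and handle the borderline value $r = |V|$ (where $S = \emptyset$) separately.

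First I would describe the independent sets of $G^+_r$. Write such a set as $I = I_V \cup I_S \cup I_T$ with $I_V \subseteq V$, $I_S \subseteq S$, $I_T \subseteq T$. Three facts are used: $S$ is complete to $T$, so $I$ cannot meet both $S$ and $T$; there are no edges between $V$ and $S$; and the only edges between $V$ and $T$ are the pairs $v\pi_v$. Hence, if $I$ avoids $T$, then $I_V$ is an arbitrary independent set of $G$ and $I_S$ an arbitrary subset of $S$, so $|I| \le \alpha(G) + |S| = \alpha(G) + |V| - r$, with equality when $I_V$ is an MIS of $G$ and $I_S = S$. If instead $I$ avoids $S$, then $I_V$ and $\{v : \pi_v \in I_T\}$ are disjoint subsets of $V$, so $|I| = |I_V| + |I_T| \le |V|$, with equality when $I_T = T$. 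Therefore
\[ \alpha(G^+_r) \;=\; \max\bigl\{\,|V|,\; \alpha(G) + |V| - r\,\bigr\}, \]
which equals $|V|$ precisely when $\alpha(G) \le r$ and exceeds $|V|$ precisely when $\alpha(G) > r$.

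For the implication ``$\alpha(G) = r \Rightarrow G^+_r$ is $1$-extendable'', note that then $\alpha(G^+_r) = |V|$, and I would exhibit an MIS through every vertex: the set $T$ passes through each $\pi_v$; for $v \in V$ the set $\{v\} \cup (T \setminus \{\pi_v\})$ is independent of size $|V|$ and contains $v$; and for $s \in S$, taking an MIS $M$ of $G$, the set $M \cup S$ is independent (no $V$--$S$ edges) of size $r + (|V|-r) = |V|$ and contains $s$. For the converse, suppose $G^+_r$ is $1$-extendable. Some MIS passes through a vertex $\pi_v \in T$, and any independent set containing $\pi_v$ avoids $S$, hence has size at most $|V|$; so $\alpha(G^+_r) \le |V|$, and together with $\alpha(G^+_r) \ge |V|$ this gives $\alpha(G^+_r) = |V|$, whence $\alpha(G) \le r$ by the displayed formula. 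Now pick $s \in S$ (nonempty since $r < |V|$); an MIS $I$ containing $s$ avoids $T$, so $I = I_V \cup I_S$ with $I_V$ independent in $G$ and $|I_V| = |V| - |I_S| \ge |V| - |S| = r$, giving $\alpha(G) \ge r$. Hence $\alpha(G) = r$.

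I expect the only genuinely delicate point to be obtaining the formula for $\alpha(G^+_r)$ correctly — in particular the observation that independent sets avoiding $S$ are capped at $|V|$ no matter how large $\alpha(G)$ is. This cap is exactly what forces $1$-extendability to fail in the two off-target regimes: when $\alpha(G) > r$ no MIS can pass through a vertex of $T$, and when $\alpha(G) < r$ no MIS can pass through a vertex of $S$. Everything else is routine checking that the exhibited sets are independent and of the correct size.
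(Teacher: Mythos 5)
Your proposal is correct and follows essentially the same route as the paper's proof: both compute $\alpha(G^+_r) = \max\{|V|,\ \alpha(G)+|V|-r\}$, use an MIS through a vertex of $S$ to force $\alpha(G)\geq r$ and an MIS through a vertex of $T$ to force $\alpha(G)\leq r$, and exhibit the same three families of witnessing independent sets for the converse. Your write-up is somewhat more explicit about the structure of independent sets of $G^+_r$ and about the implicit assumption $r<|V|$ (which the paper's argument also silently relies on), but the substance is identical.
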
 
\begin{proof}
Let $n=|V(G)|$. First observe that $\alpha(G^+_r) \geq n$, since $T$ is an independent set. More precisely, by construction $\alpha(G^+_r) = \max\{n, n-r+\alpha(G)\}$.

Suppose that $G^+_r$ is $1$-extendable. By definition, every vertex of $S$ belongs to a MIS, and since all vertices of $S$ have the same neighborhood, there must be a MIS $I$ containing all vertices of $S$. Now, since the neighborhood of any vertex of $S$ is $T$, it follows that $G$ must contain an independent set of size $\alpha(G^+_r) - |I| \geq r$.  But if $G$ contains an independent set of size $r+1$, then $\alpha(G^+_r) \geq n+1$. However, for $v \in V(G)$, the non-neighborhood of $\pi_v$ is of size $n-1$, so $\pi_v$ cannot belong to an independent set of size $n+1$, contradicting the $1$-extendability of $G^+_r$. 
 
 Conversely, if $\alpha(G)=r$, then necessarily $\alpha(G^+_r) = n$. Let $J$ be an independent set of size $r$ in $G$.  Observe that:
 \begin{itemize}
 	\item for each vertex $v \in V(G)$, the set $\{v\} \cup \{\pi_u: u \in V(G) \setminus \{v\}\}$ is an independent set of size $n$;
 	\item $T$ is an independent set of size $n$;
 	\item $S \cup J$ is an independent set of size $n$.
 \end{itemize}
 In brief, each vertex of $G^+_r$ is contained in a MIS, which concludes the proof.
%\qed 
\end{proof}

The ETH~\cite{ImPa01} states that no algorithm can decide whether a 3SAT formula on $n$ variables is satisfiable in time $2^{o(n)}$. As 3-SAT, \textsc{maximum independent set} is ETH-hard. Hence, based on Proposition~\ref{prop:turing_reduction}, the same statement holds for \textsc{$1$-Extendability}.

\begin{corollary}
Testing whether an $n$-vertex graph is $1$-extendable cannot be done in time $2^{o(n)}$ unless the ETH is false.
\label{co:eth-hard}
\end{corollary}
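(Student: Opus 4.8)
The plan is to upgrade the Turing-style reduction of Proposition~\ref{prop:turing_reduction} into a single many-one reduction whose blow-up in the number of vertices is only a constant factor, and then to invoke the ETH-hardness of \textsc{Maximum Independent Set}. Recall that, as noted just before the statement, \textsc{Maximum Independent Set} is ETH-hard in the strong sense: it admits no $2^{o(n)}$-time algorithm on $n$-vertex graphs unless the ETH fails (this follows from the classical reduction from $3$-SAT together with the Sparsification Lemma). So it is enough to show that computing $\alpha(G)$ reduces to a linear number of calls to \textsc{$1$-Extendability} on graphs of size $O(|V(G)|)$.

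Concretely, given an $n$-vertex graph $G$, I would, for every integer $r \in \{0, 1, \dots, n\}$, build the graph $G^+_r$ exactly as in the paragraph preceding Proposition~\ref{prop:turing_reduction}. Counting vertices, $G^+_r$ consists of the copy of $V(G)$ ($n$ vertices), the independent set $S$ ($|V|-r = n-r$ vertices), and the set $T = \{\pi_v : v \in V\}$ ($n$ vertices), so $|V(G^+_r)| = 3n - r \le 3n$; moreover $G^+_r$ is computable in polynomial time. By Proposition~\ref{prop:turing_reduction}, $G^+_r$ is $1$-extendable iff $\alpha(G) = r$, and since $\alpha(G) \le n$ lies in the admissible range $r \le |V(G)|$ required by the construction, exactly one of the $n+1$ graphs $G^+_0, \dots, G^+_n$ — namely the one with $r = \alpha(G)$ — is a yes-instance of \textsc{$1$-Extendability}. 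Hence running a hypothetical $2^{o(n)}$-time algorithm for \textsc{$1$-Extendability} on each of these $n+1$ graphs determines $\alpha(G)$ exactly, and in particular decides whether $\alpha(G) \ge k$ for any prescribed $k$.

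The total running time of this procedure is $(n+1) \cdot 2^{o(3n)} = 2^{o(n)}$, contradicting the ETH-hardness of \textsc{Maximum Independent Set}; therefore no $2^{o(n)}$-time algorithm for \textsc{$1$-Extendability} exists unless the ETH is false. There is essentially no serious obstacle in this argument: the only points that need to be checked are the vertex count of $G^+_r$ — so that the reduction is genuinely linear and the sub-exponential bound survives the $n+1$ calls — and the fact that the target value $r = \alpha(G)$ satisfies $r \le |V(G)|$, both of which are immediate.
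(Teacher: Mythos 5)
Your proposal is correct and follows essentially the same route as the paper: the authors likewise derive the corollary from Proposition~\ref{prop:turing_reduction} together with the ETH-hardness of \textsc{Maximum Independent Set}, relying implicitly on the facts you make explicit (that $G^+_r$ has $3n-r\le 3n$ vertices and that trying all $r\in\{0,\dots,n\}$ recovers $\alpha(G)$ with only polynomial overhead). Your write-up simply spells out the details that the paper leaves to the reader.
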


This lower bound is matched by the trivial brute-force algorithm which consists in enumerating all subsets of vertices, and testing whether all MISs cover the entire vertex set.

Another question related to the previous one is whether being 1-extendable helps finding a MIS. The next result suggests that this is unlikely, by showing that \textsc{$1$-extendability} and \textsc{Maximum Independent Set} can sometimes behave very differently from a computational point of view. 

Let $k \geq 1$ be an integer. The \textsc{$k$-Multicolored Independent Set} problem, asks, given a graph $G$ whose vertex set can be partitioned into $k$ parts $C_1, \dots, C_k$ each inducing a clique, whether $G$ contains an independent set $I$ such that $|I \cap C_i| = 1$ for each $i \in \{1, \dots, k\}$. 

%\medskip
%From the previous lemma, if there is a class of graphs $\mathcal{C}$ such that:
%\begin{itemize}
%	\item for every $G \in \mathcal{C}$, every integer $r \leq |V(G)|$, we have $G^+_r \in \mathcal{C}$, and
%	\item for every graph $G \in \mathcal{C}$, \textsc{$1$-Extendability Testing}  can be solved in time $f(G)$.
%\end{itemize}
%Then \textsc{Maximum Independent Set} can be solved in time $\sum_{r=1}^{|V(G)|}f(G^+_r)$. 
%
%\open{Is there a class of graphs $\mathcal{C}$ such that \textsc{$1$-Extendability Testing}  is polynomial whereas \textsc{Maximum Independent Set} is NP-hard ?\\
%Candidate : the class of $1$-extendable graphs
%}
%

%%%%%%%%%%%%%%%%%

%\subsection{Complexity gap between $1$-extendability and independence}\label{sec:gap}

%In the previous section, we have seen a reduction from \textsc{Maximum Independent Set} to \textsc{$1$-Extendability Testing} and vice-versa, which naturally suggest that these two problems have a similar computational complexity behavior in most cases. Here, we show that they can sometimes behave very differently. Namely, we show that in $1$-extendable graphs (where \textsc{$1$-Extendability Testing} is thus obvious), \textsc{Maximum Independent Set} remains hard.

\begin{theorem}
\textsc{Maximum Independent Set} remains $NP$-hard and $W[1]$-hard (parameterized by $k$) in $1$-extendable graphs.
\label{th:gap}
\end{theorem}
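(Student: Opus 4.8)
The plan is to reduce from \textsc{$k$-Multicolored Independent Set}, which is NP-hard and, parameterized by $k$, W[1]-hard. Given an instance $(G,(C_1,\dots,C_k))$ — so $\alpha(G)\le k$ and the question is whether $\alpha(G)=k$ — I would build a graph $G'$ obtained from $G$ by adding auxiliary ``gadget'' vertices, together with a threshold $k'$ (a function of $k$ for the W[1]-hardness, and polynomial in $|V(G)|$ for the NP-hardness), such that: (i) $G'$ is $1$-extendable, whatever the answer to the instance is; and (ii) $\alpha(G')\ge k'$ if and only if $G$ has a multicolored independent set. Because $G'$ is always $1$-extendable, $(G',k')$ is then a legitimate instance of \textsc{Maximum Independent Set} restricted to $1$-extendable graphs, and the NP-hardness and W[1]-hardness transfer. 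Taking the reduction linear in $|V(G)|$ would additionally reprove an ETH lower bound on this restricted version.

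For (ii), I would keep the clique-cover structure of $G$ — so that the ``core'' of $G'$ already has independence number at most $k$ — and add only a small amount of extra, clique-cover-compatible structure producing a fixed gap between the two cases: for instance arranging that $\alpha(G')$ equals $k$ (or a shifted value) exactly when $G$ has a multicolored independent set, and is strictly smaller otherwise; alternatively one can blow up the whole construction so that the gap is large and hence easier to certify. This encoding of \textsc{Multicolored Independent Set} as an independence-number threshold is the routine half.

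The main obstacle is (i): making $G'$ be $1$-extendable without collapsing the gap produced in (ii). Every vertex of $G'$ — not only the original vertices of $G$ but also every auxiliary vertex — must lie in some maximum independent set, and this has to hold simultaneously in the yes- and no-cases, where the maximum independent set has different sizes. The naive devices for forcing a vertex into a maximum independent set (hanging pendant-type structures on it, or giving it a ``private'' large independent set) tend either to make $\alpha(G')$ insensitive to the multicolored question or to create maximal independent sets that are too small in the yes-case. So the gadgets have to be designed so that each vertex has a route into an independent set of the correct size that does \emph{not} pass through a maximum independent set of $G$ itself (which need not be $1$-extendable), and so that this route remains available in both cases. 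Checking $1$-extendability thus amounts to a case analysis over the types of vertices and over whether $G$ has a multicolored independent set; this is the technical heart of the argument and the step most sensitive to the precise choice of gadget.

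Finally I would verify the routine points: $|V(G')|$ is polynomial (and can be kept linear) in $|V(G)|$, the threshold $k'$ is bounded by a function of $k$, and the equivalence in (ii) holds. Combined with the NP-hardness and W[1]-hardness of \textsc{$k$-Multicolored Independent Set}, this yields both assertions of the theorem.
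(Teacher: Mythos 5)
Your high-level plan matches the paper's: reduce from \textsc{$k$-Multicolored Independent Set}, build a supergraph that is $1$-extendable unconditionally, and encode the multicolored question as an independence-number threshold. You also correctly diagnose where the difficulty lies. But the proposal stops exactly at that difficulty: you never exhibit the gadget, and you explicitly defer ``the technical heart of the argument'' to an unspecified case analysis. As it stands, point (i) --- constructing a graph that is $1$-extendable in both the yes- and no-cases while still separating them at the threshold --- is asserted to be achievable but not achieved, so the proof is incomplete. This is not a routine verification one can wave at: as you yourself note, pendant-type tricks either wash out the gap or fail to cover all vertices, so the existence of a suitable gadget is precisely what needs to be demonstrated.

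For comparison, the paper resolves this with a two-copy construction. Take two disjoint copies $G_1, G_2$ of $G$; for each copy $G_i$ and each clique $C_i^j$ add a vertex $\pi_i^j$ adjacent to all of $C_i^j$; and join the two pendant sets $P^1=\{\pi_1^1,\dots,\pi_1^k\}$ and $P^2=\{\pi_2^1,\dots,\pi_2^k\}$ completely. Then $\alpha(H)=k+\alpha(G)$ in both cases, and every vertex has a ``route'' of exactly that size that avoids needing an MIS of its \emph{own} copy: a vertex $x$ in clique $j$ of copy $i$ sits in $\{x\}\cup\{\pi_i^{j'}: j'\neq j\}\cup A_{3-i}$, where $A_{3-i}$ is an MIS of the \emph{other} copy, and each pendant set $P^i$ sits in $P^i\cup A_{3-i}$. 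The complete join between $P^1$ and $P^2$ then forces any independent set of size $2k$ to place $k$ vertices inside one copy, recovering a multicolored solution. The duplication is the missing idea: it is what lets each vertex borrow an MIS from elsewhere without ever requiring $G$ itself to have independence number $k$. You would need to supply this (or an equivalent device) before the argument is a proof.
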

\begin{proof}
We reduce from \textsc{$k$-Multicolored Independent Set} which is well-known to be $W[1]$-hard \cite{CyFoKoLoMaPiPiSa15}.  %are given a graph $G$ whose vertex set is partitioned into $k$ cliques $C_1$, $\dots$, $C_k$, and the goal is to find an independent set of size $k$, which necessarily has non-empty intersection with each $C_i$.

Let $G$ be an instance of \textsc{$k$-Multicolored Independent Set} and let $C_1, \dots, C_k$ be its $k$ cliques. We construct a 1-extendable graph $H$ from $G$ such that $G$ contains an independent set intersecting each $C_i$ iff $H$ contains an independent set of size $2k$.  

To construct $H$, we take two copies $G_1$ and $G_2$ of $G$ and add two new sets of vertices $P^1 = \{\pi_1^1, \dots, \pi_1^k\}$ and $P^2 = \{\pi_2^1, \dots, \pi_2^k\}$. We then add all possible edges between $P^1$ and $P^2$ and, for  $i \in \{1, 2\}$ and $j \in \{1, \dots, k\}$, we make $\pi_i^j$ adjacent to each vertex of $C_i^j$, where  $C_i^j$ denotes the $j$th clique of $G_i$. This completes the construction. 

To see that $H$ is $1$-extendable, let $A_1$ and $A_2$ be maximum independents in $G_1$ and $G_2$, respectively, and thus of size at most $k$. Note by construction that $\alpha(H) = k+\alpha(G)$. Thus, by construction again, we have that
\begin{itemize} 
	\item for each $C_i^j$ and $x \in C_i^j$, the set $\{x\} \cup \{\pi_i^j : j \neq i\} \cup A_{3 - i}$ is independent and of size $\alpha(H)$, and
	\item for each $i \in \{1, 2\}$, the set $P^i \cup A_{3 - i}$ is independent and of size $\alpha(H)$, 
\end{itemize}
and hence $H$ is $1$-extendable as needed. 

Now, suppose $G$ and thus $G_1$ has an independent set $S$ of size $k$.  Then $P^2 \cup S$ is an independent set of size $2k$ in $H$, as claimed. 

Conversely, suppose $H$ contains an independent set $T$ of size $2k$. Let $F_1 = H[V(G_1) \cup P^1]$ and $F_2 = H[V(G_2) \cup P^2]$. Thus, $H = F_1 \cup F_2$. By construction, each $F_i$ has  independence number at most $k$, and thus $T$ intersects each $F_i$ on exactly $k$ vertices. Hence, since there are all possible edges between $P^1$ and $P^2$, $T$ must intersect either $G_1$ or $G_2$ on $k$ vertices, which in turn implies $G$ has an independent set of size $k$, as required. This completes the proof. 
%\qed 
\end{proof}

\section{Generic transformations} \label{sec:transfo}

In this subsection, we present three graph transformations. They are related in some sense to the $1$-extendability property: the first one produces a 1-extendable graph, the second one preserves the $1$-extendability of the input graph and the third one decreases the maximum degree of the input graph and keeps it 1-extendable. These transformations (or similar ideas) will be used later in some reductions.

Given any graph $G$ on $n$ vertices, the transformation~$(T_1)$ returns a graph $G_{(1)}$ with $2n$ vertices which is not only 1-extendable but also admits $G$ as an induced subgraph. The transformation $(T_2)$ consists in $2$-subdividing the edges of the graph. This operation preserves the $1$-extendability: $G$ is 1-extendable iff $G_{(2)}$ is too. Eventually, the transformation $(T_3)$ produces a graph $G_{(3)}$ with maximum degree 3 which is 1-extendable if $G$ is 1-extendable (note that the converse is not necessarily true). Transformations $(T_2)$ and $(T_3)$ are well-known, and provide a useful tool to prove hardness on some restricted graph classes. Furthermore, they preserve the planarity of the input graph.

\medskip

\textbf{Transformation} $(T_1)$. The graph $G_{(1)}$ is obtained from $G$ by adding a pendant vertex $\pi_u$ for any $u \in V(G)$. The vertex $\pi_u$ has degree one and is adjacent to $u$. The graph $G_{(1)}$ has thus $2n$ vertices and $m+n$ edges. One of its MISs is the set of pendant vertices $\set{\pi_u: u \in V(G)}$: we thus have $\alpha(G_{(1)}) = n = \card{V(G_{(1)})}/2$. This provides us with a trivial linear-size vertex-addition scheme to obtain a 1-extendable graph.

\begin{lemma}
For any graph $G$, $G_{(1)}$ is $1$-extendable.
\label{le:t1}
\end{lemma}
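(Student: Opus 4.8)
The plan is to exhibit, for each vertex of $G_{(1)}$, an explicit independent set of size $n = \card{V(G)}$ that contains it, and to observe separately that $\alpha(G_{(1)}) = n$, so that each of these sets is in fact an MIS.

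First I would nail down the value $\alpha(G_{(1)}) = n$. The pairs $\set{u, \pi_u}$ for $u \in V(G)$ form a clique cover of $V(G_{(1)})$ into $n$ edges, so any independent set meets each of them in at most one vertex, giving $\alpha(G_{(1)}) \leqslant n$; together with the independent set $\set{\pi_u : u \in V(G)}$ of size $n$ already noted before the statement, this yields $\alpha(G_{(1)}) = n$. This step is what justifies calling the sets below \emph{maximum} rather than merely maximal.

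Then I would split on the type of vertex. A pendant vertex $\pi_u$ lies in the MIS $\set{\pi_w : w \in V(G)}$. For a vertex $u \in V(G)$, I would take $\set{u} \cup \set{\pi_w : w \in V(G) \setminus \set{u}}$: it has size $n$, and it is independent since the neighbours of $u$ in $G_{(1)}$ are exactly $N_G(u) \cup \set{\pi_u}$, none of which belongs to this set, while the vertices $\pi_w$ are pairwise non-adjacent. Hence $u$ too lies in an MIS, and $G_{(1)}$ is $1$-extendable.

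There is essentially no obstacle; the construction of $(T_1)$ was engineered precisely so that the pendant vertices supply one MIS ``for free'' and swapping $\pi_u$ for $u$ produces an MIS through $u$. The only point requiring a moment's care is the upper bound $\alpha(G_{(1)}) \leqslant n$, for which the edge clique cover argument above suffices.
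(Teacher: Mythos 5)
Your proof is correct and follows essentially the same route as the paper: the clique cover by the $n$ edges $u\pi_u$ gives $\alpha(G_{(1)}) \leqslant n$, the pendant set attains it, and the set $\set{u} \cup \set{\pi_w : w \neq u}$ covers each non-pendant vertex. No gaps.
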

\begin{proof}
The graph $G_{(1)}$ admits a clique cover of size $n = \card{V(G_{(1)})}/2$ consisting of all edges $u\pi_u$. As a consequence, the size of an MIS of $G_{(1)}$ is at most $n$. Furthermore, the set of pendant vertices form an independent set of size $n$.
We prove now that each vertex of $G_{(1)}$ belongs to an independent set of size $n$. We know it is already the case for pendant vertices. Let $u$ be a non-pendant vertex of $G_{(1)}$. We fix the following $n$-sized set: $X_u = \set{u} \cup \set{\pi_v : v\neq u}$. All pendants are pairwise non-adjacent. Moreover, the neighborhood of $\set{\pi_v : v\neq u}$ contains exactly all non-pendant vertices, except $u$. Hence $X_u$ is independent. In brief, every non-pendant vertex $u$ also belongs to an MIS of $G_{(1)}$. %\qed
\end{proof}

\medskip

\textbf{Transformation} $(T_2)$. The graph $G_{(2)}$ is obtained from $G$ by subdividing each of its edges an even number of times, {\em i.e.} each edge becomes an induced $P_{2\ell}$. In fact, $(T_2)$ is a well-known graph transformation which provides, for instance, the proof that \textsc{Maximum Independent set} remains $NP$-hard on graphs forbidding a fixed graph $H$ as an induced subgraph, for any $H$ different from a path or a subdivided claw~\cite{Al82,Po74}.
This transformation preserves in some sense all independent sets of the input graph $G$. 
\begin{observation}[\cite{Al82,Po74}]
Consider any MIS $X'$ of $G_{(2)}$ and pick all its vertices $X \subsetneq X'$ which were already present in $G$, {\em i.e.} which do not belong to the subdivided edges. Then $X$ is an MIS of the input graph $G$. Additionally, the set $X'\backslash X$ contains exactly half of the vertices formed by the subdivisions.
\label{obs:subdivide}
\end{observation}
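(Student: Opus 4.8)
The plan is to reduce everything to the single identity $\alpha(G_{(2)})=\alpha(G)+\tfrac12\card{Q}$, where $Q$ is the set of all vertices created by the subdivisions, and then to read both assertions off its equality case. Fix notation: for an edge $e=uv$ of $G$ let $P_e$ be the corresponding induced path on $2\ell_e$ vertices $v_0=u,v_1,\dots,v_{2\ell_e-1}=v$, and let $Q_e=\set{v_1,\dots,v_{2\ell_e-2}}$ be its internal vertices, so $\card{Q_e}=2\ell_e-2$ and $\card{Q}=\sum_e\card{Q_e}$, which is even.

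For the lower bound I would start from an MIS $S$ of $G$. Since $S$ is independent, at most one endpoint of each edge $e$ lies in $S$; whether none or one does, one can pick $\ell_e-1$ pairwise non-adjacent internal vertices of $P_e$ avoiding that endpoint (namely $\set{v_2,v_4,\dots,v_{2\ell_e-2}}$ if $v_0\in S$, and $\set{v_1,v_3,\dots,v_{2\ell_e-3}}$ otherwise), and such choices on distinct paths never conflict. Taking $S$ together with all these choices yields an independent set of $G_{(2)}$ of size $\alpha(G)+\tfrac12\card{Q}$.

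The crux is the matching upper bound, and this is the step I expect to require the most care. Let $X'$ be any independent set of $G_{(2)}$, put $X=X'\cap V(G)$, and let $F\subseteq E(G)$ be the set of edges with both endpoints in $X$. Partitioning $Q_e$ into the $\ell_e-1$ cliques $\set{v_1,v_2},\set{v_3,v_4},\dots$ gives $\card{X'\cap Q_e}\le\ell_e-1$; and if $e\in F$, then $v_1$ and $v_{2\ell_e-2}$ are neighbours of vertices of $X'$, whence $\card{X'\cap Q_e}\le\ell_e-2$. Summing, $\card{X'}\le\card{X}+\sum_e(\ell_e-1)-\card{F}$. Since a graph on $\card{X}$ vertices with $\card{F}$ edges has an independent set of size at least $\card{X}-\card{F}$, we get $\alpha(G)\ge\card{X}-\card{F}$, and therefore $\card{X'}\le\alpha(G)+\tfrac12\card{Q}$. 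The delicate point here, and the reason the correction term $-\card{F}$ cannot be dropped, is that an MIS of $G_{(2)}$ need not restrict to an independent set of $G$ (already $G=K_2$, $G_{(2)}=P_4$, has an MIS consisting of the two original vertices); the internal vertices lost on a monochromatic edge are exactly what pays for the extra original vertex.

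It remains to treat the equality case. If $X'$ is an MIS of $G_{(2)}$, every inequality above is tight. Should $F\neq\emptyset$, pick $uv\in F$: removing $u$ from $X'$ releases $v_1$ on $P_{uv}$, and one can then re-select $\ell_{uv}-1$ internal vertices of $P_{uv}$ (an independent set of the subpath $v_1,\dots,v_{2\ell_{uv}-3}$) to obtain an MIS with strictly smaller associated set $F$; iterating, we may assume $F=\emptyset$. Then $X$ is independent in $G$ with $\card{X}=\alpha(G)$, i.e.\ $X$ is an MIS of $G$; and tightness of $\card{X'\cap Q_e}\le\ell_e-1$ forces $\card{X'\cap Q_e}=\ell_e-1=\tfrac12\card{Q_e}$ for every edge $e$, so that $X'\setminus X$ consists of exactly half of the subdivision vertices. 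Both conclusions of the observation follow.
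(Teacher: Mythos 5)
Your proof of the identity $\alpha(G_{(2)})=\alpha(G)+\tfrac12\card{Q}$ is correct and complete: the explicit parity choice of internal vertices gives the lower bound, and the clique partition of each $Q_e$ together with the correction term $-\card{F}$ gives the upper bound. (The paper itself offers no proof of this observation --- it is imported from \cite{Al82,Po74} --- so your argument stands on its own.) The gap is in the equality case. The observation quantifies over \emph{any} MIS $X'$ of $G_{(2)}$, but your step ``iterating, we may assume $F=\emptyset$'' replaces $X'$ by a different MIS, so the two conclusions (that $X=X'\cap V(G)$ is an MIS of $G$, and that $X'\setminus X$ is exactly half of the subdivision vertices) are established only for the modified set. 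This is not a harmless normalisation: both conclusions genuinely fail for the original $X'$ whenever $F\neq\emptyset$, and your own parenthetical remark exhibits such a case. For $G=K_2$ with $G_{(2)}=P_4=u\,v_1\,v_2\,v$, the set $X'=\set{u,v}$ is an MIS of $G_{(2)}$, yet $X=\set{u,v}$ is not even independent in $G$, and $X'\setminus X=\emptyset$ contains none of the two subdivision vertices. So you have correctly proved a weaker, true statement and, in passing, refuted the observation as literally written; no amount of additional work can close the remaining distance.

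What survives, and what the paper actually relies on later (in the proof of Lemma~\ref{le:t2}), is a consequence of your tightness analysis that does hold for an arbitrary MIS $X'$: from $\card{X'}=\alpha(G)+\tfrac12\card{Q}$ and $\card{X'\cap Q_e}\le \ell_e-1-[e\in F]$ one gets $\card{X'\cap V(G)}\ge\alpha(G)+\card{F}$, and since $G[X'\cap V(G)]$ has only $\card{F}$ edges, deleting for each edge of $F$ the endpoint different from any prescribed $u\in X'\cap V(G)$ yields an MIS of $G$ containing $u$. If you want a statement you can actually prove in full, either restrict the observation to MISs whose trace on $V(G)$ is independent (your argument then finishes verbatim), or phrase it as ``every MIS of $G_{(2)}$ can be transformed into an MIS $X''$ such that $X''\cap V(G)$ is an MIS of $G$ and $X''\setminus V(G)$ consists of exactly half of the subdivision vertices,'' which is precisely what your $F$-reduction establishes.
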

One can see that $G_{(2)}$ is planar iff $G$ is planar (subdivisions do not influence planar embeddings). Moreover, $(T_2)$ also preserves the $1$-extendability of the input graph.

\begin{lemma}
$G$ is 1-extendable iff $G_{(2)}$ is 1-extendable.
\label{le:t2}
\end{lemma}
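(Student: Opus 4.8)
The plan is to handle the two directions asymmetrically: one is essentially free from Observation~\ref{obs:subdivide}, while the other rests on an elementary extension lemma together with a parity argument. Throughout I would fix notation: write the subdivision path replacing an edge $e = uv$ as $u = w^e_0, w^e_1, \dots, w^e_{2\ell-1} = v$, so its $2\ell-2$ \emph{internal} vertices (the newly created ones) are $w^e_1, \dots, w^e_{2\ell-2}$. If $\ell = 1$ then $G_{(2)} = G$ and there is nothing to prove, so I assume $\ell \geq 2$. Two trivial adjacency facts will be used repeatedly: internal vertices of distinct subdivision paths are pairwise non-adjacent, and the only vertices of $V(G)$ adjacent to an internal vertex of $P_e$ are $u$ (adjacent only to $w^e_1$) and $v$ (adjacent only to $w^e_{2\ell-2}$).

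First I would record that $\alpha(G_{(2)}) = \alpha(G) + m(\ell-1)$: by Observation~\ref{obs:subdivide}, any MIS $X'$ of $G_{(2)}$ meets $V(G)$ in an MIS of $G$ and uses exactly half of the $m(2\ell-2)$ subdivision vertices. From this identity the implication ``$G_{(2)}$ $1$-extendable $\Rightarrow$ $G$ $1$-extendable'' is immediate: for $u \in V(G) \subseteq V(G_{(2)})$, take an MIS $X'$ of $G_{(2)}$ containing $u$; then $X' \cap V(G)$ is an MIS of $G$ by Observation~\ref{obs:subdivide}, and it contains $u$.

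For the converse the key step is an extension lemma I would prove first: every independent set $X$ of $G$ extends to an independent set of $G_{(2)}$ of size $|X| + m(\ell-1)$, and on any single prescribed edge $e_0 = u_0v_0$ the internal vertices used on $P_{e_0}$ may be chosen to be the even-indexed ones $\{w^{e_0}_2, w^{e_0}_4, \dots, w^{e_0}_{2\ell-2}\}$ when $u_0 \in X$, or the odd-indexed ones $\{w^{e_0}_1, w^{e_0}_3, \dots, w^{e_0}_{2\ell-3}\}$ when $v_0 \in X$. Its proof is routine: each edge $e = uv$ has at most one endpoint in $X$, so on $P_e$ I can take the $\ell-1$ independent internal vertices of even index if $u \in X$ and of odd index otherwise; the adjacency facts above guarantee these choices are compatible with each other and with $X$. (Applying this to an MIS of $G$ also re-proves the ``$\geq$'' direction of the displayed identity.)

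For the forward direction I would then check that, assuming $G$ is $1$-extendable, every vertex $z$ of $G_{(2)}$ lies in an independent set of size $\alpha(G) + m(\ell-1)$, hence in an MIS. If $z \in V(G)$, take an MIS of $G$ through $z$ and apply the extension lemma. If $z = w^e_j$ is internal to the path of an edge $e = uv$ (so $1 \leq j \leq 2\ell-2$), argue by parity: if $j$ is even, take an MIS of $G$ through $u$ (which necessarily avoids $v$) and apply the extension lemma with $e_0 = e$, so the internal vertices chosen on $P_e$ are the even-indexed ones and include $w^e_j$; if $j$ is odd then $j \leq 2\ell-3$, so take an MIS of $G$ through $v$ and use the odd-indexed internal vertices on $P_e$, which again include $w^e_j$. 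This last case is the only real obstacle: one has to notice that a prescribed internal vertex can be forced into a maximum independent set of $G_{(2)}$ by choosing \emph{which} endpoint of its edge enters the underlying MIS of $G$, the parity of the vertex's position on the path dictating the choice — which is consistent precisely because $1$-extendability of $G$ supplies an MIS through either endpoint. Everything else reduces to Observation~\ref{obs:subdivide} and the bookkeeping of the extension lemma.
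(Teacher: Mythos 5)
Your proof is correct and follows essentially the same route as the paper's: both classify the subdivision vertices of each edge by the parity of their distance to the endpoints (your even/odd-indexed choice on $P_e$ is exactly the paper's set $\mathcal{R}[u]$ of even-distance vertices) and both rely on Observation~\ref{obs:subdivide} for the converse direction and for pinning down $\alpha(G_{(2)})$. If anything, your extension lemma is slightly more careful than the paper's construction of $\mathcal{R}[X]$, since you specify which half of the internal vertices to take on edges with neither endpoint in $X$.
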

\begin{proof}
We begin with some notation. For every vertex $u\in V(G)$, we denote by $\mathcal{R}[u]$ the set of vertices of $G_{(2)}$ which are (i) part of a subdivided edge incident to $u$ and (ii) at even distance from $u$. We consider $u \in \mathcal{R}[u]$ as it is at distance zero from itself. We claim that $u$ belongs to an MIS of $G$ iff $\mathcal{R}[u]$ is a subset of an MIS of $G_{(2)}$.

If $u$ is isolated, then it naturally belongs to all MISs of $G$. It stays isolated in $G_{(2)}$ and $\mathcal{R}[u] = \set{u}$, so our claim holds.

Assume that $\card{\mathcal{R}[u]}\ge 2$, {\em i.e.} $u$ has at least one neighbor in $G$. On the one hand, let $X$ be an MIS of $G$ containing $u$. Let $\mathcal{R}[X]= \bigcup_{v \in X} \mathcal{R}[v]$. As $X$ does not contain two adjacent vertices of $G$, then no adjacency appears in $\mathcal{R}[X]$. According to Observation~\ref{obs:subdivide}, $\mathcal{R}[X]$ is an MIS of $G_{(2)}$ and $\mathcal{R}[u] \subseteq \mathcal{R}[X]$. On the other hand, let $X'$ be an MIS of $G_{(2)}$ such that $\mathcal{R}[u] \subseteq X'$. According to Observation~\ref{obs:subdivide}, as $u \in \mathcal{R}[u]$, there is an MIS of $G$ containing $u$.

We can now prove that $(T_2)$ preserves $1$-extendability. If $G$ is 1-extendable, then every set $\mathcal{R}[u]$, $u \in V(G)$, is a subset of some MIS of $G_{(2)}$. Observe that $V(G_{(2)}) = \bigcup_{u \in V(G)} \mathcal{R}[u]$, hence $G_{(2)}$ is 1-extendable. Conversely, if $G_{(2)}$ is 1-extendable, then every vertex of the original graph $G$ appears within an MIS of $G_{(2)}$. By Observation~\ref{obs:subdivide}, it belongs to at least one MIS of $G$.
%\qed
\end{proof}

\textbf{Transformation} $(T_3)$. The graph $G_{(3)}$ is obtained from $G$ by replacing each of its vertices by a path in order to decrease the maximum degree of the graph. It is a folklore transformation which also works for other classical problems.

First, we replace each vertex $u \in V(G)$ by an induced odd path $P_u$ of length $\ell = 2\Delta-1$, where $\Delta$ is the maximum degree of $G$. We denote by $u_1,\ldots,u_{\ell}$ the vertices of $P_u$. The vertex set of $G_{(3)}$ is $V(G_{(3)}) = \set{u_1,\ldots,u_{\ell}: u \in V(G)}$. Second, let $Q_u \subseteq P_u$ be the set of vertices in $P_u$ with odd index: $Q_u = \{u_{2i+1}: 0\le i\le \Delta-1\}$. For any $1\le i\le d(u)$, we assign arbitrarily to each vertex $u_{2i+1}$ of $Q_u$ a neighbor $\rho(u_{2i+1}) \in V(G)$ of $u$, so that $\rho$ is bijective. There are two types of edges in $G_{(3)}$:
\begin{itemize}
\item edges of induced paths $P_u$, $u\in V(G)$,
\item edges $u_{2i+1}v_{2j+1}$ when $\rho(u_{2i+1}) = v$ and $\rho(v_{2j+1}) = u$.
\end{itemize} 
In this way, the maximum degree $G_{(3)}$ is at most $3$.

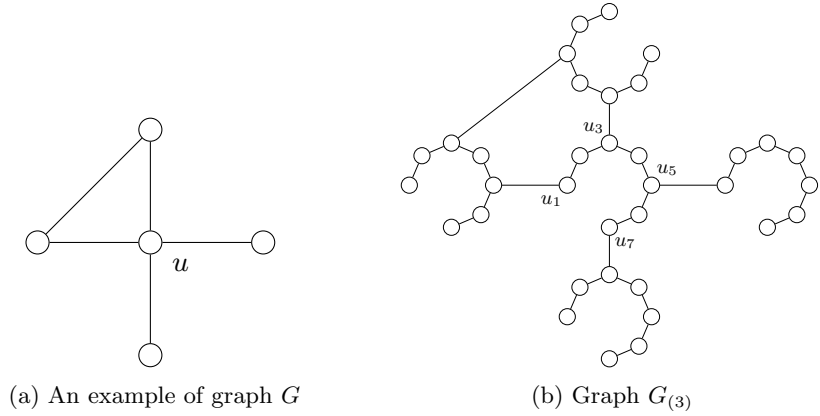
\begin{figure}
\begin{subfigure}[b]{0.49\columnwidth}
\centering
\scalebox{1.0}{\begin{tikzpicture}
	% vertices
	\node[draw,circle] (u) at (4.0,4.0) {};
	\node[draw,circle] (a) at (2.5,4.0) {};
	\node[draw,circle] (b) at (4.0,5.5) {};
        \node[scale = 1.2] (v) at (4.4,3.7) {$u$};
    \node[draw,circle] (c) at (5.5,4.0) {};
	\node[draw,circle] (d) at (4.0,2.5) {};
	
	%edges
	\draw (u)--(a);
	\draw (u)--(b);
	\draw (u)--(c);
	\draw (u)--(d);
	\draw (a)--(b);	
\end{tikzpicture}}
\caption{An example of graph $G$}
\label{subfig:before_T3}
\end{subfigure}
\begin{subfigure}[b]{0.49\columnwidth}
\centering
\scalebox{0.7}{\begin{tikzpicture}

	\def \smval {0.4}
	%vertices
	
	%u
	\node[draw,circle] (u1) at (4.0-2*\smval,4.0) {};
		 \node[scale = 1.2] (v1) at (4.0-2*\smval-0.3,3.7) {$u_1$};
	\node[draw,circle] (u2) at (4.0-1.4*\smval,4.0+1.4*\smval) {};
	\node[draw,circle] (u3) at (4.0,4.0+2*\smval) {};
		\node[scale = 1.2] (v1) at (3.7,4.0+2*\smval+0.3) {$u_3$};
	\node[draw,circle] (u4) at (4.0+1.4*\smval,4.0+1.4*\smval) {};
	\node[draw,circle] (u5) at (4.0+2*\smval,4.0) {};
		\node[scale = 1.2] (v5) at (4.0+2*\smval+0.3,4.3) {$u_5$};
	\node[draw,circle] (u6) at (4.0+1.4*\smval,4.0-1.4*\smval) {};
	\node[draw,circle] (u7) at (4.0,4.0-2*\smval) {};
		\node[scale = 1.2] (v1) at (4.3,4.0-2*\smval-0.3) {$u_7$};

	%a
	\node[draw,circle] (a1) at (1.0-2*\smval,4.0) {};
	\node[draw,circle] (a2) at (1.0-1.4*\smval,4.0+1.4*\smval) {};
	\node[draw,circle] (a3) at (1.0,4.0+2*\smval) {};
	\node[draw,circle] (a4) at (1.0+1.4*\smval,4.0+1.4*\smval) {};
	\node[draw,circle] (a5) at (1.0+2*\smval,4.0) {};
	\node[draw,circle] (a6) at (1.0+1.4*\smval,4.0-1.4*\smval) {};
	\node[draw,circle] (a7) at (1.0,4.0-2*\smval) {};
	
	%b
	\node[draw,circle] (b1) at (4.0-2*\smval,6.5) {};
	\node[draw,circle] (b2) at (4.0-1.4*\smval,6.5+1.4*\smval) {};
	\node[draw,circle] (b3) at (4.0,6.5+2*\smval) {};
	%\node[draw,circle] (b4) at (4.0+1.4*\smval,4.0+1.4*\smval) {};
	\node[draw,circle] (b5) at (4.0+2*\smval,6.5) {};
	\node[draw,circle] (b6) at (4.0+1.4*\smval,6.5-1.4*\smval) {};
	\node[draw,circle] (b7) at (4.0,6.5-2*\smval) {};
	\node[draw,circle] (b8) at (4.0-1.4*\smval,6.5-1.4*\smval) {};
       
    %c
    \node[draw,circle] (c1) at (7.0-2*\smval,4.0) {};
	\node[draw,circle] (c2) at (7.0-1.4*\smval,4.0+1.4*\smval) {};
	\node[draw,circle] (c3) at (7.0,4.0+2*\smval) {};
	\node[draw,circle] (c4) at (7.0+1.4*\smval,4.0+1.4*\smval) {};
	\node[draw,circle] (c5) at (7.0+2*\smval,4.0) {};
	\node[draw,circle] (c6) at (7.0+1.4*\smval,4.0-1.4*\smval) {};
	\node[draw,circle] (c7) at (7.0,4.0-2*\smval) {};
        
	%d
	\node[draw,circle] (d1) at (4.0-2*\smval,1.5) {};
	\node[draw,circle] (d2) at (4.0-1.4*\smval,1.5+1.4*\smval) {};
	\node[draw,circle] (d3) at (4.0,1.5+2*\smval) {};
	\node[draw,circle] (d4) at (4.0+1.4*\smval,1.5+1.4*\smval) {};
	\node[draw,circle] (d5) at (4.0+2*\smval,1.5) {};
	\node[draw,circle] (d6) at (4.0+1.4*\smval,1.5-1.4*\smval) {};
	\node[draw,circle] (d7) at (4.0,1.5-2*\smval) {};
	
	%cycles
	\draw (u1)--(u2);
	\draw (u2)--(u3);
	\draw (u3)--(u4);
	\draw (u4)--(u5);
	\draw (u5)--(u6);
	\draw (u6)--(u7);
	
	\draw (a1)--(a2);
	\draw (a2)--(a3);
	\draw (a3)--(a4);
	\draw (a4)--(a5);
	\draw (a5)--(a6);
	\draw (a6)--(a7);
	
	\draw (b1)--(b2);
	\draw (b2)--(b3);
	\draw (b5)--(b6);
	\draw (b6)--(b7);
	\draw (b7)--(b8);
	\draw (b8)--(b1);
	
	\draw (c1)--(c2);
	\draw (c2)--(c3);
	\draw (c3)--(c4);
	\draw (c4)--(c5);
	\draw (c5)--(c6);
	\draw (c6)--(c7);
	
	\draw (d1)--(d2);
	\draw (d2)--(d3);
	\draw (d3)--(d4);
	\draw (d4)--(d5);
	\draw (d5)--(d6);
	\draw (d6)--(d7);
	
	%edges
	\draw (u1)--(a5);
	\draw (u3)--(b7);
	\draw (u5)--(c1);
	\draw (u7)--(d3);
	\draw (a3)--(b1);	
\end{tikzpicture}}
\caption{Graph $G_{(3)}$}
\label{subfig:after_T3}
\end{subfigure}
\caption{Transformation $(T_3)$ embedded in such a way that planarity holds.}
\label{fig:T3}
\end{figure}

One may observe that $Q_u$ is an independent set of $P_u$ of maximum size $\Delta$. This is the key property which allows us to show that this structure maintains the $1$-extendability of the input graph. 

\begin{lemma}
%The size of an MIS of $G$ is $\alpha$ iff the size of an MIS of $G_{(3)}$ is $n(\Delta-1)+\alpha$.
Let $n = |V(G)|$. We have $\alpha(G_{(3)}) = n(\Delta-1)+\alpha(G)$.
Moreover, if $G$ is 1-extendable, then $G_{(3)}$ is 1-extendable.
\label{le:t3}
\end{lemma}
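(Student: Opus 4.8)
The plan is to base everything on a single structural observation about the gadget paths and then carry out two short arguments, one for the value of $\alpha(G_{(3)})$ and one for $1$-extendability. First I would record the facts I will use repeatedly: each $P_u$ is a copy of the $(2\Delta-1)$-vertex path, so $\alpha(P_u)=\Delta$ and, crucially, its \emph{unique} maximum independent set is $Q_u$ (the odd-indexed vertices); and each edge $uv$ of $G$ contributes exactly one ``crossing'' edge to $G_{(3)}$, both of whose endpoints are odd-indexed, hence lie in $Q_u$ and in $Q_v$ respectively. (I would also assume $G$ has no isolated vertex; see the last paragraph.)

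For the identity, the upper bound would go as follows: given any independent set $I$ of $G_{(3)}$, let $A=\{u\in V(G):|I\cap P_u|=\Delta\}$. For $u\in A$ we must have $I\cap P_u=Q_u$ by uniqueness, so if $u,v\in A$ and $uv\in E(G)$ then the crossing edge of $uv$ has both endpoints in $I$, which is impossible; hence $A$ is independent in $G$ and $|A|\le\alpha(G)$, giving
\[
|I|=\sum_{u\in V(G)}|I\cap P_u|\le \Delta|A|+(\Delta-1)(n-|A|)=(\Delta-1)n+|A|\le (\Delta-1)n+\alpha(G).
\]
For the matching lower bound I would fix a maximum independent set $S$ of $G$, let $E_u=\{u_2,u_4,\dots,u_{2\Delta-2}\}$ be the even-indexed vertices of $P_u$ (an independent set of size $\Delta-1$), and take $I=\bigcup_{u\in S}Q_u\cup\bigcup_{u\notin S}E_u$. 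Inside each $P_u$ these vertices are independent; a crossing edge joins two odd-indexed vertices and, since $S$ is independent, at least one of its two paths $P_v$ has $v\notin S$, so that endpoint is not selected. Thus $I$ is independent and $|I|=\Delta|S|+(\Delta-1)(n-|S|)=(\Delta-1)n+\alpha(G)$, as needed.

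For the ``moreover'' part, I would use that the set $I$ built above from a maximum independent set $S$ of $G$ is itself a maximum independent set of $G_{(3)}$ (its size is exactly $\alpha(G_{(3)})$ by the identity), so it is enough to show that every vertex $x=u_t$ of $G_{(3)}$ lies in such an $I$ for a good choice of $S$. If $t$ is odd then $x\in Q_u$, and I would pick, by $1$-extendability of $G$, a maximum independent set $S$ with $u\in S$; then $Q_u\subseteq I$ contains $x$. If $t$ is even then $x\in E_u$, and I would instead need a maximum independent set $S$ with $u\notin S$, since then $E_u\subseteq I$. To produce such an $S$, I would pass to the connected component $C$ of $u$: it is $1$-extendable because the trace on $C$ of a maximum independent set of $G$ is a maximum independent set of $C$, and it has at least two vertices as $u$ is not isolated; if $u$ belonged to every maximum independent set of $C$, a neighbour of $u$ in $C$ would belong to none, contradicting $1$-extendability of $C$; hence some maximum independent set of $C$ avoids $u$, and completing it on the remaining components gives the required $S$.

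I expect the even-index case to be the only genuine obstacle: it quietly requires that ``$G$ $1$-extendable'' forces ``every vertex of $G$ is avoided by some maximum independent set'', which is true precisely for graphs with no isolated vertex --- for a triangle plus an isolated vertex (which is $1$-extendable) the image under $(T_3)$ contains a path component whose middle vertices lie in no maximum independent set. I would therefore add the harmless hypothesis that $G$ has no isolated vertex (or remark that such vertices can be deleted first); the reduction to components and the ``a neighbour forbids universality'' step are then the parts worth spelling out, everything else being bookkeeping about sizes.
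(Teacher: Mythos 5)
Your proof is correct and follows essentially the same route as the paper: the same set $\bigl(\bigcup_{u\in S}Q_u\bigr)\cup\bigl(\bigcup_{u\notin S}P_u\setminus Q_u\bigr)$ for the lower bound, the same counting of paths meeting the independent set in $\Delta$ vertices for the upper bound (you just make explicit the uniqueness of the maximum independent set of an odd path, which the paper uses implicitly), and the same case split on the parity of the index for $1$-extendability. The one place you diverge is the even-index case: the paper directly takes an MIS of $G$ containing a neighbour of $u$ (which exists by $1$-extendability and immediately forces $u$ out), whereas you argue via connected components that some MIS avoids $u$; both work, and the paper's version is slightly shorter. Your remark about isolated vertices is a genuine catch: if $u$ is isolated in $G$ and $\Delta\ge 2$, then $P_u$ is an isolated odd path in $G_{(3)}$ whose even-indexed vertices lie in no MIS, so the ``moreover'' part as stated is false (e.g.\ a triangle plus an isolated vertex); the paper's proof silently skips this case (its ``not isolated'' clause covers neither $w=u_{2i}$ with $u$ isolated in $G$). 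This does not affect the paper's application, since the graph to which $(T_3)$ is applied has no isolated vertices, but your added hypothesis is the correct fix.
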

\begin{proof}
%We claim that the the size of the MISs of $G_{(3)}$ is $n(\Delta-1)+\alpha(G)$. 
%As a first observation, there is an independent set of size $n(\Delta-1)+\alpha(G)$ in $G_{(3)}$. 
Let $X$ be an an MIS of $G$. We construct the following set in $G_{(3)}$: $$ X' = (\bigcup_{u \in X} Q_u)  \cup (\bigcup_{u\notin X} P_u\backslash Q_u) $$
On the one hand, no vertex of $\bigcup_{u\notin X} P_u\backslash Q_u$ has a neighbor in $X'$. On the other hand, as $X$ is an independent set, two vertices belonging respectively to $Q_u$ and $Q_v$, $u,v \in X$, are not adjacent. Therefore, $X'$ is an independent set of $G_{(3)}$ of size  $n(\Delta-1)+\alpha(G)$. 
Now, assume that there is an independent set $X^*$ of size at least $n(\Delta-1)+\alpha(G)+1$ in $G_{(3)}$. There are necessarily $\alpha(G)+1$ paths $P_u$ of $G_{(3)}$ which contain $\Delta$ vertices of $X^*$. The vertices $u \in V(G)$ satisfying this property must be pairwise non-adjacent, by definition of the edges $u_{2i+1}v_{2j+1}$. This yields a contradiction as we identify an independent set of $G$ of size $\alpha(G)+1$. 

We prove the second part of the statement, by proving that every vertex $w \in G_{(3)}$ belongs to an MIS. 
If $w$ is isolated, it belongs trivially to all MISs. 
If $w = u_{2i+1}$, $1\le i\le \Delta-1$, select an arbitrary MIS $X_u$ of $G$ containing $u$. We know that the set $X_u' = (\bigcup_{v \in X_u} Q_v)  \cup (\bigcup_{v\notin X_u} P_v\backslash Q_v)$ is an MIS of $G_{(3)}$ and $w \in X_u'$. If $w = u_{2i}$ and is not isolated, then we select an arbitrary MIS $Y_u$ of $G$ containing one of its neighbors. The set $Y_u' = (\bigcup_{v \in Y_u} Q_v)  \cup (\bigcup_{v\notin Y_u} P_v\backslash Q_v)$ is an MIS of $G_{(3)}$ and $w \in Y_u'$.
%\qed
\end{proof}

Assume graph $G$ is planar. One can, by defining function $\rho$ in a good way (Fig.~\ref{fig:T3}), produce a graph $G_{(3)}$ which is still planar, according to~\cite{Mo01}. Unfortunately, one can find examples of graphs $G$ such that $G_{(3)}$ is 1-extendable while $G$ is not. In other words, unlike $(T_2)$, transformation $(T_3)$ does not produce an ``equivalent'' graph in terms of $1$-extendability.

\section{Hardness of \textsc{$1$-Extendability} on subcubic planar graphs}\label{sec:hardness}

The main goal of this section is to study the computational hardness of \textsc{$1$-Extendability}. We show that the problem is NP-hard in subcubic planar graphs and unit disk graphs.

\subsection{Properties of the Garey-Johnson-Stockmeyer gadget} \label{subsec:gadget}

We now focus on restricted graph classes. Since our first motivation is the context of wireless networks, we investigate the complexity of the problem in graphs modeling this kind of practical situations. Unit disk graphs is a natural graph class representing the conflict graph of wireless access points. 
As it is often the case when dealing with unit disk graphs, we first tackle the case of planar graphs of bounded degree.
There exists a well-known gadget~\cite{GaJoSt76} which allows, for any graph $G$, to produce a planar graph $G'$ with $O(n)$ vertices which is equivalent to $G$ for the \textsc{Maximum Independent set} problem. Concretely, $G'$ is obtained by replacing each crossing appearing in an embedding of $G$ in the plane by this gadget. In this article, we call it the \textit{GJS-gadget} (for Garey-Johnson-Stockmeyer) and denote it by $\gjs$ (see Fig.~\ref{subfig:gadget_embed}).
Unfortunately, this trick does not work directly for \textsc{$1$-Extendability}. In order to make it work, our idea is to define a first reduction producing a non-planar graph, but where the crossings satisfy some interesting properties. Secondly, we use the previously mentioned gadget on this intermediate graph.
Lastly, we use well-known tricks from the literature in order to reduce the maximum degree of the reduced graph, and to obtain a unit disk graph.

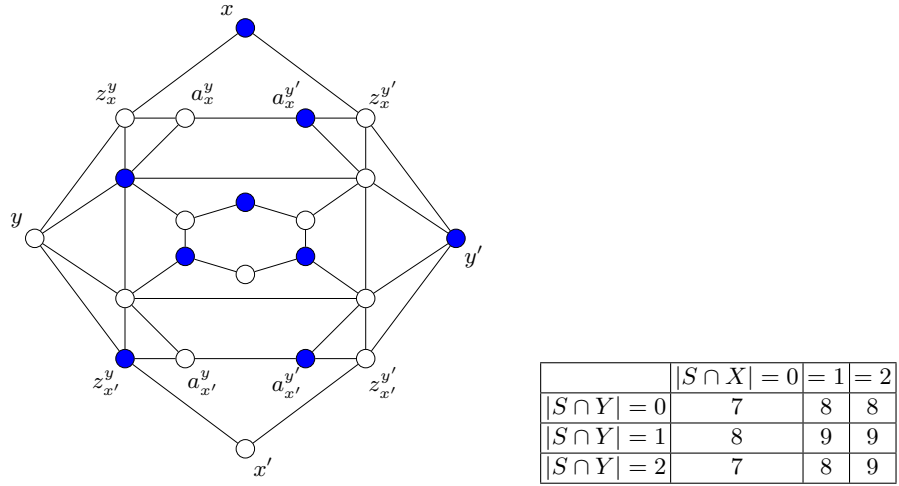
\begin{figure}[h]
\begin{subfigure}[b]{0.59\columnwidth}
\centering
\scalebox{0.8}{\begin{tikzpicture}[rotate=-90]
	% vertices
	\node[draw,circle,fill = blue] (x) at (0.5,4.0) {};
		\node[scale = 1.2] (xx) at (0.2,3.7) {$x$};
	\node[draw,circle] (x') at (7.5,4.0) {};
		\node[scale = 1.2] (xx') at (7.8,4.3) {$x'$};
	\node[draw,circle] (y) at (4.0,0.5) {};
		\node[scale = 1.2] (yy) at (3.7,0.2) {$y$};
	\node[draw,circle, fill = blue] (y') at (4.0,7.5) {};
		\node[scale = 1.2] (yy') at (4.3,7.8) {$y'$};
	
	\node[draw,circle] (l1) at (2.0,2.0) {};
		\node[scale = 1.2] (xy) at (1.6,1.7) {$z_x^{y}$};
	\node[draw,circle] (l2) at (2.0,3.0) {};
		\node[scale = 1.2] (axy) at (1.6,3.3) {$a_x^{y}$};
	\node[draw,circle, fill = blue] (l3) at (2.0,5.0) {};
		\node[scale = 1.2] (axy') at (1.6,4.7) {$a_x^{y'}$};
	\node[draw,circle] (l4) at (2.0,6.0) {};
		\node[scale = 1.2] (xy') at (1.6,6.3) {$z_x^{y'}$};
	\node[draw,circle] (t1) at (3.0,6.0) {};
	\node[draw,circle] (t2) at (5.0,6.0) {};
	\node[draw,circle, fill = blue] (r1) at (6.0,2.0) {};
		\node[scale = 1.2] (x'y) at (6.4,1.7) {$z_{x'}^{y}$};
	\node[draw,circle] (r2) at (6.0,3.0) {};
		\node[scale = 1.2] (ax'y) at (6.4,3.3) {$a_{x'}^{y}$};
	\node[draw,circle, fill = blue] (r3) at (6.0,5.0) {};
		\node[scale = 1.2] (ax'y') at (6.4,4.7) {$a_{x'}^{y'}$};
	\node[draw,circle] (r4) at (6.0,6.0) {};
		\node[scale = 1.2] (x'y') at (6.4,6.3) {$z_{x'}^{y'}$};
	\node[draw,circle, fill = blue] (b1) at (3.0,2.0) {};
	\node[draw,circle] (b2) at (5.0,2.0) {};
	
	\node[draw,circle] (c1) at (3.7,3.0) {};
	\node[draw,circle, fill = blue] (c2) at (4.3,3.0) {};
	\node[draw,circle] (c3) at (4.6,4.0) {};
	\node[draw,circle] (c5) at (3.7,5.0) {};
	\node[draw,circle, fill = blue] (c4) at (4.3,5.0) {};
	\node[draw,circle, fill = blue] (c6) at (3.4,4.0) {};
	
	%edges
	\draw (x)--(l1);
	\draw (x)--(l4);
	
	\draw (l1)--(l2);
	\draw (l2)--(l3);
	\draw (l3)--(l4);
	
	\draw (x')--(r1);
	\draw (x')--(r4);
	
	\draw (r1)--(r2);
	\draw (r2)--(r3);
	\draw (r3)--(r4);
	
	\draw (y')--(l4);
	\draw (y')--(t1);
	\draw (y')--(t2);
	\draw (y')--(r4);
	
	\draw (l4)--(t1);
	\draw (t1)--(t2);
	\draw (t2)--(r4);
	
	\draw (l1)--(b1);
	\draw (b1)--(b2);
	\draw (b2)--(r1);
	
	\draw (y)--(l1);
	\draw (y)--(b1);
	\draw (y)--(b2);
	\draw (y)--(r1);
	
	\draw (l3)--(t1);
	\draw (r3)--(t2);
	\draw (l2)--(b1);
	\draw (r2)--(b2);
	\draw (b1)--(t1);
	\draw (b2)--(t2);
	
	\draw (c1)--(c2);
	\draw (c2)--(c3);
	\draw (c3)--(c4);
	\draw (c4)--(c5);
	\draw (c5)--(c6);
	\draw (c6)--(c1);
	
	\draw (b1)--(c1);
	\draw (b2)--(c2);
	\draw (t2)--(c4);
	\draw (t1)--(c5);
\end{tikzpicture}}
\caption{A planar embedding of $\gjs$ together with an MIS of it (in blue)}
\label{subfig:gadget_embed}
\end{subfigure}
~
\begin{subfigure}[b]{0.39\columnwidth}
\centering
$\begin{array}{|c|c|c|c|}
\hline
 & \card{S \cap X} = 0  & = 1 & = 2\\
 \hline
 \card{S \cap Y} = 0 & 7 & 8 & 8\\
 \hline
 \card{S \cap Y} = 1 & 8 & 9 & 9\\
 \hline
 \card{S \cap Y} = 2 & 7 & 8 & 9\\
 \hline
\end{array}$
\caption{Largest MISs $S$ of $\gjs$ containing a certain subset}
\label{subfig:gadget_sizes}
\end{subfigure}
\caption{The GJS-gadget~\cite{GaJoSt76}}
\label{fig:gadget}
\end{figure}

\textbf{Description of the gadget}. Fig.~\ref{subfig:gadget_embed} represents $\gjs$. Let $X = \set{x,x'}$, $Y = \set{y,y'}$, $Z = \set{z_{x}^{y},z_{x}^{y'},z_{x'}^{y},z_{x'}^{y'}}$, $A = \set{a_{x}^{y},a_{x}^{y'},a_{x'}^{y},a_{x'}^{y'}}$. We denote by $b_{x}^{y}$ the common neighbor of $z_x^y$ and $a_x^y$. Vertices $b_{x}^{y'}, b_{x'}^{y}, b_{x'}^{y'}$ are defined similarly. We fix $B = \set{b_{x}^{y},b_{x}^{y'},b_{x'}^{y},b_{x'}^{y'}}$. The ``$C_6$'' of $\gjs$ refers to the vertices which are not in sets $X,Y,Z,A,$ and $B$. The size of the MIS of $\gjs$ is 9, according to~\cite{GaJoSt76}. Blue vertices give an example of such MIS.
Vertices $x$, $x'$, $y$ and $y'$ are called the \textit{endpoints} of $\gjs$.

Fig.~\ref{subfig:gadget_sizes} indicates the sizes of a largest independent set $S$ we obtain if we fix the intersection size with $X$ and $Y$. For example, a largest independent set $S$ which contains vertices $x,y,y'$ is of size 8: one of them is such that it also contains $a_x^y,a_{x'}^y$ and 3 vertices from the $C_6$. Another example: a largest independent set $S$ containing exactly one vertex of $X$ and one vertex of $Y$ has size 9. The blue vertices of Fig.~\ref{subfig:gadget_embed} form this kind of independent sets, with $S \cap X = \set{x}$ and $S \cap Y = \set{y'}$.

%\remi{j'ai ajouté une description de comment on attache le gadget aux sommets du crossing}
Consider an embedding of some graph $G$ in the plane, 
and a crossing consisting of two edges $uu'$ and $vv'$ (as, for instance, in Fig.~\ref{subfig:crossings}, where $v_1v_1'$ plays the role of $vv'$). By \textit{replacing the crossing by a gadget}, we mean removing the edges $uu'$ and $vv'$, adding a subgraph isomorphic to $\gjs$, and adding the edges $vx$, $uy'$, $v'x'$, and $u'y$. 
By replacing each crossing of $G$ by a gadget, we obtain a graph $G_+$ which is not only planar, but also equivalent to $G$ for the \textsc{Maximum Independent Set} problem, in the sense that $G$ contains an independent set of size $k$ iff $G_+$ contains an independent set of size $k + 9\lambda$, where $\lambda$ is the number of crossings in $G$~\cite{GaJoSt76}. Fig.~\ref{subfig:crossings} shows an example of edge $uu'$ of some graph $G$ which is crossed by three other edges $v_1v_1'$, $v_2v_2'$, and $v_3v_3'$. In $G_+$, these crossings become graphs isomorphic to $\gjs$: they are denoted by $H_1$, $H_2$, and $H_3$ respectively (Fig.~\ref{subfig:crossings_gadget}). Observe that, in Fig.~\ref{subfig:crossings_gadget}, gadgets $H_1$, $H_2$, and $H_3$ are oriented in the same way (to avoid confusions), but it is not a mandatory requirement for the reduction. In other words, turning $H_1$ so that the neighbor of $u$ (resp. the neighbors of $v_1,v_1'$) has degree four (resp. two) still would work.

\begin{figure}[h]
\begin{subfigure}[b]{0.32\columnwidth}
\centering
\scalebox{0.8}{\begin{tikzpicture}
	% vertices
	\node[draw,circle] (u) at (4.0,9.0) {};
	\node[draw,circle] (u') at (4.0,1.0) {};
	\node[draw,circle] (v3) at (2.5,3.0) {};
	\node[draw,circle] (v3') at (5.5,3.0) {};
	\node[draw,circle] (v2) at (2.5,5.0) {};
	\node[draw,circle] (v2') at (5.5,5.0) {};
	\node[draw,circle] (v1) at (2.5,7.0) {};
	\node[draw,circle] (v1') at (5.5,7.0) {};
	
	%edges
	\draw (u)--(u');
	\draw (v1)--(v1');
	\draw (v2)--(v2');
	\draw (v3)--(v3');
	
	%labels
	\node[scale = 1.2] (ux) at (3.6,9.0) {$u$};
	\node[scale = 1.2] (u'x) at (4.4,1.0) {$u'$};
	\node[scale = 1.2] (v1x) at (2.1,7.0) {$v_1$};
	\node[scale = 1.2] (v1'x) at (5.9,7.0) {$v_1'$};
	\node[scale = 1.2] (v2x) at (2.1,5.0) {$v_2$};
	\node[scale = 1.2] (v2'x) at (5.9,5.0) {$v_2'$};
	\node[scale = 1.2] (v3x) at (2.1,3.0) {$v_3$};
	\node[scale = 1.2] (v3'x) at (5.9,3.0) {$v_3'$};
\end{tikzpicture}}
\caption{Edge $uu'$ crossing $v_1v_1',v_2v_2',v_3v_3'$}
\label{subfig:crossings}
\end{subfigure}
~
\begin{subfigure}[b]{0.32\columnwidth}
\centering
\scalebox{0.8}{\begin{tikzpicture}
	% vertices
	\node[draw,circle] (u) at (4.0,9.0) {};
	\node[draw,circle] (u') at (4.0,1.0) {};
	\node[draw,circle] (v3) at (2.5,3.0) {};
	\node[draw,circle] (v3') at (5.5,3.0) {};
	\node[draw,circle] (v2) at (2.5,5.0) {};
	\node[draw,circle] (v2') at (5.5,5.0) {};
	\node[draw,circle] (v1) at (2.5,7.0) {};
	\node[draw,circle] (v1') at (5.5,7.0) {};
	
	%gadget points
	\node[draw,circle] (x_1) at (4.0,7.5) {};
	\node[draw,circle] (x_1') at (4.0,6.5) {};
	\node[draw,circle] (x_2) at (4.0,5.5) {};
	\node[draw,circle] (x_2') at (4.0,4.5) {};
	\node[draw,circle] (x_3) at (4.0,3.5) {};
	\node[draw,circle] (x_3') at (4.0,2.5) {};
	
	\node[draw,circle] (y_1) at (3.5,7.0) {};
	\node[draw,circle] (y_1') at (4.5,7.0) {};
	\node[draw,circle] (y_2) at (3.5,5.0) {};
	\node[draw,circle] (y_2') at (4.5,5.0) {};
	\node[draw,circle] (y_3) at (3.5,3.0) {};
	\node[draw,circle] (y_3') at (4.5,3.0) {};
	
	%squares
	\draw[color = red, dashed] (3.3,7.7) -- (3.3,6.3) -- (4.7,6.3) -- (4.7,7.7) -- (3.3,7.7);
	\draw[color = red, dashed] (3.3,5.7) -- (3.3,4.3) -- (4.7,4.3) -- (4.7,5.7) -- (3.3,5.7);
	\draw[color = red, dashed] (3.3,3.7) -- (3.3,2.3) -- (4.7,2.3) -- (4.7,3.7) -- (3.3,3.7);
	
	%edges
	\draw (u)--(x_1);
	\draw (x_1')--(x_2);
	\draw (x_2')--(x_3);
	\draw (x_3')--(u');
	
	\draw (v1)--(y_1);
	\draw (y_1')--(v1');
	
	\draw (v2)--(y_2);
	\draw (y_2')--(v2');
	
	\draw (v3)--(y_3);
	\draw (y_3')--(v3');
	
	%gadget edges
	\draw (x_1)--(3.8,7.3);
	\draw (x_1)--(4.2,7.3);
	
	\draw (y_1)--(3.8,7.1);
	\draw (y_1)--(3.8,6.9);
	\draw (y_1)--(3.7,7.2);
	\draw (y_1)--(3.7,6.8);
	
	\draw (x_1')--(3.8,6.7);
	\draw (x_1')--(4.2,6.7);
	
	\draw (y_1')--(4.2,7.1);
	\draw (y_1')--(4.2,6.9);
	\draw (y_1')--(4.3,7.2);
	\draw (y_1')--(4.3,6.8);

	\draw (x_2)--(3.8,5.3);
	\draw (x_2)--(4.2,5.3);
	
	\draw (y_2)--(3.8,5.1);
	\draw (y_2)--(3.8,4.9);
	\draw (y_2)--(3.7,5.2);
	\draw (y_2)--(3.7,4.8);
	
	\draw (x_2')--(3.8,4.7);
	\draw (x_2')--(4.2,4.7);
	
	\draw (y_2')--(4.2,5.1);
	\draw (y_2')--(4.2,4.9);
	\draw (y_2')--(4.3,5.2);
	\draw (y_2')--(4.3,4.8);

	\draw (x_3)--(3.8,3.3);
	\draw (x_3)--(4.2,3.3);
	
	\draw (y_3)--(3.8,3.1);
	\draw (y_3)--(3.8,2.9);
	\draw (y_3)--(3.7,3.2);
	\draw (y_3)--(3.7,2.8);
	
	\draw (x_3')--(3.8,2.7);
	\draw (x_3')--(4.2,2.7);
	
	\draw (y_3')--(4.2,3.1);
	\draw (y_3')--(4.2,2.9);
	\draw (y_3')--(4.3,3.2);
	\draw (y_3')--(4.3,2.8);
	
	%labels
	\node[scale = 1.2] (ux) at (3.6,9.0) {$u$};
	\node[scale = 1.2] (u'x) at (4.4,1.0) {$u'$};
	\node[scale = 1.2] (v1x) at (2.1,7.0) {$v_1$};
	\node[scale = 1.2] (v1'x) at (5.9,7.0) {$v_1'$};
	\node[scale = 1.2] (v2x) at (2.1,5.0) {$v_2$};
	\node[scale = 1.2] (v2'x) at (5.9,5.0) {$v_2'$};
	\node[scale = 1.2] (v3x) at (2.1,3.0) {$v_3$};
	\node[scale = 1.2] (v3'x) at (5.9,3.0) {$v_3'$};
	
	\node[scale = 1.2, color = red] (h1) at (2.9,7.4) {$H_1$};
	\node[scale = 1.2, color = red] (h2) at (2.9,5.4) {$H_2$};
	\node[scale = 1.2, color = red] (h3) at (2.9,3.4) {$H_3$};
\end{tikzpicture}}
\caption{After replacing each crossing in $G_+$}
\label{subfig:crossings_gadget}
\end{subfigure}
~
\begin{subfigure}[b]{0.32\columnwidth}
\centering
\scalebox{0.8}{\begin{tikzpicture}
	% vertices
	\node[draw,circle,fill = black!20!blue] (u) at (4.0,9.0) {};
	\node[draw,circle] (u') at (4.0,1.0) {};
	\node[draw,circle] (v3) at (2.5,3.0) {};
	\node[draw,circle] (v3') at (5.5,3.0) {};
	\node[draw,circle,fill = black!20!blue] (v2) at (2.5,5.0) {};
	\node[draw,circle] (v2') at (5.5,5.0) {};
	\node[draw,circle] (v1) at (2.5,7.0) {};
	\node[draw,circle,fill = black!20!blue] (v1') at (5.5,7.0) {};
	
	%gadget points
	\node[draw,circle] (x_1) at (4.0,7.5) {};
	\node[draw,circle,fill=white!50!blue] (x_1') at (4.0,6.5) {};
	\node[draw,circle] (x_2) at (4.0,5.5) {};
	\node[draw,circle,fill=white!50!blue] (x_2') at (4.0,4.5) {};
	\node[draw,circle] (x_3) at (4.0,3.5) {};
	\node[draw,circle,fill=white!50!blue] (x_3') at (4.0,2.5) {};
	
	\node[draw,circle,fill=white!50!blue] (y_1) at (3.5,7.0) {};
	\node[draw,circle] (y_1') at (4.5,7.0) {};
	\node[draw,circle] (y_2) at (3.5,5.0) {};
	\node[draw,circle,fill=white!50!blue] (y_2') at (4.5,5.0) {};
	\node[draw,circle] (y_3) at (3.5,3.0) {};
	\node[draw,circle,fill=white!50!blue] (y_3') at (4.5,3.0) {};
	
	%squares
	\draw[color = red, dashed] (3.3,7.7) -- (3.3,6.3) -- (4.7,6.3) -- (4.7,7.7) -- (3.3,7.7);
	\draw[color = red, dashed] (3.3,5.7) -- (3.3,4.3) -- (4.7,4.3) -- (4.7,5.7) -- (3.3,5.7);
	\draw[color = red, dashed] (3.3,3.7) -- (3.3,2.3) -- (4.7,2.3) -- (4.7,3.7) -- (3.3,3.7);
	
	%edges
	\draw (u)--(x_1);
	\draw (x_1')--(x_2);
	\draw (x_2')--(x_3);
	\draw (x_3')--(u');
	
	\draw (v1)--(y_1);
	\draw (y_1')--(v1');
	
	\draw (v2)--(y_2);
	\draw (y_2')--(v2');
	
	\draw (v3)--(y_3);
	\draw (y_3')--(v3');
	
	%gadget edges
	\draw (x_1)--(3.8,7.3);
	\draw (x_1)--(4.2,7.3);
	
	\draw (y_1)--(3.8,7.1);
	\draw (y_1)--(3.8,6.9);
	\draw (y_1)--(3.7,7.2);
	\draw (y_1)--(3.7,6.8);
	
	\draw (x_1')--(3.8,6.7);
	\draw (x_1')--(4.2,6.7);
	
	\draw (y_1')--(4.2,7.1);
	\draw (y_1')--(4.2,6.9);
	\draw (y_1')--(4.3,7.2);
	\draw (y_1')--(4.3,6.8);

	\draw (x_2)--(3.8,5.3);
	\draw (x_2)--(4.2,5.3);
	
	\draw (y_2)--(3.8,5.1);
	\draw (y_2)--(3.8,4.9);
	\draw (y_2)--(3.7,5.2);
	\draw (y_2)--(3.7,4.8);
	
	\draw (x_2')--(3.8,4.7);
	\draw (x_2')--(4.2,4.7);
	
	\draw (y_2')--(4.2,5.1);
	\draw (y_2')--(4.2,4.9);
	\draw (y_2')--(4.3,5.2);
	\draw (y_2')--(4.3,4.8);

	\draw (x_3)--(3.8,3.3);
	\draw (x_3)--(4.2,3.3);
	
	\draw (y_3)--(3.8,3.1);
	\draw (y_3)--(3.8,2.9);
	\draw (y_3)--(3.7,3.2);
	\draw (y_3)--(3.7,2.8);
	
	\draw (x_3')--(3.8,2.7);
	\draw (x_3')--(4.2,2.7);
	
	\draw (y_3')--(4.2,3.1);
	\draw (y_3')--(4.2,2.9);
	\draw (y_3')--(4.3,3.2);
	\draw (y_3')--(4.3,2.8);
	
	%labels
	\node[scale = 1.2] (ux) at (3.6,9.0) {$u$};
	\node[scale = 1.2] (u'x) at (4.4,1.0) {$u'$};
	\node[scale = 1.2] (v1x) at (2.1,7.0) {$v_1$};
	\node[scale = 1.2] (v1'x) at (5.9,7.0) {$v_1'$};
	\node[scale = 1.2] (v2x) at (2.1,5.0) {$v_2$};
	\node[scale = 1.2] (v2'x) at (5.9,5.0) {$v_2'$};
	\node[scale = 1.2] (v3x) at (2.1,3.0) {$v_3$};
	\node[scale = 1.2] (v3'x) at (5.9,3.0) {$v_3'$};
	
	\node[scale = 1.2, color = red] (h1) at (2.9,7.4) {$H_1$};
	\node[scale = 1.2, color = red] (h2) at (2.9,5.4) {$H_2$};
	\node[scale = 1.2, color = red] (h3) at (2.9,3.4) {$H_3$};
\end{tikzpicture}}
\caption{An of $G$ completed with 9 vertices per crossing}
\label{subfig:crossings_mis}
\end{subfigure}
\caption{Replacing each crossing by a GJS-gadget}
\label{fig:crossings}
\end{figure}
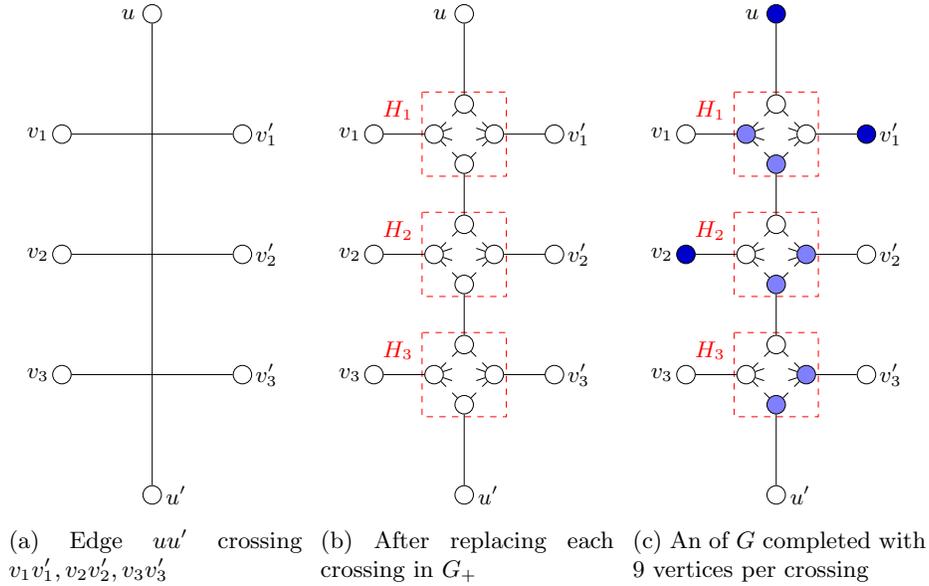

As said previously, the size of an MIS $G_+$ is $\alpha(G) + 9\lambda$, where $\lambda$ is the number of crossings in the planar embedding of $G$. The idea behind this statement is the following: if $S$ is an independent set of $G$, then there exists an independent set $S_+$ of $G_+$ of size $\card{S} + 9\lambda$ which is made up of the vertices of $S$ and 9 vertices per crossing. In Fig.~\ref{subfig:crossings_mis}, the dark blue vertices represent an independent set $S$ of $G$ and the light blue ones are the endpoints of the gadgets, {\em i.e.} $x,x',y,y'$ in Fig.~\ref{fig:gadget}, which belong to $S_+$ in $G_+$. As $S$ is independent, one can select, for each gadget, one vertex of $\set{x,x'}$ (and one vertex of $\set{y,y'}$) which is not adjacent to an element of $S$. We know that a largest independent set of $\gjs$ intersecting both $\set{x,x'}$ and $\set{y,y'}$ in exactly one element has size 9, which corresponds to the MIS size of $\gjs$.

\begin{lemma}[Graphs $G$ and $G_+$ are equivalent for \textsc{Maximum Independent Set}~\cite{GaJoSt76}]
Any MIS $S$ of $G$ can be completed into an MIS $S_+ \supseteq S$ of $G_+$ which contains exactly 9 vertices per crossing gadget. Conversely, given any MIS $S^*$ of $G_+$, the vertices of $S^*$ which do not belong to a crossing gadget of $G_+$ form an MIS of $G$.
\label{le:eq_mis}
\end{lemma}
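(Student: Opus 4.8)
I would prove the statement from two facts about a copy $H$ of $\gjs$ with endpoint sets $X_H = \{x,x'\}$ and $Y_H = \{y,y'\}$: that $\alpha(\gjs) = 9$, and that the table of Fig.~\ref{subfig:gadget_sizes} gives, for any independent set $J$ of $G_+$, the local bound $|J \cap V(H)| \le 7 + [J \cap X_H \neq \emptyset] + [J \cap Y_H \neq \emptyset]$, with $|J \cap V(H)| = 9$ forcing $J$ to meet both $X_H$ and $Y_H$. I would also use that, by the symmetry of $\gjs$ (Fig.~\ref{subfig:gadget_embed}), for any single prescribed vertex of $X_H$ and any single prescribed vertex of $Y_H$ there is a $9$-element independent set of $H$ containing exactly those two endpoints.

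\emph{Completing an MIS of $G$.} Let $S$ be an MIS of $G$. An uncrossed edge contributes nothing. A crossed edge $e = uu'$ becomes, in $G_+$, a path $u - p_1 - H_1 - \bar p_1 - p_2 - \cdots - \bar p_{t_e} - u'$, where $\{p_i, \bar p_i\}$ is the pair of $\gjs$-endpoints of the $i$-th crossing gadget $H_i$ attached to $e$, and $\bar p_i$ is joined to $p_{i+1}$. As $S$ is independent, at most one of $u, u'$ belongs to $S$: if $u \notin S$ I add $p_1, \dots, p_{t_e}$ to $S_+$, and if $u' \notin S$ I add $\bar p_1, \dots, \bar p_{t_e}$. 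Carried out along the two edges through every crossing, this selects inside each gadget exactly one vertex of $X_H$ and one of $Y_H$, neither having a neighbour in $S$, and no connector edge $\bar p_i p_{i+1}$ has both endpoints in $S_+$; I complete each gadget to a $9$-element independent set containing its two selected endpoints. The resulting $S_+ \supseteq S$ is independent, has exactly $9$ vertices in each crossing gadget, and has size $\alpha(G) + 9\lambda$.

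\emph{Projecting back to $G$.} Let $J$ be independent in $G_+$ and $J_0 = J \cap V(G)$. If $e = uu'$ is a crossed edge with $u, u' \in J$, then not every gadget on the replacement path of $e$ can have its $e$-endpoint-pair met by $J$: $u \in J$ forbids $p_1$, so to meet $H_1$'s pair we need $\bar p_1 \in J$, which forbids $p_2$, forcing $\bar p_2 \in J$, and so on until $\bar p_{t_e} \in J$, contradicting $u' \in J$. Summing the local bound over all $\lambda$ gadgets and charging each gadget's two indicators to the two edges through its crossing, the total charged to a ``bad'' edge $e$ (one with both endpoints in $J_0$) is at most $t_e - 1$ instead of $t_e$; as $\sum_e t_e = 2\lambda$, we obtain $|J| \le |J_0| + 9\lambda - |B|$ where $B = E(G[J_0])$. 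Since $G[J_0]$ has at least $|J_0| - c$ edges, $c$ being its number of components, $|J_0| - |B| \le c$, and choosing one vertex per component of $G[J_0]$ yields an independent set of $G$ of size $c$ (two of these vertices in distinct components span no $G$-edge, since any such edge would lie in $G[J_0]$). Hence $|J| \le c + 9\lambda \le \alpha(G) + 9\lambda$; in particular $\alpha(G_+) = \alpha(G) + 9\lambda$, which confirms that the $S_+$ built above is an MIS of $G_+$.

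For the converse, an MIS $S^*$ of $G_+$ turns all of these inequalities into equalities: $G[S^* \cap V(G)]$ is then a forest, a transversal of its components is an MIS of $G$, and this transversal coincides with $S^* \cap V(G)$ — which then meets every crossing gadget in exactly $9$ vertices — precisely when $G[S^* \cap V(G)]$ is edgeless. The one genuinely delicate point is the accounting along a single replacement path, whose successive gadget endpoints are chained together by the connector edges: both the forward selection and the alternation argument must be propagated along the path, and the global count relies on the identity $\sum_e t_e = 2\lambda$. I would therefore isolate the single-path case as a preliminary claim and assemble the global statement from it.
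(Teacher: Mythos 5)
Your forward construction and your proof that $\alpha(G_+) = \alpha(G) + 9\lambda$ are correct, and the component-counting step ($|J| \le |J_0| - |E(G[J_0])| + 9\lambda \le c + 9\lambda \le \alpha(G)+9\lambda$) is a clean way to handle several violated edges at once. Note that the paper itself does not prove this lemma: it defers to~\cite{GaJoSt76} and only sketches the completion direction informally, so on that half you have done strictly more than the text.

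The genuine gap is in the converse, and you have in fact put your finger on it without closing it: equality in your chain forces $|J_0|-|B|=c=\alpha(G)$ and $G[J_0]$ to be a forest, but nothing forces $B=\emptyset$, and your final sentence (``\ldots precisely when $G[S^*\cap V(G)]$ is edgeless'') merely restates what has to be shown. The obstruction is real, not just a missing lemma: a bad edge loses exactly one unit in the gadget tally (the row and the column of Fig.~\ref{subfig:gadget_sizes} corresponding to an unmet pair top out at $8$, only one below $9$) while gaining exactly one vertex in $J_0$, so the ledger can balance. Concretely, let $G$ consist of two disjoint edges $uu'$ and $vv'$ drawn so that they cross once; then $\{u,u',v,v'\}$ is independent in $G_+$ and, together with a $7$-vertex independent set of the gadget avoiding all four endpoints (the $(0,0)$ entry of the table), forms an MIS of $G_+$ of size $4+7=11=\alpha(G)+9$ whose trace on $V(G)$ is not even independent in $G$. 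So the converse, as literally stated for an arbitrary embedded graph, cannot be derived from the table and indeed fails in general; what your argument does establish --- and what \cite{GaJoSt76} actually needs --- is that \emph{some} MIS of $G_+$ (namely any $S_+$ produced in the first half) restricts to an MIS of $G$. A correct proof of the ``every MIS'' version must invoke additional structure of the specific graphs and embeddings to exclude maximum independent sets of $G_+$ that saturate a crossed edge.
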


\textbf{Preservation of $1$-extendability}. Our initial idea was to use the same gadget to transform every graph into a planar one which preserves the $1$-extendability of $G$. Unfortunately, the property described above for \textsc{Maximum Independent Set} does not hold for \textsc{$1$-Extendability}. Indeed, one can find examples of graphs $G$ such that $G$ is 1-extendable and $G_+$ is not. For this reason, we state a weaker characterization involving the GJS-gadget. We will see further that this result is enough to prove that \textsc{$1$-Extendability} is NP-hard on planar graphs.

\begin{proposition}
Let $G$ be a graph embedded in the plane and $uu' \in E(G)$. Let $v_1v_1',v_2v_2',\ldots,v_{\ell}v_{\ell}'$ be the edges of $G$ which cross $uu'$. Assume, for any $1\le i\le \ell$, the following statements:
\begin{itemize}
\item there is an MIS $S_u^{(i)}$ of $G$ such that $S_u^{(i)} \cap \set{u,u',v_i,v_i'} = \set{u}$,
\item there is an MIS $S_{u'}^{(i)}$ of $G$ such that $S_{u'}^{(i)} \cap \set{u,u',v_i,v_i'} = \set{u'}$,
\end{itemize}
Let $G_+$ be the graph obtained from $G$ by replacing each crossing $\set{uu',v_iv_i'}$ with a GJS-gadget. Then, $G_+$ is 1-extendable iff $G$ is 1-extendable.
\label{prop:garey-johnson}
\end{proposition}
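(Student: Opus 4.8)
The plan is to combine the classical analysis of the GJS-gadget with a judicious choice of maximum independent sets of $G$ supplied by the hypothesis. Fix notation: let $H_1, \dots, H_{\ell}$ be the gadgets replacing the crossings $\set{uu', v_iv_i'}$, and for each $j$ write $x_j, x_j'$ for the two endpoints of $H_j$ lying on $v_jv_j'$ (so $v_j \sim x_j$ and $v_j' \sim x_j'$ in $G_+$) and $y_j, y_j'$ for the two endpoints lying on $uu'$; up to the immaterial relabelling noted after the definition of the gadget, in $G_+$ the edge $uu'$ becomes a path running through $u, y_1', H_1, y_1, y_2', H_2, y_2, \dots, y_{\ell}, u'$. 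I will use three facts that are exactly the GJS analysis (Lemma~\ref{le:eq_mis}, whose proof goes through unchanged when only the crossings on a single edge are replaced): $\alpha(G_+) = \alpha(G) + 9\ell$; every MIS of $G$ extends to an MIS of $G_+$ using exactly $9$ vertices in each $H_j$; and for every MIS $S^*$ of $G_+$, the set $S^* \cap V(G)$ is an MIS of $G$. I will also use one combinatorial property of $\gjs$ itself, verified by inspection of Fig.~\ref{fig:gadget}: for each of the four pairs $(\chi,\upsilon) \in \set{x,x'} \times \set{y,y'}$, $\gjs$ has an independent set of size $9$ meeting $X$ exactly in $\chi$ and $Y$ exactly in $\upsilon$, and every vertex of $\gjs$ belongs to at least one such size-$9$ set.

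With these facts, the direction ``$G_+$ $1$-extendable $\Rightarrow$ $G$ $1$-extendable'' is immediate: for $v \in V(G) \subseteq V(G_+)$, pick an MIS $S^*$ of $G_+$ with $v \in S^*$; then $S^* \cap V(G)$ is an MIS of $G$ containing $v$. For the converse, assume $G$ is $1$-extendable and let $w \in V(G_+)$. If $w \in V(G)$, take an MIS $S$ of $G$ with $w \in S$ and extend it to an MIS of $G_+$ through $w$. The remaining, and only substantial, case is $w \in V(H_i)$ for some $i$, and this is the sole place where the hypothesis is used.

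In that case, apply the gadget property to $w$: this yields a pair $(\chi, \upsilon)$ and a size-$9$ independent set $I_i$ of $H_i$ with $w \in I_i$, $I_i \cap X_i = \set{\chi}$, $I_i \cap Y_i = \set{\upsilon}$. Now set $S := S_u^{(i)}$ if $\upsilon = y_i$ and $S := S_{u'}^{(i)}$ if $\upsilon = y_i'$. Walking along the path above, one checks that if $u \in S$ then the $Y$-endpoint selected in every $H_j$ (when insisting on $9$ vertices per gadget) is forced to be $y_j$, and that if $u' \in S$ it is forced to be $y_j'$; hence in $H_i$ the forced choice is exactly $\upsilon$. Moreover both $S_u^{(i)}$ and $S_{u'}^{(i)}$ avoid $v_i$ and $v_i'$, so there is no constraint on the $X$-endpoint selected in $H_i$ and $\chi$ is admissible. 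For $j \neq i$, let $I_j$ be a size-$9$ independent set of $H_j$ whose $Y$-endpoint is the forced one ($y_j$ if $S = S_u^{(i)}$, $y_j'$ if $S = S_{u'}^{(i)}$) and whose $X$-endpoint is $x_j'$ if $v_j \in S$ and $x_j$ otherwise; this is well defined since $v_jv_j' \in E(G)$ forbids $\set{v_j, v_j'} \subseteq S$, and such an $I_j$ exists by the gadget property.

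It then remains to verify that $S_+^w := S \cup \bigcup_{j=1}^{\ell} I_j$ (recall $S \subseteq V(G)$) is an independent set of $G_+$: having size $\alpha(G) + 9\ell = \alpha(G_+)$ and containing $w$, it is then an MIS of $G_+$ through $w$, which finishes the proof. The edges of $G$ surviving in $G_+$, and the edges internal to each $H_j$, are safe because $S$ and each $I_j$ are independent; the edges linking consecutive gadgets along $uu'$, and the edges $uy_1'$ and $u'y_{\ell}$, have at most one selected endpoint by the path-consistency described above; and for each $j$ the edges $v_jx_j$ and $v_j'x_j'$ are safe because $x_j$ is selected only when $v_j \notin S$ and $x_j'$ only when $v_j' \notin S$ (using that $S$ avoids both $v_i, v_i'$ when $j = i$). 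I expect the main obstacle to be exactly this bookkeeping: since $uu'$ may be crossed many times, \emph{all} gadgets along it are forced into the same ``orientation'', so one must confirm that a single maximum independent set of $G$ — which the hypothesis provides in the shape of $S_u^{(i)}$ or $S_{u'}^{(i)}$ — can be propagated consistently through all of $H_1, \dots, H_{\ell}$ at once while still accommodating the prescribed size-$9$ set $I_i$ containing $w$; verifying the gadget property by inspection is the other, minor, technical point.
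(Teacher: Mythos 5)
Your proposal is correct and follows essentially the same route as the paper's proof: both directions hinge on Lemma~\ref{le:eq_mis}, on the observation that the four canonical size-$9$ independent sets of $\gjs$ (one per trace on $\set{x,x'}\times\set{y,y'}$) cover $V(\gjs)$, and on propagating $S_u^{(i)}$ or $S_{u'}^{(i)}$ along the chain of gadgets on $uu'$ so that the forced endpoint orientation matches the one needed for the target vertex $w$, with the $v_iv_i'$-side endpoints of $H_i$ left free because these MISs avoid $v_i$ and $v_i'$. Your write-up merely organizes the argument vertex-first instead of MIS-first and is slightly more explicit in the final independence check; the substance is identical.
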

\begin{proof}
One direction is trivial. If $G_+$ is 1-extendable, then for any vertex $u$ outside a crossing gadget, {\em i.e.} $u \in V(G)$, there is an MIS $S^*$ of $G_+$ containing $u$. According to Lemma~\ref{le:eq_mis}, one can obtain an MIS $S$ of $G$ containing $u$ by removing the vertices of $S$ belonging to the gadgets. In brief, for any $u \in V(G)$, there is an MIS of $G$ containing $u$.

We suppose now that $G$ is 1-extendable. This direction is trickier. We begin with a few notation. Let $H_i$ be the crossing gadget $\gjs$ of $\set{uu',v_iv_i'}$ (see Fig.~\ref{subfig:crossings_gadget}), and $x_i,x_i',y_i,y_i'$ denote the endpoints of $H_i$, $1\le i\le \ell$.
%\remi{$x_i$ is the closest to $v_i$, and $y'_i$ the closest to $u$, no?}
 We fix $x_i$ as the closest one to $u$ and $y_i$ as the closest one to $v_i$. Our objective is to prove that each vertex of gadgets $H_i$ belongs to some MIS of $G_+$. Indeed, we already know that the vertices of $V(G) \subseteq V(G_+)$ are contained in an MIS of $G_+$, according to Lemma~\ref{le:eq_mis}.

We pursue with an observation on the gadget $\gjs$. Let $a \in \{x, x'\}$ and $b \in \{y, y'\}$. Observe that there is a single MIS of $\gjs$ whose intersection with $\{x, x'\}$ is $\{a\}$ and whose intersection with $\{y, y'\}$ is $\{b\}$. We denote by $S_{ab}$ this set (for instance, the set $S_{xy'}$ is depicted in Fig.~\ref{subfig:gadget_embed}). In addition, we have $S_{xy} \cup S_{xy'} \cup S_{x'y} \cup S_{x'y'} = V(\gjs)$. Consequently, in order to prove that $G_+$ is 1-extendable, it is sufficient to show that there are MISs of $G_+$ intersecting $\set{x_i,x_i',y_i,y_i'}$ exactly in $\set{x_i,y_i}$, $\set{x_i,y_i'}$, $\set{x_i',y_i}$, and $\set{x_i',y_i'}$ respectively.
%\remi{est-ce que $S_{ab}$ est unique, ou on a le choix entre plusieurs ?}
%We pursue with an observation on the gadget $\gjs$. We focus on the MISs of $\gjs$ intersecting \remi{exactly?} one vertex in $\set{x,x'}$ and \remi{exactly?} one vertex in $\set{y,y'}$. An example of such MIS is provided in Fig.~\ref{subfig:gadget_embed}: we denote it by $S_{xy'}$ as it contains $x$ and $y'$. Observe that in $\gjs$, the union of the four MISs of this kind cover all vertices of $\gjs$: $S_{xy} \cup S_{xy'} \cup S_{x'y} \cup S_{x'y'} = V(\gjs)$. Consequently, in order to prove that $G_+$ is 1-extendable, it is sufficient to show that there are MISs of $G_+$ intersecting $\set{x_i,x_i',y_i,y_i'}$ exactly in $\set{x_i,y_i}$, $\set{x_i,y_i'}$, $\set{x_i',y_i}$, and $\set{x_i',y_i'}$ respectively.
The remainder consists in using both Lemma~\ref{le:eq_mis} and the assumptions on the MISs of $G$ to put in evidence MISs of $G_+$ which intersect exactly these pairs.
Let $S_u^{(i)}$ and $S_{u'}^{(i)}$ be as in the statement. We now show how to complete them in order to obtain these MISs.

\textit{Completion of $S_u^{(i)}$}. According to Lemma~\ref{le:eq_mis}, one can produce an MIS $S_{u,+}^{(i)}$ in $G_+$ which contains exactly 9 vertices per crossing gadget and $S_u^{(i)} \subseteq S_{u,+}^{(i)}$. All vertices $x_j'$, $1\le j\le \ell$ necessarily belong to $S_{u,+}^{(i)}$. Indeed, $x_1$ is adjacent to $u$, so if we aim at picking up 9 vertices in $H_1$, according to Fig.~\ref{subfig:gadget_sizes}, $x_1' \in S_{u,+}^{(i)}$. Then, $x_2$ is adjacent to $x_1'$, so $x_2'$ must be picked up, etc. In particular, $x_i' \in S_{u,+}^{(i)}$. Then, we know that neither $v_i$ nor $v_i'$ are in $S_u^{(i)}$. Therefore, $S_{u,+}^{(i)}$ may contain either $y_i$ or $y_i'$ or both of them. But, it suffices to pick up exactly one of them to have $\card{H_i \cap S_{u,+}^{(i)}} = 9$. Moreover, selecting either $y_i$ or $y_i'$ when we produce $S_{u,+}^{(i)}$ does not influence the adjacency over the other crossing gadgets or the rest of the graph. Hence, there are two MISs of $G_+$: one intersecting exactly $\set{x_i',y_i}$ and another intersecting exactly $\set{x_i',y_i'}$.

\textit{Completion of $S_{u'}^{(i)}$}. The symmetrical analysis provides us with two MISs of $G_+$ intersecting the endpoints of $H_i$ on exactly $\set{x_i,y_i}$ and $\set{x_i,y_i'}$ respectively.

In summary, for any $1\le i\le \ell$, there are MISs which respectively intersect the set $\set{x_i,x_i',y_i,y_i'}$ in pairs $\set{x_i,y_i}$, $\set{x_i,y_i'}$, $\set{x_i',y_i}$, and $\set{x_i',y_i'}$. Referring to our previous observation, this ensures us that all vertices in gadgets $H_i$ are covered by MISs of $G_+$. As a conclusion, $G_+$ is 1-extendable.
%\qed
\end{proof}

Observe that the assumptions concerning the MISs of $G_+$ are essential if we want pairs $\set{x_i,y_i}$, $\set{x_i,y_i'}$, $\set{x_i',y_i}$, and $\set{x_i',y_i'}$ of each gadget $H_i$ to be covered by MISs. This property is not achieved by all 1-extendable graphs $G$: take for instance an embedding of some complete bipartite graph $K_{n,n}$ with $n\ge 3$, every MIS intersects each crossing on exactly two vertices.

\subsection{Planar embedding}

The GJS-gadget is a key tool in our proof that \textsc{$1$-Extendability} is NP-hard on planar graphs. We reduce from an NP-hard variant of \textsc{3SAT} called \textsc{Planar Monotone Rectilinear 3SAT}, abbreviated \textsc{PMR 3SAT}. Given an input $\varphi$ of \textsc{PMR 3SAT}, we design a graph $G_{\varphi}$ such that $\varphi$ is satisfiable iff $G_{\varphi}$ is 1-extendable. Furthermore, $G_{\varphi}$ is planar and its maximum degree is 3. We begin with the construction of $G_{\varphi}$ step by step. Then, we show that the $1$-extendability of $G_{\varphi}$ depends on the satisfiability of the formula $\varphi$.

\textbf{Starting point of the reduction}. We reduce from \textsc{PMR 3SAT}, which is NP-hard~\cite{BeKh10}. In this variant of \textsc{3SAT}, clauses and variables can be represented in the plane in a certain way. The input is a set of variables $X = \set{x_1,\ldots,x_n}$ and a CNF-SAT formula $\varphi$ over $X$ with exactly three variables per clause. The clauses $C_1,\ldots,C_m$ are \textit{monotone}: they contain either three positive literals or three negative literals. Moreover, $\varphi$ admits a \textit{rectilinear} representation, that we now explain. Each variable is a point on the x-axis. The positive (resp. negative) clauses are represented by horizontal segments above (resp. below) the x-axis. When a variable $x_i$ appears in a given clause, a vertical edge must connect the point $x_i$ on the x-axis with the segment of this clause (at any point of the segment). Such a representation is rectilinear if no edge crosses a clause segment. Fig.~\ref{fig:pmr3sat} provides an example of a formula $\varphi$, with $m=5$, which admits a rectilinear representation.

\begin{figure}[h]
\centering
\scalebox{0.8}{\begin{tikzpicture}

% x axis and variables %%%%%%%%%%%%%%%%%%%%%%%%%%%%%%%%%%%%%%%%%%%%%%%%%%%

\draw[->, >=latex, color = black, line width = 1.4pt] (0.0,8.0) -- (12.0,8.0);
\node[scale = 1.2] at (12.2,7.7) {$\mbox{x}$};

\draw [color = black, fill = white] (2.0,7.7) -- (3.0,7.7) -- (3.0,8.3) -- (2.0,8.3) --  (2.0,7.7);
\node[scale = 1.1] at (2.5,8.0) {$x_1$};
\draw [color = black, fill = white] (4.0,7.7) -- (5.0,7.7) -- (5.0,8.3) -- (4.0,8.3) --  (4.0,7.7);
\node[scale = 1.1] at (4.5,8.0) {$x_2$};
\draw [color = black, fill = white] (6.0,7.7) -- (7.0,7.7) -- (7.0,8.3) -- (6.0,8.3) --  (6.0,7.7);
\node[scale = 1.1] at (6.5,8.0) {$x_3$};
\draw [color = black, fill = white] (8.0,7.7) -- (9.0,7.7) -- (9.0,8.3) -- (8.0,8.3) --  (8.0,7.7);
\node[scale = 1.1] at (8.5,8.0) {$x_4$};
\draw [color = black, fill = white] (10.0,7.7) -- (11.0,7.7) -- (11.0,8.3) -- (10.0,8.3) --  (10.0,7.7);
\node[scale = 1.1] at (10.5,8.0) {$x_5$};

% segments and clauses  %%%%%%%%%%%%%%%%%%%%%%%%%%%%%%%%%%%%%%%%%%%%%%%%%%%%%%%%

\draw [color = black] (4.5,9.5) -- (8.5,9.5) -- (8.5,10.0) -- (4.5,10.0) --  (4.5,9.5);
\node at (6.5,9.75) {$x_2 \vee x_3 \vee x_4$};
\draw [color = black] (2.5,10.5) -- (10.5,10.5) -- (10.5,11.0) -- (2.5,11.0) --  (2.5,10.5);
\node at (6.5,10.75) {$x_1 \vee x_2 \vee x_5$};

\draw [color = black] (2.5,6.5) -- (6.4,6.5) -- (6.4,6.0) -- (2.5,6.0) --  (2.5,6.5);
\node at (4.5,6.25) {$\neg x_1 \vee \neg x_2 \vee \neg x_3$};
\draw [color = black] (6.6,6.5) -- (10.5,6.5) -- (10.5,6.0) -- (6.6,6.0) --  (6.6,6.5);
\node at (8.5,6.25) {$\neg x_3 \vee \neg x_4 \vee \neg x_5$};
\draw [color = black] (2.4,5.5) -- (10.6,5.5) -- (10.6,5.0) -- (2.4,5.0) --  (2.4,5.5);
\node at (6.5,5.25) {$\neg x_1 \vee \neg x_3 \vee \neg x_5$};

% edges %%%%%%%%%%%%%%%%%%%%%%%%%%%%%%%%%%%%%%%%%%%%%%%%%%%%

\draw[color=black] (2.7,8.3) -- (2.7,10.5);
\draw[color=black] (4.3,8.3) -- (4.3,10.5);
\draw[color=black] (4.7,8.3) -- (4.7,9.5);
\draw[color=black] (6.5,8.3) -- (6.5,9.5);
\draw[color=black] (8.3,8.3) -- (8.3,9.5);
\draw[color=black] (10.3,8.3) -- (10.3,10.5);

\draw[color=black] (2.7,7.7) -- (2.7,6.5);
\draw[color=black] (4.5,7.7) -- (4.5,6.5);
\draw[color=black] (6.3,7.7) -- (6.3,6.5);
\draw[color=black] (6.7,7.7) -- (6.7,6.5);
\draw[color=black] (8.5,7.7) -- (8.5,6.5);
\draw[color=black] (10.3,7.7) -- (10.3,6.5);

\draw[color=black] (2.4,7.7) -- (2.4,5.5);
\draw[color=black] (6.5,7.7) -- (6.5,5.5);
\draw[color=black] (10.6,7.7) -- (10.6,5.5);

\end{tikzpicture}}
\caption{A rectilinear representation of a PMR 3SAT instance $C_1,\ldots,C_5$}
\label{fig:pmr3sat}
\end{figure}
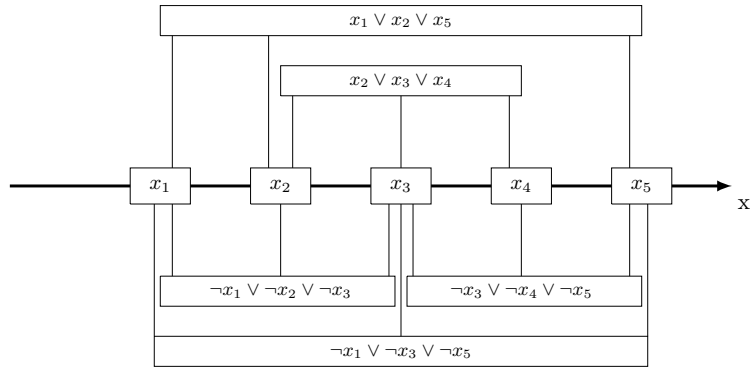

Let $\varphi$ be an input of PMR 3SAT provided with its rectilinear representation. The construction of $G_{\varphi}$ depends on the rectilinear representation of $\varphi$. We proceed with two intermediate steps: first graph $G_{\varphi}''$, second graph $G_{\varphi}'$.

\textbf{Construction of} $G_{\varphi}''$. The first step is inspired from Mohar's reduction~\cite{Mo01} for \textsc{Maximum Independent Set}. We replace each variable $x_i$ on the x-axis by a cycle. Let $r$ be the number of appearances of $x_i$ (as a literal $x_i$ or $\neg x_i$) in the clauses $C_1,\ldots,C_m$ of  $\varphi$. The point representing variable $x_i$ becomes a cycle $x_i^1,\bar{x}_i^1,x_i^2,\bar{x}_i^2,\ldots,x_i^r,\bar{x}_i^r$ of length $2r$, drawn as an axis-parallel rectangle (see Fig.~\ref{subfig:first_step}). 
We denote by $c^*$ the total number of vertices in the variable cycles. Each clause $C_j = \ell_j^1 \vee \ell_j^2 \vee \ell_j^3$ is replaced by a triangle $T_j$ of three vertices $v_j^1,v_j^2,v_j^3$. The edges of these triangles are called $T$-\textit{edges}. Each vertex of the clause is placed at the intersection between the clause segment and vertical edges of the rectilinear representation. In this way, vertices $v_j^1$, $v_j^2$ and $v_j^3$ are aligned horizontally and, w.l.o.g, we assume $v_j^1$ (resp. $v_j^3$) is the leftmost (resp. rightmost) vertex of $T_j$ on the clause segment. Edges $v_j^1v_j^2$ and $v_j^2v_j^3$ are drawn as straight lines. The third one, $v_j^1v_j^3$, can be represented as an almost flat curve, passing above (resp. below) vertex $v_2^j$ for positive (resp. negative) clauses. If $\ell_j^q = x_i$ for some $1\le j\le m$ and $q\in \set{1,2,3}$, then vertex $v_j^q$ is connected to some cycle vertex $\bar{x}_i^s$ of the top of the rectangle. Otherwise, if $\ell_j^q = \neg x_i$, then vertex $v_j^q$ is connected to some cycle vertex $x_i^s$ of the bottom of the rectangle. For now, the described embedding is planar. Fig.~\ref{subfig:first_step} shows the embedding of the instance of Fig.~\ref{fig:pmr3sat}. Vertices $\bar{x}_i^s$ are drawn in grey to distinguish them from vertices $x_i^s$ (in white).

Less formally, each parity of a variable cycle represents a certain assignation of this variable. Picking up $x_i^1,x_i^2,\ldots$ (resp. $\bar{x}_i^1,\bar{x}_i^2,\ldots$) into an independent set will correspond to assigning $x_i$ to \textsf{False} (resp. \textsf{True}). 

We add a ``pendant'' vertex $\pi_j$ for any triangle $T_j$, $1\le j\le m$, that is, $\pi_j$ is adjacent to all vertices of $T_j$. The edges created by this operation, {\em i.e.} all $v_j^q\pi_j$, are called \textit{pendant edges}. Consider the following embedding. We fix two horizontal axes $\mbox{x}^+$ and $\mbox{x}^-$: the first one above the x-axis and all segments of the positive clauses, the second one below the x-axis and all segments of the negative clauses. The pendants issued from the positive clauses are placed on the $\mbox{x}^+$-axis such that every edge $(v_j^2,\pi_j)$ is vertical. We represent edges $(v_j^1,\pi_j)$ and $(v_j^3,\pi_j)$ as straight lines (they cannot be vertical). We proceed similarly with pendants of the negative clauses on the $\mbox{x}^-$-axis. We denote by $G_{\varphi}''$ the obtained graph. Its embedding is not planar. Fig.~\ref{subfig:first_step_pendant} shows graph $G_{\varphi}''$ corresponding to the instance $\varphi$ of Fig.~\ref{fig:pmr3sat}. Pendant edges are drawn in red. We claim that each vertex of $G_{\varphi}''$ belongs to an MIS. This might seem counter-intuitive, but the equivalence between the $1$-extendability of the output instance and the satisfaction of $\varphi$ will appear later (when we will eventually define $G_{\varphi}$).

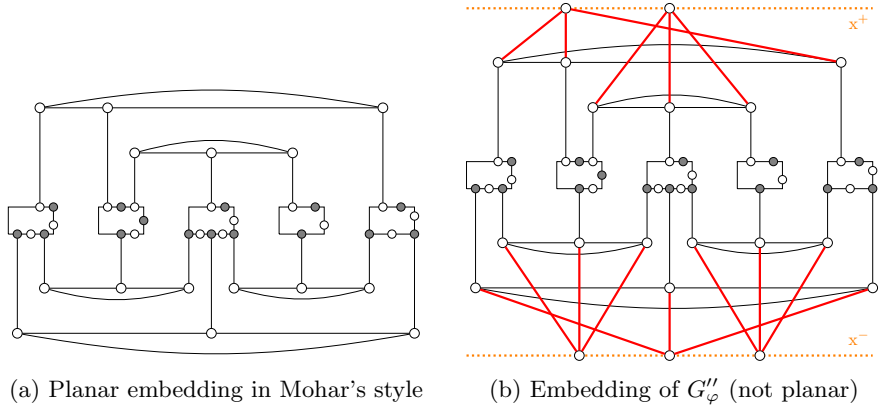
\begin{figure}[t]
\begin{subfigure}[b]{0.47\columnwidth}
\centering
\scalebox{0.6}{\begin{tikzpicture}

% x axis and variables %%%%%%%%%%%%%%%%%%%%%%%%%%%%%%%%%%%%%%%%%%%%%%%%%%%

%\draw[->, >=latex, color = black, line width = 1.4pt] (0.0,8.0) -- (12.0,8.0);
%\node[scale = 1.1] at (12.3,7.7) {$x$};

\draw [color = black] (2.0,7.7) -- (3.0,7.7) -- (3.0,8.3) -- (2.0,8.3) --  (2.0,7.7);
%\node[scale = 1.1] at (2.5,8.0) {$x_1$};
\draw [color = black] (4.0,7.7) -- (5.0,7.7) -- (5.0,8.3) -- (4.0,8.3) --  (4.0,7.7);
%\node[scale = 1.1] at (4.5,8.0) {$x_2$};
\draw [color = black] (6.0,7.7) -- (7.0,7.7) -- (7.0,8.3) -- (6.0,8.3) --  (6.0,7.7);
%\node[scale = 1.1] at (6.5,8.0) {$x_3$};
\draw [color = black] (8.0,7.7) -- (9.0,7.7) -- (9.0,8.3) -- (8.0,8.3) --  (8.0,7.7);
%\node[scale = 1.1] at (8.5,8.0) {$x_4$};
\draw [color = black] (10.0,7.7) -- (11.0,7.7) -- (11.0,8.3) -- (10.0,8.3) --  (10.0,7.7);
%\node[scale = 1.1] at (10.5,8.0) {$x_5$};

% variable cycles %%%%%%%%%%%%%%%%%%%%%%%%%%%%%%%%%%%%%%%%%%%%%%%%%%%%%%%%%

\node[draw,circle,scale = 0.6, fill = white] (x11) at (2.7,8.3) {};
\node[draw,circle,scale = 0.6, fill = white!50!black] (bx12) at (2.8,7.7) {};
\node[draw,circle,scale = 0.6, fill = white!50!black] (bx13) at (2.2,7.7) {};

\node[draw,circle,scale = 0.6, fill = white] (x21) at (4.2,8.3) {};
\node[draw,circle,scale = 0.6, fill = white] (x22) at (4.8,8.3) {};
\node[draw,circle,scale = 0.6, fill = white!50!black] (bx23) at (4.5,7.7) {};

\node[draw,circle,scale = 0.6, fill = white] (x31) at (6.5,8.3) {};
\node[draw,circle,scale = 0.6, fill = white!50!black] (bx32) at (7.0,7.7) {};
\node[draw,circle,scale = 0.6, fill = white!50!black] (bx33) at (6.5,7.7) {};
\node[draw,circle,scale = 0.6, fill = white!50!black] (bx34) at (6.0,7.7) {};

\node[draw,circle,scale = 0.6, fill = white] (x41) at (8.3,8.3) {};
\node[draw,circle,scale = 0.6, fill = white!50!black] (bx42) at (8.5,7.7) {};

\node[draw,circle,scale = 0.6, fill = white] (x51) at (10.3,8.3) {};
\node[draw,circle,scale = 0.6, fill = white!50!black] (bx52) at (11.0,7.7) {};
\node[draw,circle,scale = 0.6, fill = white!50!black] (bx53) at (10.0,7.7) {};

% bar vertives %%%%%%%%%%%%%%%%%%%%%%%%%%%%%%%%%%%%%%%%%%%%%%%%%%%%%%%%%%

\node[draw,circle,scale = 0.6, fill = white!50!black] (bx11) at (3.0,8.3) {};
\node[draw,circle,scale = 0.6, fill = white] (x12) at (3.0,7.9) {};
\node[draw,circle,scale = 0.6, fill = white] (x13) at (2.5,7.7) {};

\node[draw,circle,scale = 0.6, fill = white!50!black] (bx21) at (4.5,8.3) {};
\node[draw,circle,scale = 0.6, fill = white!50!black] (bx22) at (5.0,8.0) {};
\node[draw,circle,scale = 0.6, fill = white] (x23) at (4.8,7.7) {};

\node[draw,circle,scale = 0.6, fill = white!50!black] (bx31) at (6.8,8.3) {};
\node[draw,circle,scale = 0.6, fill = white] (x32) at (7.0,8.0) {};
\node[draw,circle,scale = 0.6, fill = white] (x33) at (6.75,7.7) {};
\node[draw,circle,scale = 0.6, fill = white] (x34) at (6.25,7.7) {};

\node[draw,circle,scale = 0.6, fill = white!50!black] (bx41) at (8.8,8.3) {};
\node[draw,circle,scale = 0.6, fill = white] (x42) at (9.0,7.9) {};

\node[draw,circle,scale = 0.6, fill = white!50!black] (bx51) at (10.7,8.3) {};
\node[draw,circle,scale = 0.6, fill = white] (x52) at (11.0,8.1) {};
\node[draw,circle,scale = 0.6, fill = white] (x53) at (10.5,7.7) {};

% clause vertices %%%%%%%%%%%%%%%%%%%%%%%%%%%%%%%%%%%%%%%%%%%%%%%%%%%

\node[draw,circle,scale = 0.7, fill = white] (C11) at (4.8,9.5) {};
\node[draw,circle,scale = 0.7, fill = white] (C12) at (6.5,9.5) {};
\node[draw,circle,scale = 0.7, fill = white] (C13) at (8.3,9.5) {};

\node[draw,circle,scale = 0.7, fill = white] (C21) at (2.7,10.5) {};
\node[draw,circle,scale = 0.7, fill = white] (C22) at (4.2,10.5) {};
\node[draw,circle,scale = 0.7, fill = white] (C23) at (10.3,10.5) {};

\node[draw,circle,scale = 0.7, fill = white] (C31) at (2.8,6.5) {};
\node[draw,circle,scale = 0.7, fill = white] (C32) at (4.5,6.5) {};
\node[draw,circle,scale = 0.7, fill = white] (C33) at (6.0,6.5) {};

\node[draw,circle,scale = 0.7, fill = white] (C41) at (7.0,6.5) {};
\node[draw,circle,scale = 0.7, fill = white] (C42) at (8.5,6.5) {};
\node[draw,circle,scale = 0.7, fill = white] (C43) at (10.0,6.5) {};

\node[draw,circle,scale = 0.7, fill = white] (C51) at (2.2,5.5) {};
\node[draw,circle,scale = 0.7, fill = white] (C52) at (6.5,5.5) {};
\node[draw,circle,scale = 0.7, fill = white] (C53) at (11.0,5.5) {};

% segments and clauses  %%%%%%%%%%%%%%%%%%%%%%%%%%%%%%%%%%%%%%%%%%%%%%%%%%%%%%%%

\draw [color = black] (C11) -- (C12);
\draw [color = black] (C12) -- (C13);
\draw (C13) to [out=165,in=15] (C11);

\draw [color = black] (C21) -- (C22);
\draw [color = black] (C22) -- (C23);
\draw (C23) to [out=170,in=10] (C21);

\draw [color = black] (C31) -- (C32);
\draw [color = black] (C32) -- (C33);
\draw (C33) to [out=195,in=345] (C31);

\draw [color = black] (C41) -- (C42);
\draw [color = black] (C42) -- (C43);
\draw (C43) to [out=195,in=345] (C41);

\draw [color = black] (C51) -- (C52);
\draw [color = black] (C52) -- (C53);
\draw (C53) to [out=190,in=350] (C51);

% edges %%%%%%%%%%%%%%%%%%%%%%%%%%%%%%%%%%%%%%%%%%%%%%%%%%%%

\draw[color=black] (x11) -- (C21);
\draw[color=black] (x21) -- (C22);
\draw[color=black] (x22) -- (C11);
\draw[color=black] (x31) -- (C12);
\draw[color=black] (x41) -- (C13);
\draw[color=black] (x51) -- (C23);

\draw[color=black] (bx12) -- (C31);
\draw[color=black] (bx23) -- (C32);
\draw[color=black] (bx34) -- (C33);
\draw[color=black] (bx32) -- (C41);
\draw[color=black] (bx42) -- (C42);
\draw[color=black] (bx53) -- (C43);

\draw[color=black] (bx13) -- (C51);
\draw[color=black] (bx33) -- (C52);
\draw[color=black] (bx52) -- (C53);

\end{tikzpicture}}
\caption{Planar embedding in Mohar's style}
\label{subfig:first_step}
\end{subfigure}
\begin{subfigure}[b]{0.51\columnwidth}
\centering
\scalebox{0.6}{\begin{tikzpicture}

% x axis and variables %%%%%%%%%%%%%%%%%%%%%%%%%%%%%%%%%%%%%%%%%%%%%%%%%%%

%\draw[->, >=latex, color = black, line width = 1.4pt] (0.0,8.0) -- (12.0,8.0);
%\node[scale = 1.1] at (12.3,7.7) {$x$};

\draw [color = black] (2.0,7.7) -- (3.0,7.7) -- (3.0,8.3) -- (2.0,8.3) --  (2.0,7.7);
%\node[scale = 1.1] at (2.5,8.0) {$x_1$};
\draw [color = black] (4.0,7.7) -- (5.0,7.7) -- (5.0,8.3) -- (4.0,8.3) --  (4.0,7.7);
%\node[scale = 1.1] at (4.5,8.0) {$x_2$};
\draw [color = black] (6.0,7.7) -- (7.0,7.7) -- (7.0,8.3) -- (6.0,8.3) --  (6.0,7.7);
%\node[scale = 1.1] at (6.5,8.0) {$x_3$};
\draw [color = black] (8.0,7.7) -- (9.0,7.7) -- (9.0,8.3) -- (8.0,8.3) --  (8.0,7.7);
%\node[scale = 1.1] at (8.5,8.0) {$x_4$};
\draw [color = black] (10.0,7.7) -- (11.0,7.7) -- (11.0,8.3) -- (10.0,8.3) --  (10.0,7.7);
%\node[scale = 1.1] at (10.5,8.0) {$x_5$};

% variable cycles %%%%%%%%%%%%%%%%%%%%%%%%%%%%%%%%%%%%%%%%%%%%%%%%%%%%%%%%%

\node[draw,circle,scale = 0.6, fill = white] (x11) at (2.7,8.3) {};
\node[draw,circle,scale = 0.6, fill = white!50!black] (bx12) at (2.8,7.7) {};
\node[draw,circle,scale = 0.6, fill = white!50!black] (bx13) at (2.2,7.7) {};

\node[draw,circle,scale = 0.6, fill = white] (x21) at (4.2,8.3) {};
\node[draw,circle,scale = 0.6, fill = white] (x22) at (4.8,8.3) {};
\node[draw,circle,scale = 0.6, fill = white!50!black] (bx23) at (4.5,7.7) {};

\node[draw,circle,scale = 0.6, fill = white] (x31) at (6.5,8.3) {};
\node[draw,circle,scale = 0.6, fill = white!50!black] (bx32) at (7.0,7.7) {};
\node[draw,circle,scale = 0.6, fill = white!50!black] (bx33) at (6.5,7.7) {};
\node[draw,circle,scale = 0.6, fill = white!50!black] (bx34) at (6.0,7.7) {};

\node[draw,circle,scale = 0.6, fill = white] (x41) at (8.3,8.3) {};
\node[draw,circle,scale = 0.6, fill = white!50!black] (bx42) at (8.5,7.7) {};

\node[draw,circle,scale = 0.6, fill = white] (x51) at (10.3,8.3) {};
\node[draw,circle,scale = 0.6, fill = white!50!black] (bx52) at (11.0,7.7) {};
\node[draw,circle,scale = 0.6, fill = white!50!black] (bx53) at (10.0,7.7) {};

% bar vertives %%%%%%%%%%%%%%%%%%%%%%%%%%%%%%%%%%%%%%%%%%%%%%%%%%%%%%%%%%

\node[draw,circle,scale = 0.6, fill = white!50!black] (bx11) at (3.0,8.3) {};
\node[draw,circle,scale = 0.6, fill = white] (x12) at (3.0,7.9) {};
\node[draw,circle,scale = 0.6, fill = white] (x13) at (2.5,7.7) {};

\node[draw,circle,scale = 0.6, fill = white!50!black] (bx21) at (4.5,8.3) {};
\node[draw,circle,scale = 0.6, fill = white!50!black] (bx22) at (5.0,8.0) {};
\node[draw,circle,scale = 0.6, fill = white] (x23) at (4.8,7.7) {};

\node[draw,circle,scale = 0.6, fill = white!50!black] (bx31) at (6.8,8.3) {};
\node[draw,circle,scale = 0.6, fill = white] (x32) at (7.0,8.0) {};
\node[draw,circle,scale = 0.6, fill = white] (x33) at (6.75,7.7) {};
\node[draw,circle,scale = 0.6, fill = white] (x34) at (6.25,7.7) {};

\node[draw,circle,scale = 0.6, fill = white!50!black] (bx41) at (8.8,8.3) {};
\node[draw,circle,scale = 0.6, fill = white] (x42) at (9.0,7.9) {};

\node[draw,circle,scale = 0.6, fill = white!50!black] (bx51) at (10.7,8.3) {};
\node[draw,circle,scale = 0.6, fill = white] (x52) at (11.0,8.1) {};
\node[draw,circle,scale = 0.6, fill = white] (x53) at (10.5,7.7) {};

% clause vertices %%%%%%%%%%%%%%%%%%%%%%%%%%%%%%%%%%%%%%%%%%%%%%%%%%%

\node[draw,circle,scale = 0.7, fill = white] (C11) at (4.8,9.5) {};
\node[draw,circle,scale = 0.7, fill = white] (C12) at (6.5,9.5) {};
\node[draw,circle,scale = 0.7, fill = white] (C13) at (8.3,9.5) {};

\node[draw,circle,scale = 0.7, fill = white] (C21) at (2.7,10.5) {};
\node[draw,circle,scale = 0.7, fill = white] (C22) at (4.2,10.5) {};
\node[draw,circle,scale = 0.7, fill = white] (C23) at (10.3,10.5) {};

\node[draw,circle,scale = 0.7, fill = white] (C31) at (2.8,6.5) {};
\node[draw,circle,scale = 0.7, fill = white] (C32) at (4.5,6.5) {};
\node[draw,circle,scale = 0.7, fill = white] (C33) at (6.0,6.5) {};

\node[draw,circle,scale = 0.7, fill = white] (C41) at (7.0,6.5) {};
\node[draw,circle,scale = 0.7, fill = white] (C42) at (8.5,6.5) {};
\node[draw,circle,scale = 0.7, fill = white] (C43) at (10.0,6.5) {};

\node[draw,circle,scale = 0.7, fill = white] (C51) at (2.2,5.5) {};
\node[draw,circle,scale = 0.7, fill = white] (C52) at (6.5,5.5) {};
\node[draw,circle,scale = 0.7, fill = white] (C53) at (11.0,5.5) {};

% segments and clauses  %%%%%%%%%%%%%%%%%%%%%%%%%%%%%%%%%%%%%%%%%%%%%%%%%%%%%%%%

\draw [color = black] (C11) -- (C12);
\draw [color = black] (C12) -- (C13);
\draw (C13) to [out=165,in=15] (C11);

\draw [color = black] (C21) -- (C22);
\draw [color = black] (C22) -- (C23);
\draw (C23) to [out=170,in=10] (C21);

\draw [color = black] (C31) -- (C32);
\draw [color = black] (C32) -- (C33);
\draw (C33) to [out=195,in=345] (C31);

\draw [color = black] (C41) -- (C42);
\draw [color = black] (C42) -- (C43);
\draw (C43) to [out=195,in=345] (C41);

\draw [color = black] (C51) -- (C52);
\draw [color = black] (C52) -- (C53);
\draw (C53) to [out=190,in=350] (C51);

% edges %%%%%%%%%%%%%%%%%%%%%%%%%%%%%%%%%%%%%%%%%%%%%%%%%%%%

\draw[color=black] (x11) -- (C21);
\draw[color=black] (x21) -- (C22);
\draw[color=black] (x22) -- (C11);
\draw[color=black] (x31) -- (C12);
\draw[color=black] (x41) -- (C13);
\draw[color=black] (x51) -- (C23);

\draw[color=black] (bx12) -- (C31);
\draw[color=black] (bx23) -- (C32);
\draw[color=black] (bx34) -- (C33);
\draw[color=black] (bx32) -- (C41);
\draw[color=black] (bx42) -- (C42);
\draw[color=black] (bx53) -- (C43);

\draw[color=black] (bx13) -- (C51);
\draw[color=black] (bx33) -- (C52);
\draw[color=black] (bx52) -- (C53);

% pendant %%%%%%%%%%%%%%%%%%%%%%%%%%%%%%%%%%%%%%%%%%%%%%%%%%%%%%%%%%%

\draw[color = orange, line width = 1.4pt, dotted] (2.0,11.7) -- (11.0,11.7);
\node[scale = 1.2, color = orange] at (10.7,11.4) {$\mbox{x}^+$};
\draw[color = orange, line width = 1.4pt, dotted] (2.0,4.0) -- (11.0,4.0);
\node[scale = 1.2, color = orange] at (10.7,4.4) {$\mbox{x}^-$};

\node[draw,circle,scale = 0.7, fill = white] (P1) at (6.5,11.7) {};
\node[draw,circle,scale = 0.7, fill = white] (P2) at (4.2,11.7) {};
\node[draw,circle,scale = 0.7, fill = white] (P3) at (4.5,4.0) {};
\node[draw,circle,scale = 0.7, fill = white] (P4) at (8.5,4.0) {};
\node[draw,circle,scale = 0.7, fill = white] (P5) at (6.5,4.0) {};

\draw[color=red, line width = 1.4pt] (C11) -- (P1);
\draw[color=red, line width = 1.4pt] (C12) -- (P1);
\draw[color=red, line width = 1.4pt] (C13) -- (P1);

\draw[color=red, line width = 1.4pt] (C21) -- (P2);
\draw[color=red, line width = 1.4pt] (C22) -- (P2);
\draw[color=red, line width = 1.4pt] (C23) -- (P2);

\draw[color=red, line width = 1.4pt] (C31) -- (P3);
\draw[color=red, line width = 1.4pt] (C32) -- (P3);
\draw[color=red, line width = 1.4pt] (C33) -- (P3);

\draw[color=red, line width = 1.4pt] (C41) -- (P4);
\draw[color=red, line width = 1.4pt] (C42) -- (P4);
\draw[color=red, line width = 1.4pt] (C43) -- (P4);

\draw[color=red, line width = 1.4pt] (C51) -- (P5);
\draw[color=red, line width = 1.4pt] (C52) -- (P5);
\draw[color=red, line width = 1.4pt] (C53) -- (P5);

\end{tikzpicture}}
\caption{Embedding of $G_{\varphi}''$ (not planar)}
\label{subfig:first_step_pendant}
\end{subfigure}
\caption{Graph $G_{\varphi}''$ with and without pendant vertices.}
\label{fig:first_step}
\end{figure}

\begin{lemma}
Graph $G_{\varphi}''$ is 1-extendable.
\label{le:first_step_reduction}
\end{lemma}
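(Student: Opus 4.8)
The plan is to first determine $\alpha(G_{\varphi}'')$ and then produce, for every vertex, an MIS through it. Let $m$ be the number of clauses and recall that $c^*$ denotes the number of vertices lying in the variable cycles. I would first observe that $G_{\varphi}''$ admits a clique cover of size $c^*/2 + m$: each variable cycle $x_i^1\bar{x}_i^1 x_i^2\bar{x}_i^2\cdots x_i^{r}\bar{x}_i^{r}$ splits into the $r$ edges $\set{x_i^s,\bar{x}_i^s}$, while each clause contributes the clique $\set{v_j^1,v_j^2,v_j^3,\pi_j}$, which is a $K_4$. Hence $\alpha(G_{\varphi}'') \le c^*/2+m$. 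For the matching lower bound, pick in each variable cycle one of its two ``parity'' independent sets — either all the vertices $x_i^s$ or all the vertices $\bar{x}_i^s$ — and add every pendant $\pi_j$. Since pendants are pairwise non-adjacent and each pendant is adjacent only to the three vertices of its own triangle, this set is independent, of size $c^*/2+m$. Therefore $\alpha(G_{\varphi}'')=c^*/2+m$, and in particular both the ``all-$x$'' and the ``all-$\bar{x}$'' versions of the above set are MISs.

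These two MISs already cover all cycle vertices (each $x_i^s$ lies in the all-$x$ one, each $\bar{x}_i^s$ in the all-$\bar{x}$ one) and all pendants, so it remains to show that each clause vertex $v_j^q$ belongs to some MIS. By construction $v_j^q$ has exactly one neighbour $w$ among the cycle vertices, and $w$ belongs to exactly one parity class of its variable cycle, say $w=x_i^s$ (the case $w=\bar{x}_i^s$ is symmetric). I would then form $I$ by taking, in the cycle of $x_i$, the parity class $\set{\bar{x}_i^1,\dots,\bar{x}_i^{r}}$ (which avoids $w$); in every other variable cycle, either parity class; the vertex $v_j^q$ itself; and the pendant $\pi_{j'}$ for every clause $j'\ne j$.

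It then suffices to check that $I$ is independent of size $c^*/2+m$. For independence: there is no edge between distinct variable cycles and each chosen parity class is independent; each pendant sees only its own triangle, and the unique triangle vertex in $I$ is $v_j^q$, which is not a pendant, so no pendant is in conflict; and $v_j^q$'s only cycle neighbour, namely $w$, was deliberately omitted. For the size: $c^*/2$ vertices come from the cycles, one from $v_j^q$, and $m-1$ from the remaining pendants, giving $c^*/2+m=\alpha(G_{\varphi}'')$. Hence $I$ is an MIS containing $v_j^q$, and since every vertex of $G_{\varphi}''$ is a cycle vertex, a pendant, or a clause vertex, $G_{\varphi}''$ is $1$-extendable. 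The only step needing genuine care is this independence check for $I$: it rests precisely on the two structural facts that a pendant is adjacent only to its own triangle and that flipping the parity of a single variable cycle removes exactly the unique cycle neighbour of the one triangle vertex we keep, so no new conflict can arise.
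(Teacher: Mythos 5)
Your proposal is correct and follows essentially the same route as the paper's proof: a clique cover (the clause $K_4$'s plus the $c^*/2$ matching edges of the variable cycles) gives $\alpha(G_{\varphi}'') \le m + c^*/2$, the set of all pendants plus one parity class per cycle attains it and covers pendants and cycle vertices, and each triangle vertex is handled by flipping the parity of its unique adjacent variable cycle and taking the pendants of all other clauses. The only difference is that you spell out the independence check in more detail than the paper does, which is harmless.
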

\begin{proof}
Sets $\set{v_j^1,v_j^2,v_j^3,\pi_j}$, for any $1\le j\le m$, form an induced $K_4$. Hence there is a clique cover made up of all clauses $K_4$ together with half of the edges of the variable cycles. Hence an MIS of $G_{\varphi}''$ has size at most $m + c^*/2$. Here is an independent set of size $m + c^*/2$: pick up all pendant vertices $\pi_j$, $1\le j\le m$, and, for each variable, an MIS of the corresponding variable cycle (each cycle, being of even length, has two MISs, we may take any of them). This shows not only that $\alpha(G_{\varphi}'') = m + c^*/2$ but also that all pendant and cycle vertices belong to some MIS. Finally, each triangle vertex $v_j^q$ also belongs to an MIS: in the variable cycle adjacent to $v_j^q$, take the MIS not adjacent to $v_j^q$; in the other variable cycles, take any MIS in it; finally, take the pendant vertices of the other clauses ({\em i.e.} $\pi_h$ for $h \neq j$).
%\qed
\end{proof}

\textbf{Construction of} $G_{\varphi}'$. The second step consists in transforming $G_{\varphi}''$ into some equivalent graph $G_{\varphi}'$ which is planar and has maximum degree 3. Two types of crossings appear in the embedding of $G_{\varphi}''$. Each of them necessarily involve pendant edges.

\begin{itemize}
\item \textbf{Type A}: a pendant edge $v_j^q\pi_j$ crosses a $T$-edge $v_{j'}^{p'}v_{j'}^{q'}$ (we may have $j = j'$),
\item \textbf{Type B}: a pendant edge $v_j^q\pi_j$ crosses another pendant edge $v_{j'}^{q'}\pi_{j'}$, $j\neq j'$.
\end{itemize}

We observe that for any of these types of crossings in the embedding of $G_{\varphi}''$, the assumptions of Proposition~\ref{prop:garey-johnson} are fulfilled.

\begin{lemma}
Let $\set{uu',vv'}$ be a crossing of the embedding of $G_{\varphi}''$. There exist two MISs $S_u,S_{u'}$ of $G_{\varphi}''$ which intersect $\set{u,u',v,v'}$ respectively in $\set{u}$ and $\set{u'}$.
\label{le:use_gjs}
\end{lemma}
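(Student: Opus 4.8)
The plan is to obtain both $S_u$ and $S_{u'}$ by lightly editing the two ``model'' independent sets already used in the proof of Lemma~\ref{le:first_step_reduction}, after recording two structural facts about the embedding of $G_{\varphi}''$. Recall from that proof that $\alpha(G_{\varphi}'')=m+c^*/2$, that the clique cover formed by the $m$ copies of $K_4$ on $\{v_j^1,v_j^2,v_j^3,\pi_j\}$ together with a perfect matching of each (even) variable cycle forces every MIS to contain exactly one vertex of each such $K_4$ and exactly one of the two MISs of each variable cycle, and that the only edges running between two distinct blocks of this cover join a triangle vertex $v_j^q$ to its unique neighbour on a variable cycle; moreover these inter-block edges form a matching, since two distinct cycle vertices never share a triangle neighbour. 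We will use two kinds of MISs already exhibited there: the \emph{default} MIS $D$, consisting of all pendants $\pi_1,\dots,\pi_m$ plus an arbitrary MIS of each variable cycle, and, for a triangle vertex $v_j^q$, the MIS $D_{v_j^q}$ obtained from $D$ by replacing $\pi_j$ with $v_j^q$ and switching the variable cycle attached to $v_j^q$ to its unique MIS avoiding the neighbour of $v_j^q$.

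Two observations about the drawing. First, every crossing uses at least one pendant edge, and a pendant edge $v_j^q\pi_j$ has both endpoints inside the ``clause block'' $T_j\cup\{\pi_j\}$, while a $T$-edge $v_{j'}^{p'}v_{j'}^{q'}$ has both endpoints inside $T_{j'}\cup\{\pi_{j'}\}$; hence the four endpoints $u,u',v,v'$ of a crossing are triangle or pendant vertices lying in at most two clause blocks, and these two blocks coincide exactly when $j=j'$. Second, positive clause gadgets and their pendants are drawn strictly above the $x$-axis and negative ones strictly below it, so the two edges of a crossing always belong to clauses of the same polarity; consequently all triangle vertices among $u,u',v,v'$ are literals of a single sign, and whenever two of them are literals of the same variable $x_i$ they are attached to two cycle vertices lying in the \emph{same} one of the two MISs of the cycle of $x_i$ (all ``top'' vertices $\bar{x}_i^{\bullet}$ in one, all ``bottom'' vertices $x_i^{\bullet}$ in the other).

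Given a crossing and a target endpoint $w$, I would build an MIS $S_w$ with $S_w\cap\{u,u',v,v'\}=\{w\}$ as follows. If $j=j'$, all four endpoints lie in one $K_4$, so any MIS containing $w$ avoids the other three: take $D$ if $w$ is a pendant and $D_w$ if $w$ is a triangle vertex. Otherwise there are exactly two clause blocks. If $w$ is a pendant, start from $D$, which already contains no triangle vertex; if some other endpoint is a pendant $\pi_{j'}$ (possible only in Type~B), evict it and insert one of the other two triangle vertices of its block — which is then not an endpoint — switching that vertex's variable cycle accordingly. If $w$ is a triangle vertex of, say, block $T_j\cup\{\pi_j\}$, start from $D_w$, which contains no triangle vertex other than $w$ and does not contain $\pi_j$; if some other endpoint is a pendant $\pi_{j'}$, evict it and insert a non-endpoint triangle vertex of its block, again switching its cycle. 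Each resulting set meets every clique of the cover in exactly one vertex, so it is an MIS once one checks that the at most two imposed cycle constraints are mutually compatible. Applying this with $w=u$ and with $w=u'$ yields $S_u$ and $S_{u'}$; together with Proposition~\ref{prop:garey-johnson} this is what is needed afterwards to planarise $G_{\varphi}''$.

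The delicate point is the last compatibility claim. When the target forces a triangle vertex to be chosen in \emph{both} blocks at once — which happens when picking a $T$-edge endpoint of a Type~A crossing, or picking a triangle endpoint while evicting a pendant endpoint in a Type~B crossing — we must simultaneously keep out the two (distinct) cycle-neighbours of the chosen triangle vertices, and these may belong to the same variable cycle. This is exactly where the monotonicity of the clauses and the above/below placement of positive/negative gadgets come in: both clauses then have the same polarity, so both neighbours lie in the same one of the two MISs of that cycle, and choosing the other MIS of the cycle excludes both. When the two constrained cycles are distinct, the constraints are trivially independent, so the construction always goes through.
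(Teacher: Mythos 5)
Your proof is correct and follows essentially the same route as the paper's: your case split (one clause block versus two, pendant target versus triangle target) is a reorganization of the paper's Type~A/Type~B analysis, and the edited sets you produce from $D$ and $D_{v_j^q}$ are exactly the MISs the paper exhibits directly (e.g.\ $v_j^q$, $v_{j'}^{p'}$, and all pendants $\pi_h$ with $h\neq j,j'$ in the Type~B case). The key compatibility point — that monotonicity plus the above/below placement forces crossing edges to involve same-sign literals, whose cycle attachments lie in the same parity class — is also the paper's argument, which you have merely spelled out in more detail.
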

\begin{proof}
We distinguish two cases, depending on the type of crossing.

\textit{Type A}. Let $uu' = v_j^q\pi_j$ and $vv' = v_{j'}^{p'}v_{j'}^{q'}$. Let $S_u$ be the MIS containing $v_j^q$ with all pendant vertices $\pi_h$, $h\neq j$ of other clauses (and $c^*/2$ cycle vertices selected properly). Let $S_{u'}$ be the MIS containing all pendant vertices (plus $c^*/2$ cycle vertices).

\textit{Type B}. Let $uu' = v_j^q\pi_j$ and $vv' = v_{j'}^{q'}\pi_{j'}$. We fix some $p' \in \set{1,2,3}$, $p' \neq q'$. Let $S_u$ be the MIS we finally obtain by picking up $v_j^q$, $v_{j'}^{p'}$, and all pendant vertices $\pi_h$, $h \neq j,j'$. There is no assignation conflict between $v_j^q$ and $v_{j'}^{p'}$ since only pendant edges involving literals of the same sign can cross each other with the monotone rectilinear representation. Similarly, let $S_{u'}$ be the MIS we obtain by taking $v_{j'}^{p'}$ and all pendant vertices $\pi_h$, $h \neq j'$, in particular $u' = \pi_j$.
%\qed
\end{proof}

As a consequence of Lemma~\ref{le:use_gjs} together with Proposition~\ref{prop:garey-johnson}, one can replace each crossing of the embedding of $G_{\varphi}''$ by a gadget $\gjs$ without altering its $1$-extendability. The graph obtained is thus planar and has maximum degree 6 (which is the maximum degree of graph $\gjs$). Then, we apply transformation $(T_3)$ with $\Delta = 6$ to decrease its maximum degree. Finally, we obtain graph $G_{\varphi}'$, which is planar and has maximum degree 3.

\begin{lemma}
Graph $G_{\varphi}'$ is 1-extendable.
\label{le:second_step_reduction}
\end{lemma}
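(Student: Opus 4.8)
The plan is to obtain $G_{\varphi}'$ by chaining the results already established, so that its $1$-extendability is essentially inherited from that of $G_{\varphi}''$. Recall that $G_{\varphi}'$ is produced in two stages from $G_{\varphi}''$: first, every crossing of the fixed (non-planar) embedding of $G_{\varphi}''$ is replaced by a copy of $\gjs$, yielding an intermediate planar graph, call it $\hat{G}$, of maximum degree $6$; second, transformation $(T_3)$ is applied to $\hat{G}$ with $\Delta = 6$, producing the planar subcubic graph $G_{\varphi}'$.

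First I would establish that $\hat{G}$ is $1$-extendable. By Lemma~\ref{le:first_step_reduction}, $G_{\varphi}''$ is $1$-extendable. Every crossing of its embedding is of Type~A or Type~B, and Lemma~\ref{le:use_gjs} provides, for each such crossing $\set{uu',vv'}$, two maximum independent sets $S_u$ and $S_{u'}$ of $G_{\varphi}''$ hitting $\set{u,u',v,v'}$ in exactly $\set{u}$ and $\set{u'}$ respectively; these are precisely the hypotheses demanded by Proposition~\ref{prop:garey-johnson}. Proposition~\ref{prop:garey-johnson} then lets us replace the crossings by GJS-gadgets one at a time without disturbing $1$-extendability (the verification that its hypotheses survive each step is addressed below), so $\hat{G}$ is $1$-extendable. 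Since each replacement only deletes the two crossing edges and glues in a planar gadget with four new incident edges, $\hat{G}$ is planar and its maximum degree equals that of $\gjs$, namely $6$.

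Finally I would apply Lemma~\ref{le:t3} with $\Delta = 6$: since $\hat{G}$ is $1$-extendable, its image $\hat{G}_{(3)} = G_{\varphi}'$ is $1$-extendable, and by the remark following that lemma the function $\rho$ can be chosen so that $G_{\varphi}'$ remains planar; its maximum degree is $3$ by construction. This yields the claim.

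The delicate point --- and the main obstacle --- lies in iterating Proposition~\ref{prop:garey-johnson}, which is stated for a single edge $uu'$, across \emph{all} crossings of $G_{\varphi}''$: once a gadget has been substituted for one crossing, one must still exhibit the witnessing maximum independent sets for every remaining crossing in the already-modified graph. Here the explicit shape of the sets built in Lemma~\ref{le:use_gjs} is what saves us: they consist of pendant vertices $\pi_h$, at most two triangle vertices, and a fixed MIS inside each variable cycle, and by Lemma~\ref{le:eq_mis} (equivalently, the table in Fig.~\ref{subfig:gadget_sizes}) any such set extends through an already-inserted copy of $\gjs$ by exactly $9$ vertices, whichever pair of its endpoints the set happens to use, and without changing the portion of the set that matters for the next crossing. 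Processing the crossings one at a time and checking that this property persists at each step makes the iteration rigorous and completes the proof.
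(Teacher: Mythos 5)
Your proof follows exactly the paper's own (one-line) argument: apply Lemma~\ref{le:use_gjs} to verify the hypotheses of Proposition~\ref{prop:garey-johnson}, replace the crossings by GJS-gadgets, then apply Lemma~\ref{le:t3}. Your additional discussion of why the proposition can be iterated over all crossings is a point the paper leaves implicit, but it is an elaboration of the same approach rather than a different one.
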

\begin{proof}
First, apply Lemma~\ref{le:use_gjs} together with Proposition~\ref{prop:garey-johnson}, second Lemma~\ref{le:t3}.
%\qed
\end{proof}

According to Transformation $(T_3)$, each vertex $u$ of $G_{\varphi}''$ is transformed into an induced path $P_u$ of length 11 in $G_{\varphi}'$. There is a natural correspondence between the MISs of $G_{\varphi}''$ and those of $G_{\varphi}'$. Given an MIS $S''$ of $G_{\varphi}''$, one can produce an MIS $S'$ of $G_{\varphi}'$ such that the paths $P_u$ with $\card{P_u \cap S'} = 6$ represent the vertices $u \in S''$. Conversely, let $S'$ be some MIS of $G'$: picking up the vertices $u$ such that $\card{P_u \cap S'} = 6$ produces an MIS of $G_{\varphi}''$. In the remainder of the section, when we say that a vertex $u \in G_{\varphi}''$ belongs to an MIS $S'$ of $G_{\varphi}'$, we actually mean that $\card{P_u \cap S'} = 6$.

\textbf{Construction of} $G_{\varphi}$. We are now ready to describe the final graph $G_{\varphi}$ which consists in a small extension of $G_{\varphi}'$. We add a cycle $z_1,\bar{z}_1,\ldots,z_m,\bar{z}_m$ of size $2m$ to the graph $G_{\varphi}'$. Let $Z = \set{z_1,z_2,\ldots,z_m}$ and $\bar{Z} = \set{\bar{z}_1,\bar{z}_2,\ldots,\bar{z}_m}$. We connect $z_j$ to $\pi_j$ for every $1\le j\le m$ - concretely, as $\pi_j$ became an induced path via transformation $(T_3)$, we add an edge between $z_j$ and a vertex of $P_{\pi_j}$. The graph obtained is $G_{\varphi}$ and its size is polynomial in $|\varphi|$. The graph $G_{\varphi}$ is planar: consider the embedding of $G_{\varphi}'$, draw the cycle $Z \cup \bar{Z}$ as a rectangle surrounding it and such that all edges $\pi_jz_j$ are vertical. Its maximum degree is 3. We are now ready to prove our result.

\begin{theorem}
\textsc{$1$-Extendability}  is NP-hard, even on planar graphs of maximum degree $3$.
\label{th:hard_planar}
\end{theorem}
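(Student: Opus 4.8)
The plan is to show that $\varphi$ is satisfiable if and only if $G_\varphi$ is $1$-extendable. Combined with the facts already noted --- that $G_\varphi$ is planar, subcubic, and obtained from $\varphi$ in polynomial time --- and with the NP-hardness of \textsc{PMR 3SAT}~\cite{BeKh10}, this proves the theorem. For $1\le j\le m$, write $w_j$ for the vertex of the path $P_{\pi_j}$ to which $z_j$ is joined; by construction $w_j$ may be chosen to have odd index in $P_{\pi_j}$ (so that $w_j\in Q_{\pi_j}$) and degree at most $2$ in $G_\varphi'$, keeping $G_\varphi$ planar and of maximum degree $3$.

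I would first compute $\alpha(G_\varphi)$. Since $Z\cup\bar Z$ induces a cycle $C_{2m}$ on a vertex set disjoint from $V(G_\varphi')$, any independent set $S$ of $G_\varphi$ satisfies $|S|=|S\cap(Z\cup\bar Z)|+|S\cap V(G_\varphi')|\le m+\alpha(G_\varphi')$; conversely, no vertex of $\bar Z$ has a neighbour in $G_\varphi'$, so $\bar Z\cup S'$ is independent for every MIS $S'$ of $G_\varphi'$. Hence $\alpha(G_\varphi)=m+\alpha(G_\varphi')$, and every MIS of $G_\varphi$ meets $Z\cup\bar Z$ in a maximum independent set of $C_{2m}$, i.e.\ in $Z$ or in $\bar Z$, and meets $V(G_\varphi')$ in an MIS of $G_\varphi'$. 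Using the $1$-extendability of $G_\varphi'$ (Lemma~\ref{le:second_step_reduction}), every vertex of $V(G_\varphi')\cup\bar Z$ already lies in an MIS of $G_\varphi$, namely in $\bar Z\cup S'$ for a suitable MIS $S'$ of $G_\varphi'$; thus $G_\varphi$ is $1$-extendable if and only if every $z_j$ lies in an MIS of $G_\varphi$. Now if $z_j$ lies in an MIS $S$, then $S\cap(Z\cup\bar Z)$ is a maximum independent set of $C_{2m}$ containing $z_j$, so it equals $Z$; hence $Z\subseteq S$ and $S$ avoids every $w_k$. Conversely, if $G_\varphi'$ has an MIS avoiding every $w_k$, then adding $Z$ to it yields an MIS of $G_\varphi$ containing all of $Z$. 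Therefore $G_\varphi$ is $1$-extendable if and only if $G_\varphi'$ admits an MIS avoiding all of $w_1,\dots,w_m$. Finally, $P_{\pi_k}$ is a path on $11$ vertices whose unique maximum independent set is the set $Q_{\pi_k}$ of its odd-index vertices, and in the correspondence of Lemma~\ref{le:t3} between MISs of $G_\varphi'$ and MISs of $G_\varphi''$ a path $P_{\pi_k}$ contributes exactly $Q_{\pi_k}$ precisely when $\pi_k$ lies in the associated MIS of $G_\varphi''$ (equivalently, $|P_{\pi_k}\cap S'|=6$). Since $w_k\in Q_{\pi_k}$, it follows that $G_\varphi'$ has an MIS avoiding all $w_k$ if and only if $G_\varphi''$ has an MIS avoiding all of $\pi_1,\dots,\pi_m$.

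It remains to link this property to satisfiability of $\varphi$. By the clique cover of size $\alpha(G_\varphi'')=m+c^*/2$ exhibited in the proof of Lemma~\ref{le:first_step_reduction} --- the $m$ cliques $\{v_j^1,v_j^2,v_j^3,\pi_j\}$ together with a perfect matching inside each variable cycle --- any MIS of $G_\varphi''$ contains exactly one vertex of each clause clique and, on each variable cycle, one of its two maximum independent sets, which we read as a truth value of the corresponding variable. An MIS that avoids all $\pi_j$ therefore selects one vertex $v_j^{q_j}$ per clause; the unique neighbour of $v_j^{q_j}$ outside its clique is a vertex of the cycle of the variable underlying the literal $\ell_j^{q_j}$, and it must be excluded, which forces the parity of that cycle. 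A short case analysis (positive versus negative clause, top versus bottom of the rectangle) shows that this is exactly the condition that $\ell_j^{q_j}$ receives a prescribed truth value under the encoded assignment. Hence a $\pi_j$-avoiding MIS of $G_\varphi''$ exists if and only if there is an assignment falsifying at least one literal in every clause of $\varphi$; and since $\varphi$ is monotone, complementing the assignment turns a positive clause with a false literal into a positive clause with a true literal, and symmetrically for negative clauses, so this is equivalent to $\varphi$ being satisfiable. For the converse implication one runs this backwards: from a satisfying assignment one picks the corresponding cycle parities together with one suitable $v_j^{q_j}$ in each clause, and checks independence --- the only edges leaving a clause triangle go to $\pi_j$, which is not selected, or to the single excluded cycle vertex --- obtaining an independent set of size $m+c^*/2$ that avoids all $\pi_j$.

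The technical heart is the last paragraph: reading a truth assignment off the parities of the variable cycles and the chosen clause vertices, and observing that, \textsc{PMR 3SAT} being monotone, a $\pi_j$-avoiding MIS of $G_\varphi''$ encodes the complement of a satisfying assignment. Everything else is bookkeeping --- one must only be careful to place $w_j$ at an odd-index vertex of $P_{\pi_j}$ so that omitting it corresponds to not choosing $\pi_j$, and to verify that appending the cycle $Z\cup\bar Z$ preserves planarity and the degree bound.
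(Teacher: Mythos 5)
Your proof is correct and follows essentially the same route as the paper: compute $\alpha(G_{\varphi}) = \alpha(G_{\varphi}') + m$, reduce $1$-extendability of $G_{\varphi}$ to the existence of an MIS containing $Z$, translate that (through the $(T_3)$-paths and the attachment vertices $w_j \in Q_{\pi_j}$) into the existence of a pendant-avoiding MIS of $G_{\varphi}''$, and finally relate this to satisfiability of $\varphi$. The only divergence is the truth-value encoding in the last step: you read a selected clause vertex as a \emph{falsified} literal and use monotonicity to complement the assignment, whereas the paper reads it as a satisfied one; with the adjacencies exactly as defined in the construction (positive-literal vertices attached to the $\bar{x}_i^s$ side and the $\bar{x}_i$-parity encoding \textsf{True}), your reading is the internally consistent one, so your complementation step in fact repairs a small sign slip in the paper's own argument.
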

\begin{proof}
We begin with the proof that $\alpha(G_{\varphi})=\alpha(G_{\varphi}') + m$. The value $\alpha(G_{\varphi}') + m$ is clearly an upper bound of the size of independent sets in $G_{\varphi}$ as they cannot contain $m+1$ elements of $Z \cup \bar{Z}$. Moreover, the union of any MIS of $G_{\varphi}'$ with $\bar{Z}$ is an independent set of this size. 
As a consequence, for all $S \subseteq V(G_{\varphi})$, $S$ is an MIS of $G_{\varphi}$ iff $S \cap V(G'_{\varphi})$ is an MIS of $G'_{\varphi}$ and $S \cap (Z \cup \bar{Z})$ is an MIS of $G[Z \cup \bar{Z}]$.
As $G_{\varphi}'$ is 1-extendable (Lemma~\ref{le:second_step_reduction}), we know that all vertices of $V(G_{\varphi}')$ belong to some MIS of $G_{\varphi}$.
Moreover $G[Z \cup \bar{Z}]$ contains exactly two MIS: $Z$ and $\bar{Z}$.  As said previously, every vertex of $\bar{Z}$ is contained in an MIS of $G_{\varphi}$. Hence, $G_{\varphi}$ is $1$-extendable iff it admits an MIS  containing $Z$.

Assume $\varphi$ is satisfiable: there is an assignment $A$ of variables which satisfies $\varphi$. We describe an MIS of $G''_{\varphi}$, which can easily be transformed into an MIS of $G'_{\varphi}$. If $x_i$'s assignment via $A$ is \textsf{True}, we pick vertices $\bar{x}_i^1,\bar{x}_i^2,\ldots$ of its variable cycle, otherwise we pick the other parity $x_i^1,x_i^2,\ldots$. Therefore, if $x_i$'s assignment is \textsf{True}, we cannot pick any triangle vertex representing $\neg x_i$ as it is in conflict with one $\bar{x}_i^s$. This assignment $A$ is such that at least one literal of each clause is assigned to \textsf{True}. So, for each clause $C_j$, we select arbitrary one of its literals $\ell_j^q$ which is positively assigned with $A$ and pick the vertex $v_j^q$. The chosen vertices form an independent set $S_A''$ of size $\alpha(G_{\varphi}'')$ in $G_{\varphi}''$. Observe that $S_A''$ does not contain any pendant vertex. 
This set $S_A''$ can be transformed into a corresponding MIS $S_A'$ in $G_{\varphi}'$ which contains 6 vertices per path $P_u$ if $u \in S_A''$. Thus, no pendant $\pi_j$ belongs to $S_A'$. As the neighborhood of $Z$ in $G_{\varphi}$ is made up only of pendants $\pi_j$, $S_A' \cup Z$ is an MIS of $G_{\varphi}$ and the graph is 1-extendable.

Suppose now that $G_{\varphi}$ is 1-extendable. So there is an MIS $Z \cup S'$ of $G_{\varphi}$, with $S' \subseteq V(G'_{\varphi})$. We know that, from $S'$, we can retrieve an MIS $S''$ of $G_{\varphi}''$ which contains vertices $u$ such that $\card{P_u \cap S'} = 6$. The set $S''$ cannot contain pendant vertices as they are all ``adjacent'' to $Z$ in $G_{\varphi}$. We propose the following variable assignment $A$. Half of the vertices of the variable cycles must be in $S''$, otherwise it would not be an MIS. Consequently, if $x_i^1,x_i^2,\ldots$ belong to $S''$, we assign $x_i$ to \textsf{False}, otherwise to \textsf{True}. Let us check that all clauses are satisfied. The set $S''$ contains exactly one vertex per triangle representing $C_j$. This vertex must be in accordance with the parity of the variable cycle which is in $S''$: for example, if $\ell_j^q = \neg x_i$ and $v_j^q \in S''$, then we have $x_i^1,x_i^2,\ldots \in S''$, otherwise $S''$ would not be independent. Hence, $S''$ cannot contain two vertices $v_j^q$ and $v_{j'}^{q'}$ such that $\ell_j^q = \neg \ell_{j'}^{q'}$. In summary, assignment $A$ satisfies $\varphi$.
%\qed 
\end{proof}

As for \textsc{Maximum Independent Set}, the problem \textsc{$1$-Extendability} is NP-hard on subcubic planar graphs. If we put aside the degree criterion, one can see that this proof also works if we do not use transformation $(T_3)$. However, it stays relatively tricky, while the NP-hardness of \textsc{Maximum Independent Set} for planar graphs consists only in replacing each crossing of an arbitrary embedding of $G$ by the GJS-gadget. Unfortunately, as mentioned in Section~\ref{subsec:gadget}, such reduction does not work for \textsc{$1$-Extendability}. We wonder whether a new gadget, certainly not so much different from $\gjs$, could be designed to make this reduction simpler.

\subsection{Unit disk graphs} \label{subsec:unit_disk}

Unit disk graphs~\cite{ClCoJo90} stand as a natural model for wireless networks. Indeed, they are defined as the intersection graph of $n$ equal-sized disks in the plane, which can represent Wi-Fi access points with the same radio range. There exists a way to represent subdivided planar graphs with degree at most 4 as unit disk graphs, based on a result from Valiant~\cite{Va81}.

\begin{theorem}[\cite{Va81}]
A planar graph $G$ with maximum degree 4 can be embedded in the plane inside a $O(|V(G)|)$-sized area in such a way that any vertex is at integer coordinates and each edge is made up of vertical and horizontal line segments.
% $\emph{x}=i$ or $\emph{y}=j$, with $i,j \in \mathbb{N}$.
\end{theorem}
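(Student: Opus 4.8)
The plan is to prove the stronger and well-known fact that every planar graph $G$ with $\Delta(G)\le 4$ admits an \emph{orthogonal grid drawing} on an $O(n)\times O(n)$ grid, where $n=|V(G)|$: a drawing in which every vertex is placed at a point of $\Z^2$ and every edge is a polyline whose segments are all horizontal or vertical. The degree hypothesis is exactly what is needed here, since the plane grid offers precisely four directions at each point (north, south, east, west), so the at most four edges incident to a vertex can always be routed out of it in a way compatible with the cyclic order prescribed by a fixed planar embedding of $G$.

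I would first dispose of the easy cases. If $G$ is disconnected, draw each component $G_j$ in its own box of dimensions $O(|V(G_j)|)\times O(|V(G_j)|)$ and place the boxes side by side, so the total box has both side lengths $O(n)$; and if $G$ has a vertex of degree less than $4$ the construction only becomes easier. So assume $G$ is connected. Fix a combinatorial planar embedding together with a vertex ordering $v_1,\dots,v_n$ compatible with it (an $st$-ordering when $G$ is $2$-connected; in general one works blockwise along the block-cut tree, or uses a canonical ordering of a planar triangulation of $G$ and afterwards discards the added edges). Build the drawing by inserting $v_1,\dots,v_n$ one at a time, maintaining after step $i$ the invariant that the current picture is a valid orthogonal drawing of $G[\{v_1,\dots,v_i\}]$ contained in an axis-parallel rectangle $R_i$, and that every edge joining $\{v_1,\dots,v_i\}$ to a not-yet-inserted vertex leaves $R_i$ through its top side, the left-to-right order of these ``dangling stubs'' agreeing with the embedding. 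To insert $v_{i+1}$: it has at most three already-drawn neighbours (its down-edges) and its remaining incident edges (at most three) go up; place $v_{i+1}$ on a fresh row above $R_i$, connect each down-edge to the corresponding stub by a single-bend L-shaped path, widening $R_i$ by a bounded amount if needed so that these new routes are pairwise disjoint and still planar, and extend the up-edges upward as new stubs. The vertex $v_n$ is attached symmetrically at the top, completing all remaining stubs.

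For the size bound, each of the $n$ insertion steps raises the height of the bounding rectangle by $O(1)$ and its width by $O(1)$, so the final drawing fits inside a box of dimensions $O(n)\times O(n)$; in particular it occupies an area polynomial in $n$, with each side length $O(n)$, which is the sense intended in the statement. By construction all vertex coordinates are integers and every edge consists of horizontal and vertical segments, which is exactly the conclusion; if one insists on unit-length segments one subdivides the grid by a constant factor, changing the dimensions only by a constant.

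The real work is the bookkeeping that keeps the partial drawing simultaneously planar, orthogonal, and faithful to the prescribed rotation at every vertex, while guaranteeing that each step grows the picture by only a constant amount. The two delicate points are: obtaining an insertion order that interacts well with the embedding without having to raise $\Delta(G)$ (handled via $st$-orderings/canonical orderings and blockwise processing), and verifying that a constant widening always provides enough room to route the at most three new down-edges without crossings (handled by the shift technique of de Fraysseix--Pach--Pollack adapted to orthogonal layouts, as in the constructions of Tamassia--Tollis and Biedl--Kant). These steps are standard, but are where all the care goes; since the statement is quoted from Valiant~\cite{Va81}, I would cite them rather than redo them from scratch.
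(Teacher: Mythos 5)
The paper offers no proof of this statement at all: it is imported verbatim (essentially in the phrasing of Clark--Colbourn--Johnson) from Valiant's 1981 paper on VLSI layouts, and the citation is the entire justification. So there is no in-paper argument to compare yours against; what you have written is a self-contained sketch where the authors supply none. Your route --- fix a planar embedding, order the vertices by an $st$-ordering (or canonical ordering, handled blockwise over the block--cut tree), insert them one at a time while maintaining a partial orthogonal drawing whose dangling up-edges leave the current bounding box through its top side in embedding order, and route each new vertex's at most three down-edges to their stubs with $O(1)$ extra rows and columns per step --- is the standard Tamassia--Tollis/Biedl--Kant construction, and it is a correct strategy; the degree-$4$ hypothesis is used exactly where you say it is (four grid directions per vertex), and deferring the shift/bookkeeping lemmas to the literature is reasonable given that the paper itself only cites the result. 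One point deserves care: your construction yields a box of side length $O(n)$, hence \emph{area} $O(n^2)$, whereas the statement as quoted says ``$O(|V(G)|)$-sized area.'' You flag this and read the statement as bounding the side length, which is the only reading consistent with known $\Omega(n\log n)$ area lower bounds for some bounded-degree planar graphs; in any case, for the only use the paper makes of this theorem (producing a polynomial-size unit disk representation via subdivision in Section~\ref{subsec:unit_disk}), any polynomial bound on the layout suffices, so the discrepancy is harmless.
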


Consider a subcubic planar graph $G$ with such an embedding. We subdivide the edges of $G$ such that (i) a vertex is placed at each turn of every edge (ii) each segment (between two turns) is subdivided at least once and (iii) the distance between two adjacent vertices is at most half of the length of the shortest segment. The graph obtained - say $G^{\mbox{\scriptsize{UD}}}$ - is unit disk. Indeed, each vertex - admitting at most four neighbors, each of them placed either in the two horizontal or vertical directions - can be represented as a disk $\mathcal{D}$ with center $d$ such that:

\begin{itemize}
\item it intersects a disk $\mathcal{D}_{\mbox{\scriptsize{N}}}$ (resp. $\mathcal{D}_{\mbox{\scriptsize{S}}}$, $\mathcal{D}_{\mbox{\scriptsize{W}}}$, $\mathcal{D}_{\mbox{\scriptsize{E}}}$) with center $d_{\mbox{\scriptsize{N}}}$ (resp. $d_{\mbox{\scriptsize{S}}}$, $d_{\mbox{\scriptsize{W}}}$, $d_{\mbox{\scriptsize{E}}}$) where arc $\overrightarrow{dd_{\mbox{\scriptsize{N}}}}$ (resp. $\overrightarrow{dd_{\mbox{\scriptsize{S}}}}$, $\overrightarrow{dd_{\mbox{\scriptsize{W}}}}$, $\overrightarrow{dd_{\mbox{\scriptsize{E}}}}$) indicates the North direction (resp. South, West, East)
\item disks $\mathcal{D}_{\mbox{\scriptsize{N}}}$, $\mathcal{D}_{\mbox{\scriptsize{S}}}$, $\mathcal{D}_{\mbox{\scriptsize{W}}}$, $\mathcal{D}_{\mbox{\scriptsize{E}}}$ do not intersect each other
\end{itemize}

In summary, given a subcubic planar graph $G$, one can produce in polynomial-time a unit disk graph $G^{\mbox{\scriptsize{UD}}}$ thanks to transformation $(T_2)$. According to Lemma~\ref{le:t2}, $G$ is 1-extendable iff $G^{\mbox{\scriptsize{UD}}}$ is 1-extendable.

\begin{theorem}
\textsc{$1$-Extendability}  is NP-hard, even on unit disk graphs.
\label{th:unit_disk}
\end{theorem}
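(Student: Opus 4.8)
The plan is simply to package together what has already been established in the two paragraphs preceding the statement. The reduction starts from \textsc{$1$-Extendability} on subcubic planar graphs, which is NP-hard by Theorem~\ref{th:hard_planar}. Given such a graph $G$, I would first invoke Valiant's theorem to obtain a rectilinear grid embedding of $G$ inside an $O(|V(G)|)$-sized area, with every vertex at integer coordinates and every edge a union of horizontal and vertical segments. Then I would subdivide each edge according to the three geometric rules already stated: place a vertex at every turn, subdivide each straight segment at least once, and keep consecutive vertices within half the length of the shortest segment. The only additional bookkeeping is to make the \emph{total} number of subdivisions of each original edge of $G$ \emph{even}; this costs nothing, since if an edge ends up subdivided an odd number of times one may simply subdivide one of its segments once more, which only refines the geometric picture. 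Call the resulting graph $G^{\mbox{\scriptsize{UD}}}$.

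Next I would check the two properties that make the reduction work. First, $G^{\mbox{\scriptsize{UD}}}$ is a unit disk graph: every vertex has at most four neighbours, occupying at most four of the cardinal directions around it, so one can realise it by a disk $\mathcal{D}$ whose radius is tuned so that $\mathcal{D}$ meets exactly the disks of its (at most four) rectilinear neighbours, while the ``disjoint cardinal disks'' condition together with the spacing rule forbids any intersection between disks of non-adjacent vertices. Second, since each edge of $G$ has been subdivided an even number of times, $G^{\mbox{\scriptsize{UD}}}$ is precisely a graph obtained from $G$ by transformation $(T_2)$, i.e. of the form $G_{(2)}$; hence Lemma~\ref{le:t2} yields that $G$ is $1$-extendable if and only if $G^{\mbox{\scriptsize{UD}}}$ is $1$-extendable. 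As the whole construction runs in time polynomial in $|V(G)|$, the NP-hardness of Theorem~\ref{th:hard_planar} transfers to unit disk graphs, proving Theorem~\ref{th:unit_disk}.

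The only genuinely delicate point is verifying that the geometric spacing conditions and the parity condition on the number of subdivisions are jointly satisfiable, and that the chosen disk representation produces no spurious edges between non-adjacent vertices. Both are entirely routine in Valiant-style reductions, so I do not expect a real obstacle: the substance of the argument is the $1$-extendability-preservation of $(T_2)$, which is Lemma~\ref{le:t2}, and that has already been proved.
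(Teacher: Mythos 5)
Your proposal is correct and follows essentially the same route as the paper: reduce from \textsc{$1$-Extendability} on subcubic planar graphs (Theorem~\ref{th:hard_planar}), use Valiant's rectilinear embedding, subdivide edges to obtain a unit disk realisation, and invoke transformation $(T_2)$ via Lemma~\ref{le:t2} to preserve $1$-extendability. Your explicit remark that each edge must end up subdivided an \emph{even} number of times (adding one extra subdivision where needed) is a detail the paper leaves implicit in its appeal to $(T_2)$, so if anything your write-up is slightly more careful on that point.
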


\section{Parameterized algorithms}\label{sec:param}

In this section we study a parameterized version of the $1$-extendability problem:

\defparproblem{param-$1$-Extendability}{A graph $G$, an integer $k$}{$k$}{Does every vertex of $G$ belong to an independent set of size $k$?}

We first show that the problem remains $W[1]$-hard by a reduction from \textsc{Multicolored Independent Set}. We then investigate the existence of polynomial kernels in restricted graph classes.

\begin{theorem}\label{thm:w1}
\textsc{param-$1$-Extendability} is $W[1]$-hard.
\end{theorem}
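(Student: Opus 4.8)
The plan is to give a parameterized reduction from \textsc{$k$-Multicolored Independent Set}, which is well-known to be $W[1]$-hard \cite{CyFoKoLoMaPiPiSa15}, reusing almost verbatim the graph $H$ built in the proof of Theorem~\ref{th:gap}. Given an instance $G$ of \textsc{$k$-Multicolored Independent Set} with clique parts $C_1,\dots,C_k$, I would let $H$ be exactly that graph: two copies $G_1,G_2$ of $G$, two (internally independent) sets $P^1=\{\pi_1^1,\dots,\pi_1^k\}$ and $P^2=\{\pi_2^1,\dots,\pi_2^k\}$, all edges between $P^1$ and $P^2$, and $\pi_i^j$ joined to every vertex of $C_i^j$. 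The output is the instance $(H,2k)$ of \textsc{param-$1$-Extendability}. This is clearly polynomial-time computable and the new parameter $2k$ depends only on $k$, so it is a legitimate parameterized reduction; all the content is in the correctness argument.

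The two facts I would quote from (the proof of) Theorem~\ref{th:gap} are that $H$ is $1$-extendable and that $\alpha(H)=k+\alpha(G)$. To these I would add the elementary remark that, since each $C_i$ is a clique, every independent set of $G$ contains at most one vertex of each $C_i$, so $\alpha(G)\le k$, with equality precisely when $G$ admits a multicolored independent set. Combining, $\alpha(H)\le 2k$ in all cases, and $\alpha(H)=2k$ if and only if $G$ admits a multicolored independent set.

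With this in hand the equivalence is short. If $G$ admits a multicolored independent set, then $\alpha(H)=2k$, and since $H$ is $1$-extendable every vertex of $H$ lies in a maximum independent set, that is, in an independent set of size $2k$; so $(H,2k)$ is a yes-instance. Conversely, if $G$ has no multicolored independent set, then $\alpha(H)<2k$, so $H$ has no independent set of size $2k$ at all, and in particular not every vertex of $H$ belongs to one; so $(H,2k)$ is a no-instance. Hardness of \textsc{param-$1$-Extendability} then follows.

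I expect the only genuine obstacle to be the design of the reduction itself; the verification above is routine once the right $H$ is identified. The reason a single copy of $G$ together with the vertices $\pi^j$ does not suffice is that in the yes-case one must cover \emph{every} vertex, including each padding vertex $\pi_i^j$, by an independent set of size $2k$: the second copy of $G$ supplies a size-$k$ independent set $A_{3-i}$ so that $P^i\cup A_{3-i}$ has size $2k$ and covers all of $P^i$, while symmetrically the sets $\{x\}\cup\{\pi_i^{j'}:j'\neq j\}\cup A_{3-i}$ cover the two copies of $G$ (this is exactly the $1$-extendability argument of Theorem~\ref{th:gap}). Meanwhile the clique cover $\{C_i^j\cup\{\pi_i^j\}\}_{j\in[k]}$ of each side, combined with the complete bipartite graph between $P^1$ and $P^2$, is what pins $\alpha(H)$ down to $k+\alpha(G)\le 2k$, guaranteeing that the target size is unreachable in the no-case.
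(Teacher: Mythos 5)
Your reduction is correct: the two facts you import from the proof of Theorem~\ref{th:gap} (that $H$ is $1$-extendable and that $\alpha(H)=k+\alpha(G)$) do hold, the observation $\alpha(G)\le k$ with equality exactly when $G$ has a multicolored independent set is right, and the resulting equivalence ``$(H,2k)$ is a yes-instance iff $G$ has a multicolored independent set'' goes through in both directions, with $2k$ a legitimate new parameter. However, the paper takes a lighter route. It also reduces from \textsc{Multicolored Independent Set}, but uses a single copy of $G$: it adds one pendant $\pi_i$ adjacent to all of $C_i$ for each $i$, one vertex $\omega$ adjacent to all the $\pi_i$, and one pendant $\pi_\omega$ adjacent to $\omega$, and asks for independent sets of size $k+1$. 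The mechanism is different from yours: in the paper's gadget, independent sets of size $k+1$ always exist and every vertex except $\omega$ is always covered by one, so the answer is encoded locally in whether the single vertex $\omega$ can be covered, which happens iff the non-neighborhood of $\omega$ (namely $G$ itself) contains an independent set of size $k$. In your gadget the answer is encoded globally: either $\alpha(H)=2k$ and $1$-extendability covers everyone, or $\alpha(H)<2k$ and nobody is covered. Your approach buys economy of ideas (one construction serves both Theorem~\ref{th:gap} and Theorem~\ref{thm:w1}, and correctness is a two-line consequence of already-proven facts), at the cost of a graph twice as large and of producing only ``degenerate'' no-instances in which no independent set of the target size exists at all; the paper's gadget is smaller, self-contained, and isolates the hardness in a single designated vertex.
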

\begin{proof}
We reduce from \textsc{Multicolored Independent Set}, where the input is a graph $G$ whose vertex set is partitioned into $k$ cliques $C_1$, $\dots$, $C_k$, and the goal is to find an independent set of size $k$. We add, for every $i \in [k]$, a pendant vertex $\pi_i$ adjacent to all vertices of $C_i$. We also add a vertex $\omega$ adjacent to $\pi_i$, $i \in [k]$, and a vertex $\pi_{\omega}$ adjacent to $\omega$ only. Let $G'$ be the obtained graph.
We claim that every vertex of $G'$ is contained in an independent set of size $k+1$ iff $G$ contains an independent set of size $k$.
If $G$ contains an independent set $S$ of size $k$, then:
\begin{itemize}
	\item for every $i \in [k]$, every $x \in C_i$, then $x$ together with $\{\pi_j: j \neq i\} \cup \{\pi_{\omega}\}$ is an independent set of size $k+1$
	\item $S$ together with $\omega$ is an independent set of size $k+1$.
	\item the set $\{\pi_i : i \in [k]\} \cup \{\pi_{\omega}\}$ is an independent set of size $k+1$.
\end{itemize}
Conversely, assume $\omega$ is in an independent set of size $k+1$. Then, since the set of non-neighbors of $\omega$ is $G$ itself, it implies that $G$ must contain an independent set of size $k$, which concludes the proof.
%\qed 
\end{proof}

By using Lemma~\ref{lem:1ext-to-alpha}, the problem is FPT in every hereditary graph class where \textsc{Maximum Independent Set} is FPT. Examples of such classes are planar graphs and triangle-free graphs (and more generally graphs excluding a clique of size $r$ as an induced subgraph, for every fixed $r \geqslant 3$) where, in addition, \textsc{Maximum Independent Set} admits a kernel of polynomial size. The reduction of Lemma~\ref{lem:1ext-to-alpha}, however, does not preserve polynomial kernels (although it naturally gives polynomial \textit{Turing} kernels).
% On the negative side, we proved in Section~\ref{sec:hardness} that \textsc{$1$-extendability} remains NP-hard in planar graphs, while Transformation (T2) (subdividing each edge twice) can be used to prove that it also remains NP-hard in graphs which do not admit $H$ as an induced subgraph, for every fixed $H$ different from a path or a subdivided star. 
 Hence, it is natural to ask whether \textsc{param-$1$-Extendability} admits a polynomial kernel in these classes. We answer positively to this question.

We say that a hereditary graph class $\C$ is \textit{\textsc{MIS}-$(c, t)$-friendly}, for two non-zero constants $c$ and $t$, if every graph of the class on $n$ vertices contains an independent set of size at least $t \cdot n^c$, and such an independent set can be found in polynomial-time.

\begin{theorem}\label{thm:polyker}
Let $\C$ be an \textit{\textsc{MIS}-$(c, t)$-friendly} class. \textsc{param-$1$-Extendability} on $\C$ admits a kernel with $O(k^{\frac{1}{c}+\frac{1}{c^2}})$ vertices. 
\end{theorem}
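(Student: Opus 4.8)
The plan is to combine Lemma~\ref{lem:1ext-to-alpha} with the sublinear-independence-number property defining the class. Recall first that necessarily $c\le 1$ (otherwise $t\cdot n^c>n\ge\alpha$ for large $n$). For $v\in V(G)$ write $\NN(v):=V(G)\setminus N(v)$ (so $v\in\NN(v)$), and fix the threshold $B:=\lceil (k/t)^{1/c}\rceil$. Partition $V(G)$ according to whether $|\NN(v)|\ge B$ or not, and let $U$ be the set of vertices with $|\NN(v)|<B$. If $|\NN(v)|\ge B$ then $G[\NN(v)]\in\C$ contains an independent set of size at least $t\,|\NN(v)|^c\ge tB^c\ge k$, so by Lemma~\ref{lem:1ext-to-alpha} the vertex $v$ lies in an independent set of $G$ of size $k$ \emph{unconditionally}. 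Hence $G$ is a yes-instance of \textsc{param-$1$-Extendability} if and only if $\alpha\bigl(G[\NN(u)]\bigr)\ge k$ for every $u\in U$; in particular only the vertices of $U$ matter.

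Next I would bound $|U|$ and the ``relevant'' vertex set $R:=\bigcup_{u\in U}\NN(u)$. Let $I$ be a maximum independent set of $G[U]$. For every $u\in I$, all the other vertices of $I$ are non-neighbours of $u$, so $I\subseteq\NN(u)$ and thus $|I|\le|\NN(u)|<B$ because $u\in U$. Since $G[U]\in\C$, friendliness gives $|I|=\alpha(G[U])\ge t\,|U|^c$, whence $|U|<(B/t)^{1/c}=O(k^{1/c^2})$. Consequently $|R|\le\sum_{u\in U}|\NN(u)|<|U|\cdot B=O\bigl(k^{1/c+1/c^2}\bigr)$.

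I would then take as kernel the graph $G^\star$ built from the induced subgraph $G[R]$ by adding a fresh independent set $J$ of size $k-1$, making every vertex of $J$ adjacent to every vertex of $U$, and (only when $R\setminus U=\varnothing$) adding one extra isolated vertex $w_0$; the parameter stays $k$. As $c\le 1$, the vertices $J\cup\{w_0\}$ contribute only $O(k)=O(k^{1/c+1/c^2})$ vertices, so $|V(G^\star)|=O(k^{1/c+1/c^2})$ and $G^\star$ is produced in polynomial time; if $U=\varnothing$ then every vertex of $G$ is unconditionally good, so I instead output a fixed yes-instance such as $k$ isolated vertices. For correctness, observe that for $u\in U$ we have $\NN(u)\subseteq R$ by definition of $R$ and $J\cap\NN(u)=\varnothing$, hence $V(G^\star)\setminus N_{G^\star}(u)=\NN(u)$ and $G^\star[\NN(u)]=G[\NN(u)]$; thus $u$ lies in an independent set of $G^\star$ of size $k$ iff $\alpha(G[\NN(u)])\ge k$. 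Every other vertex of $G^\star$ is good: any $w\in(R\setminus U)\cup\{w_0\}$ is non-adjacent to $J$, so $\{w\}\cup J$ is an independent set of size $k$ containing $w$, and this same set witnesses that each vertex of $J$ lies in an independent set of size $k$. Together with the equivalence of the first paragraph, this shows $G^\star$ is a yes-instance iff $G$ is; note that $G^\star$ need not lie in $\C$, which is permitted by the definition of a kernel.

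The delicate points will be the counting bound $|U|=O(k^{1/c^2})$ — which rests on the simple but crucial observation that a maximum independent set of $G[U]$ must fit inside the small non-neighbourhood of any single one of its vertices — and, above all, the correctness of the kernel. One cannot merely output $G[R]$: a vertex with a large non-neighbourhood in $G$ that happens to land in $R$ could fail to lie in an independent set of size $k$ once the rest of $G$ is deleted. The gadget $J$, attached to $U$ only, is designed precisely to repair this, forcing all such ``extra'' vertices to be good while leaving the non-neighbourhoods of the vertices of $U$ — the ones that actually decide the instance — completely unchanged.
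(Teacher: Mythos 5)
Your overall strategy is sound and genuinely different from the paper's: the paper greedily peels off independent sets $S_0,S_1,\dots,S_q$ of size at least $k$ using the polynomial-time algorithm guaranteed by friendliness, and then repeatedly applies a marking rule that keeps only $k+(k-1)\left(\frac{k}{t}\right)^{1/c}$ vertices of $S_0$; the reduced instance is an induced subgraph of $G$ (hence stays in $\C$) of the desired size. Your bound on $U$ --- a maximum independent set $I$ of $G[U]$ satisfies $I\subseteq\NN(u)$ for any $u\in I$, hence $t|U|^c\le\alpha(G[U])<B$ --- is correct and reaches the same exponent more directly, without even invoking the algorithm $\A$ in the analysis.

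There is, however, a concrete error in the degenerate case $R\setminus U=\varnothing$. You add $w_0$ as an \emph{isolated} vertex, so $w_0$ is a non-neighbour of every $u\in U$ in $G^\star$, and the claimed identity $V(G^\star)\setminus N_{G^\star}(u)=\NN(u)$ fails: the non-neighbourhood is $\NN(u)\cup\{w_0\}$, whose independence number is $\alpha(G[\NN(u)])+1$. Hence $u$ lies in a size-$k$ independent set of $G^\star$ iff $\alpha(G[\NN(u)])\ge k-1$ rather than $\ge k$, and an instance with $\alpha(G[\NN(u)])=k-1$ for some $u\in U$ (a no-instance) is mapped to a yes-instance. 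The fix is immediate: make $w_0$ adjacent to every vertex of $U$ and to nothing else, so that $\{w_0\}\cup J$ still certifies $w_0$ and all of $J$ while $w_0$ is excluded from the non-neighbourhood of every $u\in U$ (one can then even add $w_0$ unconditionally and drop the case distinction). A secondary caveat worth stating explicitly: your $G^\star$ need not belong to $\C$, whereas the paper's reduced instance does; this is acceptable only if one does not require the kernel of the problem restricted to $\C$ to output an instance of that same restricted problem, a point the paper's definition leaves implicit.
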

\begin{proof}
Let $G \in \C$, and $k \in \mathbb{N}$. Let $\A$ be the algorithm which, for every graph of $\C$ on $n$ vertices, returns an independent set of size $t \cdot n^c$.
We assume $|V(G)| \geqslant \left(\frac{k}{t}\right)^{\frac{1}{c}}$, since otherwise we are done. We invoke $\A$ on $G$ in order to get an independent set $S_0$, which is thus of size at least $k$. We remove $S_0$ and repeat the process on the remaining vertices until $\A$ outputs a small independent set. More precisely, let $R_1 = V(G) \setminus S_0$, and start with $i=1$. We run $\A$ on $G[R_i]$ which outputs an independent set $S_i$. If $|S_i| < k$, we stop the process, and otherwise we continue with $R_{i+1} = R_i \setminus S_i$, and increment $i$.
Eventually, we end up with a partition of $V(G)$ into $S_0$, $S_1$, $\cdots$, $S_q$ and $R_{q+1}$ (we might have $q=0$). Every $S_i$ is an independent set of size at least $k$ in $G$. As $G[R_{q+1}] \in \C$, algorithm $\A$ produces an independent set of size $t\card{R_{q+1}}^c < k$ on it, so $|R_{q+1}| < \left(\frac{k}{t}\right)^{\frac{1}{c}}$.

We now describe a reduction rule which consists of a marking procedure of some vertices of $S_0$, and removing those which were not marked. We then show that if the reduction rule does not remove any vertex, then it means that the graph has the desired number of vertices.

\textit{Marking procedure.} 
%First mark $k$ vertices of $S_0$ chosen arbitrarily. Then, for every $x \in R_{q+1}$, let $s_x$ be the number of non-neighbors of $x$ in $S_0$. We mark $\min \{k-1, s_x\}$ vertices of $S_0$ chosen arbitrarily. 
First, mark $k$ vertices of $S_0$ chosen arbitrarily. For every $x \in R_{q+1}$, let $s_x$ be the number of non-neighbors of $x$ in $S_0$. Second, for each $x \in R_{q+1}$, mark $\min \{k-1, s_x\}$ vertices of $S_0$ chosen arbitrarily.
As announced, we remove all vertices of $S_0$ which were not marked by the previous procedure.

\textit{Safeness.} Let $G'$ be the graph obtained after the reduction rule, and $S_0'$ the vertices of $V(G') \cap S_0$. 
Suppose every vertex of $G'$ belongs to an independent set of size $k$. We only need to show that every removed vertex is in an independent set of size $k$ in $G$. This is indeed the case, as we only removed vertices from $S_0$ and we kept at least $k$ vertices from $S_0$.
Conversely, suppose that every vertex of $G$ is in an independent set of size $k$. Then:
\begin{itemize}
	\item since $|S_0'| \geqslant k$ (as we marked $k$ vertices from $S_0$), every vertex of $S_0'$ belongs to an independent set of size $k$.
	\item For every $i \in [q]$, since each $S_i$ is of size at least $k$, every vertex of each $S_i$ is in an independent set of size at least $k$
	\item for every $x \in R_{q+1}$, let $S$ be an independent set of size $k$ in $G$ containing $x$. Since $|S \cap S_0| \leqslant k-1$, we necessarily marked at least $|S \cap S_0|$ non-neighbors of $x$ in $S_0$, hence we can always replace the removed vertices of $S$ by other vertices of $S_0'$ so that $x$ belongs to an independent set of size $k$ in $G'$.
\end{itemize}

\paragraph{Size of the reduced instance.} We apply the reduction rule as long as we can. Since we remove at least one vertex if the reduction rule applies, the algorithm must end after $O(|V(G)|)$ applications of the rule. 
Then, if the reduction rule cannot apply, it means we mark all vertices of $S_0$. But since we mark at most $k + (k-1) \cdot \left(\frac{k}{t}\right)^{\frac{1}{c}}$ vertices of $S_0$, and $S_0$ is of size at least $t \cdot |V(G)|^c$, it means that $G$ has $O(k^{\frac{1}{c^2}+\frac{1}{c}})$ vertices.
This concludes the proof of the theorem. 
\end{proof}

We now apply the previous theorem to planar graphs and $K_r$-free graphs. By the Four Color Theorem, every planar graph on $n$ vertices contains an independent set of size $n/4$ which can be found in polynomial-time. Hence, planar graphs is an \textsc{MIS}-$(1, 1/4)$-friendly class. More generally, $d$-degenerate graphs are MIS-$(1,\frac{1}{d+1})$-friendly. By Ramsey's theorem, for every $r \geqslant 3$, every $K_r$-free graphs on $n$ vertices contains an independent set of size $n^{\frac{1}{r-1}}$ which can be found in polynomial-time. Hence, $K_r$-free graphs is an \textsc{MIS}-$\left(\frac{1}{r-1}, 1\right)$-friendly class.
\begin{corollary}\label{cor:planar}
\textsc{param-$1$-Extendability} admits a kernel with $O(k^2)$ vertices on planar graphs and $d$-degenerate graphs for bounded $d$, and a kernel with $O(k^{r^2})$ vertices on $K_r$-free graphs for every fixed $r \geqslant 3$.
\label{co:kernel}
\end{corollary}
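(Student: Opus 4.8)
The plan is to instantiate Theorem~\ref{thm:polyker} for each of the three graph classes, so that the only work left is to verify that each of them is hereditary and \textsc{MIS}-$(c,t)$-friendly for suitable constants $c,t$, and then to simplify the resulting bound $O(k^{\frac{1}{c}+\frac{1}{c^2}})$.

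First, planar graphs. This class is hereditary. By the (algorithmic) Four Colour Theorem, a planar graph on $n$ vertices can be properly $4$-coloured in polynomial time, and the largest colour class is an independent set of size at least $n/4$; hence planar graphs form an \textsc{MIS}-$(1,1/4)$-friendly class. Plugging $c=1$ into Theorem~\ref{thm:polyker} yields a kernel with $O(k^{1+1})=O(k^2)$ vertices. The same reasoning applies to $d$-degenerate graphs for fixed $d$: $d$-degeneracy is preserved under taking induced subgraphs, a degeneracy ordering is computable greedily in polynomial time, and colouring the vertices greedily along the reverse of this ordering gives a proper $(d+1)$-colouring, hence an independent set of size at least $n/(d+1)$. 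So $d$-degenerate graphs are \textsc{MIS}-$(1,1/(d+1))$-friendly, and again $c=1$ gives an $O(k^2)$-vertex kernel.

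Second, $K_r$-free graphs for fixed $r\geq 3$. This class is hereditary, and the constructive form of Ramsey's theorem guarantees that a $K_r$-free graph on $n$ vertices has an independent set of size $\Omega(n^{\frac{1}{r-1}})$ which, by the usual recursion (repeatedly pick a vertex $v$ and recurse either on its non-neighbourhood, which is still $K_r$-free, or on its neighbourhood, which is $K_{r-1}$-free, stopping when $r$ drops to $2$), can be extracted in polynomial time. Thus $K_r$-free graphs form an \textsc{MIS}-$\left(\frac{1}{r-1},1\right)$-friendly class, after absorbing the hidden constant into $t$. Here $c=\frac{1}{r-1}$, so $\frac{1}{c}+\frac{1}{c^2}=(r-1)+(r-1)^2=r(r-1)\leq r^2$, and Theorem~\ref{thm:polyker} delivers a kernel with $O(k^{r^2})$ vertices.

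The only point that requires genuine care is the \emph{constructive} nature of the Ramsey bound: Theorem~\ref{thm:polyker} needs the large independent set to be \emph{computable} in polynomial time, not merely to exist. I would therefore spell out the recursive greedy procedure, bound its running time, and track how the multiplicative constant $t$ depends on $r$ (it may degrade with $r$ but is a fixed constant for each fixed $r$, hence disappears into the $O(\cdot)$). Everything else — hereditariness of the three classes, the $4$-colouring and degeneracy-colouring arguments, and the arithmetic $\frac{1}{c}+\frac{1}{c^2}$ — is routine.
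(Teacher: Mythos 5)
Your proposal is correct and follows essentially the same route as the paper: instantiate Theorem~\ref{thm:polyker} with planar graphs as \textsc{MIS}-$(1,1/4)$-friendly via the Four Colour Theorem, $d$-degenerate graphs as \textsc{MIS}-$(1,\frac{1}{d+1})$-friendly via greedy colouring, and $K_r$-free graphs as \textsc{MIS}-$(\frac{1}{r-1},1)$-friendly via constructive Ramsey, then compute $\frac{1}{c}+\frac{1}{c^2}$. Your extra care about the polynomial-time constructibility of the Ramsey-type independent set is a reasonable elaboration of a point the paper states without proof, but it does not change the argument.
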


\section{Conclusion and further research} \label{sec:conclusion}

We investigated the computational complexity of \textsc{$1$-Extendability}. We showed that in general graphs it cannot be solved in subexponential-time unless the ETH fails, and that it remains NP-hard in subcubic planar graphs and in unit disk graphs. Although this behavior seems to be the same as \textsc{Maximum Independent Set}, we proved that \textsc{Maximum Independent Set} remains NP-hard (and even W[1]-hard) in $1$-extendable graphs.
It seems challenging to find a larger class of graphs where \textsc{$1$-Extendability} is polynomial-time solvable (but not trivial) while \textsc{Maximum Independent Set} remains NP-hard.

Another interesting subject would be to characterize $1$-extendable graphs of graph classes where \textsc{Maximum Independent Set} is polynomial-time solvable: \textit{e.g.} chordal graphs, cographs, claw-free graphs. Such outcomes would extend the result of Dean and Zito~\cite{DeZi94} which state that bipartite graphs are 1-extendable iff they admit a perfect matching. 

We also studied \textsc{param-$1$-Extendability}, a parameterized version of \textsc{$1$-Extendability} and showed that some results for \textsc{Maximum Independent Set} could also be obtained for \textsc{param-$1$-Extendability} (although not being as direct). 
It would be interesting to determine whether this is also the case for other results about \textsc{Maximum Independent Set}~\cite{BoBoChThWa20,BoBoThWa19,DaLoMuRa12}, for instance: is \textsc{param-$1$-Extendability} W[1]-hard in $C_4$-free graphs and in $K_{1,4}$-free graphs? Does it admit a polynomial kernel in diamond-free graphs?

Finally, because of its applications in network design, finding an efficient algorithm which works well in practice is of high importance. Toward this, a first step would be to determine in which cases a vertex addition (or deletion) preserves the property of being $1$-extendable. We note that such results have already been obtained for the related property of being well-covered~\cite{FiWh18}.

 \bibliographystyle{splncs04}
% \bibliography{mybibliography}
%
%\bibliographystyle{plain}
\bibliography{bib_extendability}

\end{document}